\newtheorem{theorem}{Theorem}
\newtheorem{lemma}[theorem]{Lemma}
\newtheorem{corollary}[theorem]{Corollary}
\newtheorem{definition}[theorem]{Definition}
\newtheorem{myclaim}[theorem]{Claim}
\newcommand{\Oh}{\mathcal{O}}
\newcommand{\maybeqed}{}
\def\cqedsymbol{\ifmmode$\lrcorner$\else{\unskip\nobreak\hfil
\penalty50\hskip1em\null\nobreak\hfil$\lrcorner$
\parfillskip=0pt\finalhyphendemerits=0\endgraf}\fi} 
\newcommand{\cqed}{\renewcommand{\qed}{\cqedsymbol}\maybeqed}
\newcommand{\shitarg}[1]{\textsc{$#1$-Subgraph Hitting}\xspace}
\newcommand{\cshitarg}[1]{\textsc{Colorful $#1$-Subgraph Hitting}\xspace}
\newcommand{\shit}{\shitarg{H}}
\newcommand{\cshit}{\cshitarg{H}}
\newcommand{\col}{\sigma}
\newcommand{\msep}{\mu}
\newcommand{\mnei}{\mu^\star}
\newcommand{\bd}{\ensuremath{\partial}}
\newcommand{\bdfun}{\lambda}
\newcommand{\labset}{\mathbb{L}}
\newcommand{\state}{\mathbf{s}}
\newcommand{\statefam}{\mathbb{S}}
\newcommand{\chunk}{\mathbf{c}}
\newcommand{\chunkfam}{\mathbb{C}}
\DeclareMathOperator{\inte}{int}
\newcommand{\labfun}{\Lambda}
\newcommand{\slice}{\mathbf{p}}
\newcommand{\slicefam}{\mathbb{P}}
\newcommand{\operadd}{\mathtt{enhance}}
\newcommand{\wG}{{\widehat{G}}}
\newcommand{\wX}{{\widehat{X}}}
\newcommand{\Tbag}{\beta}
\newcommand{\Tall}{\gamma}
\newcommand{\Tdown}{\alpha}
\newcommand{\Tree}{\ensuremath{\mathtt{T}}}
\begin{document}

\title{Hitting forbidden subgraphs in graphs of bounded treewidth\thanks{%
A preliminary version of this work has been presented at MFCS 2014.
The research leading to these results has received funding from the European Research Council under the European Union's Seventh Framework Programme (FP/2007-2013) / ERC Grant Agreement n. 267959 (the fourth author, while he was at University of Bergen, Norway), 
and n. 280152 (the second author), as well as OTKA grant NK10564 (the second author) and Polish National Science Centre grant DEC-2012/05/D/ST6/03214 (the first and the third authors, while they were at University of Warsaw, Poland).}}

\author{
  Marek Cygan\thanks{Institute of Informatics, University of Warsaw, Poland, \texttt{cygan@mimuw.edu.pl}.} \and
  D\'{a}niel Marx\thanks{Institute for Computer Science and Control, Hungarian Academy of Sciences (MTA SZTAKI), Hungary, \texttt{dmarx@cs.bme.hu}.} \and
  Marcin Pilipczuk\thanks{Department of Computer Science, University of Warwick, United Kingdom, \texttt{malcin@mimuw.edu.pl}.} \and
  Micha\l{} Pilipczuk\thanks{Institute of Informatics, University of Warsaw, Poland, \texttt{michal.pilipczuk@mimuw.edu.pl}.}}

  \date{}
\maketitle

\begin{abstract}
We study the complexity of a generic hitting problem \shit{}, where
given a fixed pattern graph $H$ and an input graph $G$, the task is to
find a set $X \subseteq V(G)$ of minimum size that hits all subgraphs
of $G$ isomorphic to $H$.  In the colorful variant of the problem,
each vertex of $G$ is precolored with some color from $V(H)$ and we
require to hit only $H$-subgraphs with matching colors.  Standard
techniques shows that for every fixed $H$, the problem is
fixed-parameter tractable parameterized by the treewidth of $G$;
however, it is not clear how exactly the running time should depend on
treewidth. For the colorful variant, we demonstrate matching upper and
lower bounds showing that the dependence of the running time on
treewidth of $G$ is tightly governed by $\msep(H)$, the maximum size
of a minimal vertex separator in $H$.  That is, we show for every
fixed $H$ that, on a graph of treewidth $t$, the colorful problem can
be solved in time $2^{\Oh(t^{\msep(H)})}\cdot|V(G)|$, but cannot be
solved in time $2^{o(t^{\msep(H)})}\cdot |V(G)|^{O(1)}$, assuming the
Exponential Time Hypothesis (ETH).  Furthermore, we give some
preliminary results showing that, in the absence of colors, the
parameterized complexity landscape of \shit{} is much richer.


\end{abstract}

\section{Introduction}\label{sec:intro}


The ``optimality programme'' is a thriving trend within parameterized
complexity, which focuses on pursuing tight bounds on the time
complexity of parameterized problems. Instead of just determining
whether the problem is fixed-parameter tractable, that is, whether the
problem with a certain parameter $k$ can be solved in time $f(k)\cdot
n^{\Oh(1)}$ for some computable function $f(k)$, the goal is to
determine the best possible dependence $f(k)$ on the parameter $k$.
For several problems, matching upper and lower bounds have been
obtained on the function $f(k)$. The lower bounds are under the
complexity assumption Exponential Time Hypothesis (ETH), which roughly
states than $n$-variable 3SAT cannot be solved in time $2^{o(n)}$;
see, e.g., the survey of Lokshtanov et al.~\cite{lms:survey}.

One area where this line of research was particularly successful is the
study of fixed-parameter algorithms parameterized by the treewidth of
the input graph and understanding how the running time has to depend
on the treewidth.  Classic results on model checking monadic
second-order logic on graphs of bounded treewidth, such as Courcelle's
Theorem, provide a unified and generic way of proving fixed-parameter
tractability of most of the tractable cases of this
parameterization~\cite{ArnborgLS91,courcelle}. While these results
show that certain problems are solvable in time $f(t)\cdot n$ on graphs of
treewidth $t$ for some function $f$, the exact function $f(t)$ resulting from this approach is usually hard to determine and far from
optimal. To get reasonable upper bounds on $f(t)$, one typically
resorts to constructing a dynamic programming algorithm, which often is straightforward, but tedious.

The question whether the straightforward dynamic programming
algorithms for bounded treewidth graphs are optimal received
particular attention in 2011.  On the hardness side, Lokshtanov, Marx
and Saurabh proved that many natural algorithms are probably
optimal~\cite{lms:known,lms:slightly}. In particular, they showed that
there are problems for which the $2^{\Oh(t\log t)} n$ time algorithms
are best possible, assuming ETH.
On the algorithmic
side, Cygan et al.~\cite{cut-and-count} presented a new technique,
called {\em{Cut\&Count}}, that improved the running time
of the previously known (natural) algorithms for many connectivity
problems. For example, previously only $2^{\Oh(t\log t)}\cdot n^{\Oh(1)}$
algorithms were known for \textsc{Hamiltonian Cycle} and
\textsc{Feedback Vertex Set}, which was improved to $2^{\Oh(t)}\cdot
n^{\Oh(1)}$ by Cut\&Count.  These results indicated that not only
proving tight bounds for algorithms on tree decompositions is within
our reach, but such a research may lead to surprising algorithmic
developments.  Further work includes derandomization
of Cut\&Count in~\cite{cut-and-count-derand1,FominLS14}, an attempt to
provide a meta-theorem to describe problems solvable in
single-exponential time~\cite{cut-and-count-logic}, and a new
algorithm for \textsc{Planarization}~\cite{planarization}.

We continue here this line of research by investigating a family of
subgraph-hitting problems parameterized by treewidth and find
surprisingly tight bounds for a number of problems. An interesting
conceptual message of our results is that, for every integer
$c\ge 1$, there are fairly natural problems where the best possible
dependence on treewidth is of the form $2^{\Oh(t^c)}$.


\paragraph{Studied problems and motivation}
In our paper we focus on the following generic \shit{} problem: for a pattern graph $H$
and an input graph $G$, what is the minimum size of a set $X \subseteq V(G)$ that hits
all subgraphs of $G$ that are isomorphic to $H$?
(Henceforth we call them \emph{$H$-subgraphs} for brevity.)
This problem generalizes a few other problems studied in the literature,
for example \textsc{Vertex Cover} (for $H = P_2$)~\cite{lms:known},
or finding the largest induced subgraph
of maximum degree at most $\Delta$ (for $H = K_{1,\Delta+1}$)~\cite{max-deg-vd}. We also study the following \emph{colorful} variant \cshit{}, where the input graph $G$
is additionally equipped with a coloring $\col : V(G) \to V(H)$, and we are only interested
in hitting $H$-subgraphs where every vertex matches its color.

A direct source of motivation for our study is the work of Pilipczuk~\cite{cut-and-count-logic}, which attempted to describe graph problems admitting fixed-parameter algorithms with running time of the form $2^{\Oh(t)}\cdot |V(G)|^{\Oh(1)}$, where $t$ is the treewidth of $G$.
The proposed description is a logical formalism where one can quantify existence of some vertex/edge sets,
 whose properties can be verified ``locally'' by requesting satisfaction of a formula of modal logic in every vertex.
 In particular, Pilipczuk argued that the language for expressing local properties needs to be somehow modal,
 as it cannot be able to discover cycles in a constant-radius neighborhood of a vertex.
This claim was supported by a lower bound: unless ETH fails, for any constant $\ell\ge 5$, the problem of finding the minimum size of a set that hits all the cycles $C_\ell$ in a graph of treewidth $t$ cannot be solved in time $2^{o(t^2)}\cdot |V(G)|^{\Oh(1)}$. Motivated by this result, we think that it is natural to investigate the complexity of hitting subgraphs for more general patterns $H$, instead of just cycles.

We may see the colorful variant as an intermediate step towards full
understanding of the complexity of \shit{}, but it is also an
interesting problem on its own.  It often turns out that the
colorful variants of problems are easier to investigate, while their
study reveals useful insights; a remarkable example is the
kernelization lower bound for \textsc{Set Cover} and related
problems~\cite{colors-and-ids}.  In our case, if we allow colors, a
major combinatorial difficulty vanishes: when the algorithm keeps
track of different parts of the pattern $H$ that appear in the graph
$G$, and combines a few parts into a larger one, the coloring $\col$
ensures that the parts are vertex-disjoint.  Hence, 
the colorful variant is easier to study, whereas at the same time it
reveals interesting insight into the standard variant.


\paragraph{Our results and techniques}
In the case of \cshit{}, we obtain a tight bounds for the complexity
of the treewidth parameterization.  First, note that, in the presence
of colors, one can actually solve \cshit{} for each connected
component of $H$ independently; hence, we may focus only on connected
patterns $H$.  Second, we observe that there are two special cases. If
$H$ is a path then \cshit{} reduces to a maximum flow/minimum cut
problem, and hence is polynomial-time solvable.  If $H$ is a clique,
then any $H$-subgraph of $G$ needs to be contained in a single bag of
any tree decomposition, and there is a simple $2^{\Oh(t)} |V(G)|$-time
algorithm, where $t$ is the treewidth of $G$.
Finally, for the remaining cases we show that the dependence on
treewidth is tightly connected to the value of $\msep(H)$, the maximum
size of a minimal vertex separator in $H$ (a separator $S$ is minimal
if there are two vertices $x,y$ such that $S$ is an $xy$-separator,
but no proper subset of $S$ is). We prove the following matching upper
and lower bounds.

\begin{theorem}\label{thm:intro:cshit:algo}
A \cshit{} instance $(G,\col)$
can be solved in time $2^{\Oh(t^{\msep(H)})} |V(G)|$
in the case when $H$ is connected and is not a clique,
   where $t$ is the treewidth of $G$.
\end{theorem}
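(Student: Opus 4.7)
The plan is a dynamic programming algorithm on a nice tree decomposition $\mathcal{T}$ of $G$ of width at most $t$, obtained in $2^{\Oh(t)}\cdot|V(G)|$ time by a standard treewidth algorithm. At each bag $B$ of $\mathcal{T}$, the DP table is indexed by carefully chosen configurations and stores the minimum number of vertices of the processed subgraph $V_B$ that must be placed in the hitting set $X$, subject to consistency with the configuration and the requirement that no colorful copy of $H$ is already entirely contained in $V_B\setminus X$.

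A configuration at $B$ is a pair $(U,\mathcal{R})$. The subset $U\subseteq B$ records the vertices of $B$ kept out of the hitting set. The family $\mathcal{R}$ records the ``realized partial patterns''; each such pattern is a triple $(A,S,\psi)$ in which $S\subseteq V(H)$ is a minimal vertex separator of $H$ (so $|S|\le \msep(H)$), $A$ is a union of connected components of $H-S$, and $\psi\colon S\to U$ is a color-respecting injection. The intended semantics is that $(A,S,\psi)\in\mathcal{R}$ if and only if the processed subgraph admits a color-respecting homomorphism of $H[A\cup S]$ into $G[V_B]$ sending $S$ through $\psi$ into the bag and $A$ strictly into $V_B\setminus B$, disjoint from $X$. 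The number of pairs $(A,S)$ is bounded by a constant depending only on $H$, and for each such pair the number of color-respecting injections $\psi\colon S\to U$ is at most $(t+1)^{\msep(H)}$; hence the number of partial patterns is $\Oh(t^{\msep(H)})$, and the number of families $\mathcal{R}$ is $2^{\Oh(t^{\msep(H)})}$. Together with the $2^{t+1}$ choices of $U$, the configuration count per bag is $2^{\Oh(t^{\msep(H)})}$.

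The DP transitions on leaf, introduce-vertex, introduce-edge, forget, and join bags follow the standard pattern: each updates $U$ and modifies $\mathcal{R}$ to reflect the newly introduced vertex or edge, the forgetting of a bag vertex (which may push the boundary of a pattern below the bag), or the gluing of two sub-instances along a common interface. At a join bag we obtain the new $\mathcal{R}$ by considering every matching pair $(S,\psi)$ and every disjoint gluing $(A\cup A',S,\psi)$ of child patterns; at introduce-edge bags, the freshly available edge may extend a previously realized pattern. Crucially, throughout we impose the hard constraint that $\mathcal{R}$ never contains a triple with $A\cup S=V(H)$, since such a triple would testify to a full colorful $H$-copy contained in $V_B\setminus X$. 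Each transition is carried out in time polynomial in the configuration count, yielding the overall running time $2^{\Oh(t^{\msep(H)})}\cdot|V(G)|$.

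The main obstacle I anticipate is the correctness argument, namely showing that any colorful embedding $\phi\colon V(H)\to V(G)\setminus X$ missed by $X$ is detected by the hard constraint at some bag. The key structural claim to establish is that, as the nice decomposition is traversed, one can always identify a bag $B^\star$ and a minimal $(x,y)$-separator $S^\star$ of $H$ (so $|S^\star|\le \msep(H)$) such that $\phi^{-1}(V_{B^\star}\setminus B^\star)$ and $\phi^{-1}(V(G)\setminus V_{B^\star})$ correspond to complementary unions of components of $H-S^\star$, with any non-separator preimage vertices in $\phi^{-1}(B^\star)\setminus S^\star$ consistently assigned to one of the two sides. The hypothesis that $H$ is connected and not a clique is what guarantees the availability of such a separator whenever $\phi$ genuinely straddles the decomposition, since otherwise $\phi$ would be confined to a single bag and fall under the clique regime. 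Once this claim is in hand, at $B^\star$ the DP's combining step assembles a pattern $(A,S^\star,\psi)$ with $A\cup S^\star=V(H)$, firing the hard constraint and correctly ruling out the candidate hitting set $X$.
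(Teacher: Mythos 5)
Your overall architecture (DP over a nice tree decomposition, states of size $2^{\Oh(t^{\msep(H)})}$ built from separator-bounded partial patterns) matches the paper's, and your counting of configurations is right. But there is a genuine gap exactly at the point you flag as "the main obstacle," and the structural claim you propose to close it is false. You define $\mathcal{R}$ with the semantics "$(A,S,\psi)$ is recorded iff $H[A\cup S]$ is already realized with $A$ strictly below the bag," where $S$ is a \emph{minimal} separator of $H$ and $A$ is a union of components of $H-S$. However, for a copy of $H$ straddling the decomposition, the interface of the forgotten part is $N_H(A)$ where $A=\phi^{-1}(\Tdown(w))$, and this neighborhood need not be (contained in) any single minimal separator: it can have size up to $\mnei(H)$, which can be much larger than $\msep(H)$. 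Concretely, take $H$ a subdivided star with center $c$ of degree $d$ (so $\msep(H)=1$ but $\mnei(H)=d$) and an embedding in which $c$ lies below the bag, all $d$ subdivision vertices lie in the bag, and the leaves lie outside the processed part. For any minimal separator $S^\star$ (a single subdivision vertex), $H-S^\star$ has a component containing $c$, the other subdivision vertices, and the other leaves — vertices that sit on \emph{both} sides of the bag. So no bag $B^\star$ and minimal separator $S^\star$ make the two sides "complementary unions of components of $H-S^\star$," and your hard constraint is never triggered by tracking only realized minimal-separator patterns. A state that faithfully records realized partial patterns must remember chunks with boundary up to $\mnei(H)$, which only yields $2^{\Oh(t^{\mnei(H)})}$ (this is essentially the paper's Theorem~\ref{thm:std:algo} regime, and the paper says explicitly that the straightforward colorful DP gives exponent $\mnei(H)$, not $\msep(H)$).

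The paper's way around this is a genuinely different, non-standard state semantics: the family $\chunkfam$ in a state is not the set of separator chunks realized in the processed part, but a \emph{prediction} of the separator chunks present in the entire $G\setminus X$. These predicted chunks are glued onto the bag to form an auxiliary graph $G(w,\state)$, feasibility is a fixed-point condition on this glued graph, and correctness is proved by an iterated remapping argument (Claims~\ref{cl:H-sep}--\ref{cl:introduce-H}): any $H$-copy crossing the bag contains \emph{some} realized separator chunk whose interior can be rerouted into a glued predicted copy, and repeating this pushes the whole copy into the small glued graph where it is detected. Without this prediction-and-remapping mechanism (or an equivalent idea), your transitions at forget and join nodes cannot maintain the invariant you state, so the proposal as written does not establish the theorem.
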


\begin{theorem}\label{thm:intro:lb:col}
Let $H$ be a graph that contains a connected component that is neither a path nor a clique.
Then, unless ETH fails, there does
not exist an algorithm that, given a \cshit{} instance $(G,\col)$
and a tree decomposition of $G$ of width $t$, resolves $(G,\col)$
in time $2^{o(t^{\msep(H)})} |V(G)|^{\Oh(1)}$.
\end{theorem}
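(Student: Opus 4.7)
The plan is to reduce from \threesat{}, for which ETH forbids any $2^{o(n+m)}$-time algorithm on instances with $n$ variables and $m$ clauses. Given a formula $\varphi$, the target is a \cshit{} instance $(G,\col)$ with $|V(G)| = \mathrm{poly}(n,m)$ and a tree decomposition of $G$ of width $t$ satisfying $t^{\msep(H)} = \Theta(n+m)$, such that the minimum size of a colourful hitting set of $G$ encodes the satisfiability of $\varphi$. Any $2^{o(t^{\msep(H)})}|V(G)|^{\Oh(1)}$-time algorithm for \cshit{} would then solve \threesat{} in time $2^{o(n+m)}\mathrm{poly}(n,m)$, contradicting ETH.

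The structural starting point is to exploit a component $H^\star$ of $H$ that is neither a path nor a clique (guaranteed by hypothesis), so that $\msep(H^\star) = \msep(H) =: s$. I would fix a maximum-size minimal $(x,y)$-separator $S = \{v_1,\ldots,v_s\}$ in $H^\star$, and let $A$, $B$ denote the vertex sets of the components of $H^\star - S$ containing $x$ and $y$. Minimality of $S$ forces every $v_i$ to have a neighbour in both $A$ and $B$, so $H^\star$ functions as an ``$s$-port consistency device'': any copy of $H^\star$ in $G$ must pair up a correctly-placed $A$-image with a correctly-placed $B$-image through a fixed $s$-tuple of interface vertices playing the role of $S$.

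Using this device, I would build $G$ as a linear arrangement whose tree decomposition is essentially a path of bags of size $\Oh(t)$. Each bag carries $\Oh(t)$ ``port'' vertices, and between consecutive bags I would plant $\Theta(t^s)$ copies of $H^\star$, indexed by pairs of $s$-subsets of the ports on the two sides: the $A$-image of each copy anchors in the left bag and the $B$-image in the right, while the $S$-image lives on the interface. Each planted copy encodes one Boolean variable of $\varphi$ (a copy is ``active'', and hence must be hit, exactly when a designated $s$-tuple of ports is entirely unhit), and clauses of $\varphi$ are enforced by additional local attachments to the ports. The colouring $\col$ labels every port and every planted vertex with the canonical $V(H)$-label of its intended role, so that the only colourful $H$-subgraphs in $G$ are the planted ones; components of $H$ other than $H^\star$ are neutralised by disjoint padding copies whose hit/unhit status is forced by a simple colour trick.

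The chief obstacle is juggling three competing demands: (a)~storing $\Theta(t^s)$ independent Boolean bits per interface; (b)~keeping the treewidth of $G$ at $\Oh(t)$ despite the $\Theta(t^s)$ planted copies per interface; and (c)~ruling out unintended colourful $H$-subgraphs in $G$. Demand~(a) is precisely what the $s$-separator buys us, since $\Oh(t)$ ports give $\binom{t}{s} = \Theta(t^s)$ distinct $s$-subsets, each carrying an independent gadget. Demand~(b) is delicate but manageable: each individual planted copy has constant size and attaches to only $s$ specific ports, so locally the gadgets do not inflate bag widths even though there are $\Theta(t^s)$ of them. Demand~(c) is the most intricate part of the argument, and is where the \cshit{} setting really pays off: by carefully distributing $V(H)$-colours across the planted copies, one forbids any alternative reassembly of $A$- and $B$-images into a colourful $H$-subgraph, exploiting that in the colourful world each vertex of the pattern has an exclusive colour in $G$ that tightly restricts how pieces can align. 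Once these three constraints are in hand, the equivalence between a colourful hitting set of a prescribed target size and a satisfying assignment of $\varphi$ reduces to routine bookkeeping, yielding the claimed ETH lower bound $2^{o(t^{\msep(H)})}$.
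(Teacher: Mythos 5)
Your high-level addressing idea is exactly the one the paper uses: fix a maximum minimal $ab$-separator $S$ with $|S|=\msep(H)=:s$, split $H$ across $S$ into an $A$-side and a $B$-side, and use $s$-tuples drawn from a central set of $\Oh(n^{1/s})$ ``port'' vertices as $\Theta(n)$ distinct addresses at which $A$-copies and $B$-copies can meet to form a colourful $H$-subgraph; the width is then $\Oh(n^{1/s})$ and $t^{\msep(H)}=\Theta(n)$. That is the crux of the lower bound and you have it. (Your ``no unintended reassembly'' worry is also resolved essentially as you suspect, though it is not a colour-distribution trick: it follows from $N_H(A)=N_H(B)=S$, connectedness of $H[A]$ and $H[B]$, and injectivity of the map from clause--literal occurrences to addresses, together with extra copies provided for the remaining components of $H[D]\setminus S$ and the components of $H$ other than $D$.)

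However, the mechanism you propose for encoding the assignment has a genuine flaw, and the part you dismiss as routine bookkeeping is where most of the work lies. You want a planted copy to be ``active, and hence must be hit, exactly when a designated $s$-tuple of ports is entirely unhit,'' i.e.\ the solution records bits by choosing which ports to hit. But each port lies in $\Theta(t^{s-1})$ of the $s$-tuples, so hitting one port deactivates a huge batch of copies simultaneously; the bits are not independent, and a cheap solution could switch off many variables at once by spending its budget on a few ports. In the paper's construction the central set $M$ is \emph{never} touched by any budget-respecting solution: the truth assignment is read off from which $a$-vertices (the anchors of the $A$-side copies $D_{x,C,l}$) are hit, and the tightness that forces a consistent, assignment-like choice is achieved by OR-gadgets and cycle gadgets that are themselves built from copies of $H[D]$ glued together at two distinct separator vertices $c,d\in S$. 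Constructing clause and variable gadgets \emph{out of the pattern $H$ itself} (you cannot attach arbitrary auxiliary structure in \cshit{}) is a substantive step your sketch omits; it is also why the paper needs $\msep(H)\geq 2$ for this construction and handles $\msep(H)=1$ by a separate reduction from \textsc{Vertex Cover}, a case your proposal does not address at all.
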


In every theorem of this paper, we treat $H$ as a fixed graph of constant size, and hence the factors hidden in the $\Oh$-notation may depend on the size of $H$.

In the absence of colors, we give preliminary results showing that the parameterized
complexity of the treewidth parameterization of \shit{} is more involved than
the one of the colorful counterpart.
In this setting, we are able to relate the dependence on treewidth only to a larger parameter of the graph $H$. 
Let $\mnei(H)$ be the maximum size of $N_H(A)$, where $A$ iterates
over connected subsets of $V(H)$ such that $N_H(N_H[A]) \neq \emptyset$, i.e.,
$N_H[A]$ is not a whole connected component of $H$.
Observe that $\msep(H) \leq \mnei(H)$ for any $H$.
First, we were able to construct a counterpart of Theorem~\ref{thm:intro:cshit:algo}
only with the exponent $\mnei(H)$.

\begin{theorem}\label{thm:std:algo}
Assume $H$ contains a connected component that is not a clique.
Then, given a graph $G$ of treewidth $t$,
one can solve \shit{} on $G$ in time $2^{\Oh(t^{\mnei(H)} \log t)} |V(G)|$. 
\end{theorem}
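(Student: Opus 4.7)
The plan is to run bottom-up dynamic programming on a nice tree decomposition $(\Tree, \{\Tbag(x)\}_x)$ of $G$ of width $t$. For each node $x$, I store a table indexed by states of the form $(X_B, \sigma)$, where $X_B \subseteq \Tbag(x)$ is the intended intersection of the hitting set $X$ with the bag, and $\sigma$ is a signature describing those partial $H$-subgraphs in $G[V_x \setminus X]$ that still need to interact with the rest of the graph through $\Tbag(x)$. The value stored for a state is the minimum $|X \cap V_x|$ compatible with it.

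The key structural observation is that any embedding $\phi \colon V(H) \to V(G)$ witnessing an $H$-subgraph that meets $V_x \setminus \Tbag(x)$ satisfies $N_H(P) \setminus P \subseteq \phi^{-1}(\Tbag(x))$, where $P := \phi^{-1}(V_x \setminus \Tbag(x))$, because $\Tbag(x)$ separates $V_x \setminus \Tbag(x)$ from $V(G) \setminus V_x$. Decomposing $P$ into its connected components in $H$ and analysing each component $P'$ individually, either $N_H[P']$ covers a whole connected component of $H$ (in which case the corresponding partial subgraph is essentially complete), or $|N_H(P') \setminus P'| \leq \mnei(H)$ straight from the definition of $\mnei(H)$. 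I accordingly define a \emph{piece} as a pair $(P', \psi)$ with $P' \subseteq V(H)$ connected and nonempty, $N_H[P']$ not equal to a whole connected component of $H$, and $\psi \colon N_H(P') \setminus P' \to \Tbag(x)$ an injection; there are $\Oh(t^{\mnei(H)})$ piece types. The signature $\sigma$ records, for each piece type, whether there is a realisation of it inside $V_x$ whose below-part is disjoint from $X$. With a small per-piece handle (for instance a designated bag vertex attached to each piece, used to keep the join transitions clean in the absence of colours), the number of possible signatures is $t^{\Oh(t^{\mnei(H)})} = 2^{\Oh(t^{\mnei(H)} \log t)}$.

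Transitions at leaf, introduce-vertex, and introduce-edge nodes are routine. At a forget-vertex node for a vertex $v$, if $v$ was a frontier image of a piece $(P', \psi)$, then its role $w := \psi^{-1}(v)$ is absorbed into the below-part to form a new piece $(P' \cup \{w\}, \psi')$ with an appropriately updated frontier; if $N_H[P' \cup \{w\}]$ now covers a full connected component of $H$, the partial $H$-subgraph has completed within $V_x$, and we verify that it is hit by $X$, rejecting the state otherwise. The main obstacle is the join node: the two children's signatures must be merged so that pieces of identical type on either side coalesce, while pieces with disjoint $H$-supports and compatible frontier assignments remain simultaneously present, in order to capture $H$-subgraphs whose below-part is split across the two subtrees. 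Unlike in the colourful setting of Theorem~\ref{thm:intro:cshit:algo}, where colours fix the role of each bag vertex, a bag vertex here may act as the frontier image for many roles across many pieces at once, and it is precisely this ambiguity that forces the extra $\log t$ factor in the exponent. Correctness is verified by induction on the decomposition---every $H$-subgraph in $G[V_x \setminus X]$ is either entirely witnessed by a single child's signature or split by the bag and reassembled by the join rule---and multiplying the state-space bound by the linear size of a nice decomposition yields the claimed running time $2^{\Oh(t^{\mnei(H)} \log t)} \cdot |V(G)|$.
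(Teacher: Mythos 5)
There is a genuine gap, and it is precisely the obstacle the paper highlights as the reason this approach fails. Your signature records, for each connected piece $(P',\psi)$, a single bit (or a designated bag vertex) saying whether it is realizable inside $V_x \setminus X$. But in the absence of colours, two pieces $(P_1,\psi_1)$ and $(P_2,\psi_2)$ that are disjoint in $H$ and each realizable below the bag may nevertheless \emph{not} be simultaneously realizable with disjoint below-bag images in $G$. For example, take $H=P_5$ with vertices $a_1,\dots,a_5$, a bag consisting of a single vertex $v$, and below the bag a single path $u_1u_2$ with $vu_1\in E(G)$. Both pieces $(\{a_1,a_2\},\ a_3\mapsto v)$ and $(\{a_4,a_5\},\ a_3\mapsto v)$ are realizable, so your signature would record both; but when $v$ is forgotten (unhit) there is no $H$-subgraph, since any embedding of the full $P_5$ with $a_3\mapsto v$ would need four distinct below-bag vertices. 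At that forget step your algorithm would incorrectly reject the state. The paper explicitly warns about this: ``We cannot limit ourselves to keeping track of forgotten connected parts of the graph $H$ independently of each other, since in the absence of colors these parts may not be vertex-disjoint.''

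Your proposed remedy, ``a designated bag vertex attached to each piece,'' does not fill this hole: it constrains only the frontier (bag-level) images, whereas the missing information is about which pieces admit \emph{pairwise disjoint below-bag} realisations. That information concerns potentially $\Theta(t^{\mnei(H)})$ pairs (or larger subsets) of pieces and is not captured by any per-piece attribute. The paper's actual fix is qualitatively different: instead of an abstract Boolean or vertex-valued signature per connected chunk, it stores a concrete \emph{witness graph} $\wG$ --- an actual subgraph of $G$ with $\Oh(t^{\mnei(H)})$ vertices and edges beyond the boundary (Lemma~\ref{lem:witness}) --- which is equivalent to $G[\Tall(w)\setminus(X\cup\wX)]$ in the sense of admitting exactly the same $t$-slices as subgraphs. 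Because slices may be disconnected, and because $\wG$ is a concrete graph, disjointness of overlapping partial realisations can be checked directly inside $\wG$, which is what your signature cannot do. It is the description length of this witness graph, $t^{\Oh(t^{\mnei(H)})}$, that gives the $\log t$ in the exponent, not the handle you introduce. Establishing Lemma~\ref{lem:witness} (the greedy $\operadd$ construction and the chunk-replacement argument showing the witness graph preserves all slice containments) is the technical core of the proof and is entirely absent from your proposal.
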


 We remark that  for \cshit{}, an algorithm with running time
 $2^{\Oh(t^{\mnei(H)})} |V(G)|$
 (as opposed to $\msep(H)$ in the exponent in Theorem~\ref{thm:intro:cshit:algo})
 is rather straightforward: in the state of
 dynamic programming one needs to remember, for every subset $X$ of the bag of size at most $\mnei(G)$, all forgotten connected parts of $H$ that are attached to $X$ and not hit
 by the constructed solution. To decrease the exponent to $\msep(H)$, we introduce a
 ``prediction-like'' definition of a state of the dynamic programming,
 leading to highly involved proof of correctness.
For the problem without colors, however, even an algorithm with the exponent $\mnei(H)$ (Theorem~\ref{thm:std:algo})
is far from trivial. We cannot limit ourselves to keeping track of forgotten
connected parts of the graph $H$ independently of each other, since in the absence of colors
these parts may not be vertex-disjoint and, hence, we would not be able to reason
about their union in latter bags of the tree decomposition.
To cope with this issue, we show that the set of forgotten
(not necessarily connected) parts of the graph $H$ that are subgraphs of $G$
can be represented as a \emph{witness graph} with $\Oh(t^{\mnei(H)})$ vertices and edges.
As there are only $2^{\Oh(t^{\mnei(H)} \log t)}$ possible graphs of this size,
the running time bound follows.

We also observe that the bound of $\Oh(t^{\mnei(H)})$ on the size of the witness graph
is not tight for many patterns $H$. For example, if $H$ is a path, then
we are able to find a witness graph with $\Oh(t)$ vertices and edges, and the algorithm of Theorem~\ref{thm:std:algo}
runs in time $2^{\Oh(t \log t)} |V(G)|$.

From the lower bound perspective, we were not able to prove an analogue of
Theorem~\ref{thm:intro:lb:col} in the absence of colors. However, there is a good reason
for that: we show that for any fixed $h \geq 2$ and $H = K_{2,h}$,
the \shit{} problem is solvable in time $2^{\Oh(t^2 \log t)} |V(G)|$
for a graph $G$ of treewidth $t$.
This should be put in contrast with $\mnei(K_{2,h}) = \msep(K_{2,h}) = h$.
Moreover, the lower bound of $2^{o(t^h)}$ can be proven if we break the symmetry
of $K_{2,h}$ by attaching a triangle to each of the two degree-$h$ vertices of $K_{2,h}$ (obtaining a graph
we denote by $H_h$; see Figure~\ref{fig:Hh}).
\begin{theorem}\label{thm:lb:Hh}
Unless ETH fails, for every $h \geq 2$ there does
not exist an algorithm that, given a \shitarg{H_h} instance $G$
and a tree decomposition of $G$ of width $t$, resolves $G$
in time $2^{o(t^h)} |V(G)|^{\Oh(1)}$.
\end{theorem}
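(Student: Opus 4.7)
The plan is to leverage the colorful lower bound of Theorem~\ref{thm:intro:lb:col} applied to $H = K_{2,h}$. Indeed $K_{2,h}$ is connected, not a path (for $h \geq 2$), and not a clique; a direct check gives $\msep(K_{2,h}) = h$, since the $h$ vertices on the large side of the bipartition form a minimal separator between the two vertices on the small side, and no minimal separator can be larger. Thus \cshitarg{K_{2,h}} already admits a $2^{o(t^h)}$ lower bound under ETH, and the task reduces to converting any \cshitarg{K_{2,h}} instance into an \shitarg{H_h} instance of comparable treewidth, preserving the optimum value. The role of the two attached triangles in $H_h$ will be to serve as intrinsic ``role markers'' that play the part of the coloring in the colorful setting.

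Concretely, given an instance $(G, \col)$ of \cshitarg{K_{2,h}} with $\col \colon V(G) \to V(K_{2,h}) = \{u, v, w_1, \ldots, w_h\}$, I construct $G'$ as follows. First, delete from $G$ every edge $xy$ such that $\col(x)\col(y)$ is not an edge of $K_{2,h}$; this preserves every colorful $K_{2,h}$-subgraph of $(G,\col)$. Then, for every vertex $x$ with $\col(x) \in \{u, v\}$, attach $N := |V(G)|+1$ pairwise vertex-disjoint private triangles to $x$. Adding pendant triangles raises treewidth by at most~$2$, so a tree decomposition of $G$ of width $t$ extends to one of $G'$ of width $t + \Oh(1)$.

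The $N$-fold multiplicity is a standard padding trick: hitting all $N$ private triangles at a vertex $x$ costs $N > |V(G)|$ vertices and is never optimal, so any optimal hitting set destroys those triangles by taking $x$ itself, and consequently uses only vertices of $V(G)$. Conversely, in every $H_h$-subgraph of $G'$ both hubs must be triangle-equipped, so they must lie in $\col^{-1}(u) \cup \col^{-1}(v)$; the $h$ middle vertices, being common neighbors of the two hubs, then lie in $\col^{-1}(\{w_1, \ldots, w_h\})$ by the edge-deletion step. This yields a correspondence between the $H_h$-subgraphs of $G'$ and the $K_{2,h}$-subgraphs of $G$ whose two degree-$h$ vertices are in $\col^{-1}(u) \cup \col^{-1}(v)$ and whose degree-$2$ vertices are in $\col^{-1}(\{w_1, \ldots, w_h\})$, and this correspondence preserves the minimum hitting-set size.

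The principal obstacle is the residual $u \leftrightarrow v$ symmetry of $H_h$: an $H_h$-subgraph of $G'$ may use two $u$-colored hubs, or two $v$-colored hubs, producing a ``spurious'' subgraph with no counterpart as a colorful $K_{2,h}$-subgraph of $(G,\col)$, potentially forcing a larger hitting set. To deal with this I intend to strengthen the starting colorful instance produced by Theorem~\ref{thm:intro:lb:col} so that no two vertices in $\col^{-1}(u)$ (respectively in $\col^{-1}(v)$) share $h$ common neighbors in $G$. Given the highly structured gadgetry underlying the colorful lower bound, I expect this property either to hold automatically or to be enforceable by a local modification applied on only one side of the bipartition (for instance, by duplicating or subdividing selected gadget edges), at a constant additive cost to treewidth. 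Carrying out this modification cleanly while staying faithful to the structure of the colorful construction is the most delicate step; once it is in place, the correspondence above becomes exact and the claimed $2^{o(t^h)}$ lower bound transfers directly.
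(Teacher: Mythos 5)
Your proposal is not complete: the step you defer — ensuring that the only $H_h$-subgraphs of the constructed graph $G'$ are those corresponding to colorful $K_{2,h}$-subgraphs of $(G,\col)$ — is not a technicality but the entire mathematical content of the theorem. The paper itself shows (Lemma~\ref{lem:witness-K2h} and the ensuing corollary) that \shitarg{K_{2,h}} \emph{is} solvable in time $2^{\Oh(t^2\log t)}|V(G)|$, precisely because of the $u\leftrightarrow v$ symmetry you identify; the sole purpose of the attached triangles in $H_h$ is to break that symmetry, and any valid lower-bound proof must therefore establish, for a concrete family of instances, that no two ``same-role'' triangle-equipped vertices ever share $h$ common neighbours and that no $h$ common neighbours of an intended hub pair fail to realize $h$ distinct roles. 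Writing ``I expect this property either to hold automatically or to be enforceable by a local modification'' defers exactly this verification. Moreover, once you open up the construction behind Theorem~\ref{thm:intro:lb:col} to check it, your argument is no longer a black-box reduction from the colorful problem, and the case analysis you would have to perform (over the $M$-adjacencies governed by the functions $f_{C,l}$ and over the internals of the OR-gadgets and $\alpha$-$r$-cycles, which themselves contain many copies of $K_{2,h}$) is essentially the same work as a direct proof. There is also a second, unmentioned source of spurious subgraphs in your correspondence: even with one $u$-coloured and one $v$-coloured hub, the $h$ common neighbours need only lie in $\col^{-1}(\{w_1,\dots,w_h\})$ and may repeat colours, so the resulting $K_{2,h}$ need not be a $\col$-$K_{2,h}$-subgraph and need not be hit by the colorful solution; your claim that the correspondence ``preserves the minimum hitting-set size'' is therefore unjustified in the forward direction.

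For comparison, the paper proves Theorem~\ref{thm:lb:Hh} by a direct reduction from clean \threesat{} (not via the colorful problem): a central set $M$ of $sh$ vertices with $s^h\geq 3n$ plays the role of your colour classes $w_{1},\dots,w_h$, distinct functions $f_{C,l}:\{1,\dots,h\}\to\{1,\dots,s\}$ guarantee that any two constructed degree-$h$ vertices share at most $h-1$ neighbours in $M$ unless they form an intended pair, and explicit attached copies of $H_h$ implement the variable/clause consistency. The key structural observation making the analysis tractable is that the only vertices of $G$ of degree at least $h$ that also lie in a triangle are the designated $a$- and $b$-type vertices, so all $H_h$-subgraphs can be enumerated by inspection. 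Your high-level intuition (use the triangles as intrinsic role markers replacing colours) is exactly right and matches the paper's, but as written the proof has a genuine gap where the symmetry-breaking must actually be verified.
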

This indicates that the optimal dependency on $t$ in an algorithm
for \shit{} may heavily rely on the symmetries of $H$, and may 
be more difficult to pinpoint.


\paragraph{Organization of the paper}
After setting notation in Section~\ref{sec:prelims},
we prove Theorem~\ref{thm:std:algo} in Section~\ref{sec:std:algo},
with a special emphasize on the existence of the witness graph in the begining of the section.
We discuss the special cases of \shit{} in Section~\ref{sec:std-discussion}.
The proofs of results for the colorful variant,
namely Theorems~\ref{thm:intro:cshit:algo} and~\ref{thm:intro:lb:col},
are provided in Sections~\ref{sec:algo-col} and~\ref{sec:lb-col}, respectively.
Section~\ref{sec:conc} concludes the paper.

\section{Preliminaries}\label{sec:prelims}

\paragraph{Graph notation}
In most cases, we use standard graph notation.
A graph $P_n$ is a path on $n$ vertices, a graph $K_n$ is a complete graph on $n$ vertices, and a graph $K_{a,b}$ is a complete bipartite graph with $a$ vertices on one side, and $b$ vertices on the other side.
A \emph{$t$-boundaried graph} is a graph $G$ with a prescribed (possibly empty) \emph{boundary}
$\bd G \subseteq V(G)$ with $|\bd G|\leq t$, and an injective function
$\bdfun_G: \bd G \to \{1,2,\ldots,t\}$. For a vertex $v \in \bd G$
the value $\bdfun_G(v)$ is called the \emph{label} of $v$.

A \emph{colored graph} is a graph $G$ with a function $\col:V(G) \to \labset$,
where $\labset$ is some finite set of colors.
A graph $G$ is $H$-colored, for some other graph $H$, if $\labset = V(H)$.
We also say in this case that $\col$ is an $H$-coloring of $G$.

A \emph{homomorphism} from graph $H$ to graph $G$ is a function $\pi : V(H) \to V(G)$
such that $ab \in E(H)$ implies $\pi(a)\pi(b) \in E(G)$.
In the $H$-colored setting, i.e., when $G$ is $H$-colored, we also require that $\col(\pi(a)) = a$ for any $a \in V(H)$
(every vertex of $H$ is mapped onto appropriate color).
The notion extends also to $t$-boundaried graphs:
if both $H$ and $G$ are $t$-boundaried, we require that
whenever $a \in \bd H$ then $\pi(a) \in \bd G$ and $\bdfun_G(\pi(a)) = \bdfun_H(a)$. Note, however, that we allow that a vertex of $V(H) \setminus \bd H$ is mapped onto a vertex of $\bd G$.

An \emph{$H$-subgraph of $G$} is any injective homomorphism $\pi: V(H) \to V(G)$.
Recall that in the $t$-boundaried setting, we require that the labels are preserved,
whereas in the colored setting, we require that the homomorphism respects colors.
In the latter case, we call it a \emph{$\col$-$H$-subgraph of $G$} for clarity.

We say that a set $X \subseteq V(G)$ \emph{hits} a ($\col$-)$H$-subgraph $\pi$
if $X \cap \pi(V(H)) \neq \emptyset$.
The (\textsc{Colorful}) \shit{} problem asks for a minimum possible size
of a set that hits all ($\col$-)$H$-subgraphs of $G$.

\paragraph{(Nice) tree decompositions}
For any nodes $w,w'$ in a rooted tree $\Tree$, we say that $w'$ is a \emph{descendant} of $w$
(denoted $w' \preceq w$) if $w$ lies on the unique path between $w'$ and $\texttt{root}(\Tree)$,
the root of $\Tree$.
A \emph{tree decomposition} of a graph is a pair $(\Tree,\Tbag)$, where
$\Tree$ is a rooted tree, and $\Tbag : V(\Tree) \to 2^{V(G)}$ is a mapping satisfying:
\begin{itemize}
  \item for each vertex $v \in V(G)$, the set $\{w \in V(\Tree)\ |\ v \in \Tbag(w)\}$ induces a nonempty and connected subtree of~$\Tree$,
  \item for each edge $e \in E(G)$, there exists $w \in V(\Tree)$ such that $e \subseteq \Tbag(w)$.
\end{itemize}
The width of a decomposition $(\Tree,\Tbag)$ equals $\max_{w \in V(\Tree)} |\Tbag(w)|-1$,
and the treewidth of a graph is the minimum possible width of its decomposition.

For a tree decomposition $(\Tree,\Tbag)$, we define two auxiliary mappings:
\begin{align*}
\Tall(w) &= \bigcup_{w' \preceq w} \Tbag(w'), & \Tdown(w) &= \Tall(w) \setminus \Tbag(w).
\end{align*}

In our dynamic programming algorithms, it is convenient to work on the so-called
\emph{nice tree decompositions}. A tree decomposition $(\Tree,\Tbag)$ is \emph{nice} if 
$\Tbag(\texttt{root}(\Tree)) = \emptyset$ and each node $w \in V(\Tree)$ is of one of the following four
types:
\begin{description}
\item[leaf node] $w$ is a leaf of $\Tree$ and $\Tbag(w) = \emptyset$.
\item[introduce node] $w$ has exactly one child $w'$, and there exists a vertex $v \in V(G) \setminus \Tbag(w')$ such that $\Tbag(w) = \Tbag(w') \cup \{v\}$.
\item[forget node] $w$ has exactly one child $w'$, and there exists a vertex $v \in \Tbag(w')$,
  such that $\Tbag(w) = \Tbag(w') \setminus \{v\}$.
\item[join node] $w$ has exactly two children $w_1,w_2$ and $\Tbag(w) = \Tbag(w_1) = \Tbag(w_2)$.
\end{description}
It is well known (see e.g.~\cite{nice-decomp})
that any tree decomposition of width $t$ can be transformed, without increasing its width,
into a nice decomposition with $\Oh(t|V(G)|)$ nodes.

Hence, by an application of the recent $5$-approximation for treewidth~\cite{tw-apx}, in all our algorithmic results we implicitely assume that we are given
a nice tree decomposition of $G$ with $\Oh(t|V(G)|)$ nodes and of width \emph{less} than $t$,
  so that each bag is of size at most $t$.
(This shift of the value of $t$ by one is irrelevant for the complexity bounds,
 but makes the notation much cleaner.)
Moreover, we may assume that we also have a function $\labfun: V(G) \to \{1,2,\ldots,t\}$
such that, for each node $w \in V(\Tree)$, $\labfun|_{\Tbag(w)}$ is injective.
(Observe that it is straightforwad to construct $\labfun$ in a top-bottom manner.
) Consequently, we may treat each graph $G[\Tall(w)]$ as a $t$-boundaried
graph with $\bd G[\Tall(w)] = \Tbag(w)$ and labeling $\labfun|_{\Tbag(w)}$.

\begin{figure}[t]
\centering
\includegraphics{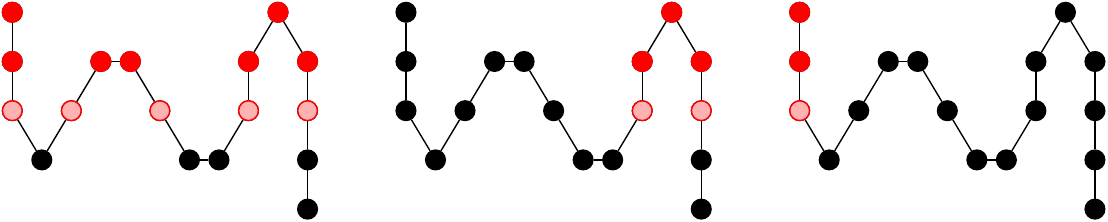}
\caption{Red vertices denote a slice (left), chunk (centre) and separator chunk (right) in a graph $H$ being a path.
The light-red vertices belong to the boundary.}
\label{fig:params}
\vskip -0.4cm
\end{figure}

\begin{figure}[tb]
\centering
\begin{subfigure}{.4\textwidth}
\centering
\includegraphics{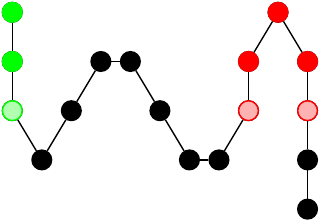}
\caption{In a path $P_h$, $h\geq 5$, we have $\msep(P_h) = 1$ (a separator chunk is depicted green)
whereas $\mnei(P_h) = 2$ (a corresponding chunk is depicted red).}
\end{subfigure} \quad
\begin{subfigure}{.4\textwidth}
 \centering
 \includegraphics{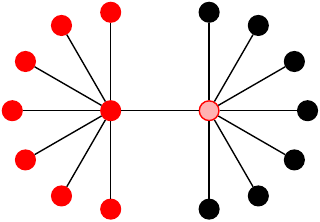}
 \caption{In a double star we have $\msep(H) = \mnei(H) = 1$ (a chunk is depicted red).}
\end{subfigure}\\\vspace{2mm}%
\begin{subfigure}{.4\textwidth}
 \centering
 \includegraphics{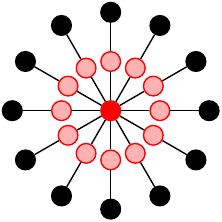}
 \caption{In a subdivided star $\mnei(H)$ equals the degree of the center (a corresponding chunk is depicted red), whereas
   $\msep(H) = 1$ as in all trees on at least three vertices.}
\end{subfigure}\quad
\begin{subfigure}{.4\textwidth}
 \centering
 \includegraphics{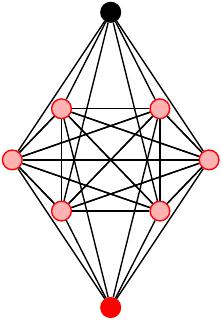}
 \caption{In a clique without one edge, $\msep(H) = \mnei(H) = |V(H)|-2$ (a corresponding separator chunk is depicted red).}
\end{subfigure}\\\vspace{2mm}%
\begin{subfigure}{.4\textwidth}
 \centering
 \includegraphics{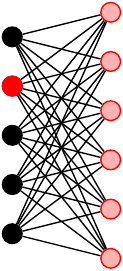}
 \caption{In a biclique $K_{a,b}$, $a,b \ge 2$, we have $\msep(H) = \mnei(H) = \max(a,b)$ (a corresponding separator chunk is depicted red).}
\end{subfigure}
\caption{Examples of graphs with the values of $\msep(H)$ and $\mnei(H)$.
  In each example, the vertices with lighter color belong to the boundary of a corresponding chunk.
\label{fig:examples}}
\end{figure}
    
\paragraph{Important graph invariants, chunks, and slices}
For two vertices $a,b \in V(H)$, a set $S \subseteq V(H) \setminus \{a,b\}$
is an \emph{$ab$-separator} if $a$ and $b$ are not in the 
same connected component of $H \setminus S$.
The set $S$ is additionally a \emph{minimal $ab$-separator}
if no proper subset of $S$ is an $ab$-separator.
A set $S$ is a \emph{minimal separator} if it is a minimal $ab$-separator
for some $a,b \in V(H)$.
For a graph $H$, by $\msep(H)$ we denote the maximum size
of a minimal separator in $H$.

For an induced subgraph $H' = H[D]$, $D \subseteq V(H)$,
we define the boundary $\bd H' = N_H(V(H) \setminus D)$ and the interior $\inte H' = D \setminus \bd H[D]$; thus $V(H')=\bd H'\uplus \inte H'$.
Observe that $N_H(\inte H') \subseteq \bd H'$; the inclusion can be proper, as it is possible that a vertex of $\bd H'$ has no neighbor in $\inte H'$.
An induced subgraph $H'$ of $H$ is a \emph{slice} if $N_H(\inte H') = \bd H'$,
and a \emph{chunk} if additionally $H[\inte H']$ is connected.
For a set $A \subseteq V(H)$, we use $\slice[A]$ ($\chunk[A]$) to denote the unique slice (chunk) with interior $A$ (if it exists).
The intuition behind this definition is that, when we consider some bag $\Tbag(w)$
in a tree decomposition, a slice is a part of $H$ that may already be present in $G[\Tall(w)]$
and we want to keep track of it.
If a slice (chunk) $\slice$ is additionally equipped with an injective labeling $\bdfun_\slice : \bd \slice \to \{1,2,\ldots,t\}$,
then we call the resulting $t$-boundaried graph a \emph{$t$-slice} (\emph{$t$-chunk}, respectively).

By $\mnei(H)$ we denote the maximum size of $\bd \chunk$, where $\chunk$ iterates over all chunks of $H$.
We remark here that both $\msep(H)$ and $\mnei(H)$
are positive only for graphs $H$ that contain at least one connected
component that is not a clique, as otherwise there are no chunks with nonempty boundary nor minimal separators in $H$.

Observe that if $S$ is a minimal $ab$-separator in $H$,
and $A$ is the connected component of $H \setminus S$ that
contains $a$, then $N_H(A) = S$ and $\chunk[A]$ is a
chunk in $H$ with boundary $S$. Consequently,
$\msep(H) \leq \mnei(H)$ for any graph $H$.
A chunk $\chunk$ for which $\bd \chunk$ is a minimal separator in $H$
is henceforth called a \emph{separator chunk}.
Equivalently, a chunk $\chunk$ is a separator chunk if and only if there exists
a connected component $B$ of $H \setminus \chunk$ such that $N_H(B) = \bd \chunk$.
Observe also that $\msep(H)$ equals the maximum boundary size over all separator chunks in $H$.

See also Figures~\ref{fig:params} and~\ref{fig:examples} for an illustration.

\paragraph{Exponential Time Hypothesis and SAT instances}
Our lower bounds are based on the \emph{Exponential Time Hypothesis} (ETH)
of Impagliazzo and Paturi~\cite{IP01}. 
We do not need here its formal definition, but instead we rely on the following corollary
of the celebrated Sparsification Lemma~\cite{IPZ01}.
\begin{theorem}[\cite{IPZ01}]\label{thm:spars}
Unless ETH fails, there does not exist an algorithm that can resolve satisfiability of a
$n$-variable $m$-clause $3$-CNF formula in time $2^{o(n+m)}$.
\end{theorem}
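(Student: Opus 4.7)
The plan is to prove Theorem~\ref{thm:lb:Hh} by a reduction from \threesat{} that meets the requirements of Theorem~\ref{thm:spars}. Given a $3$-CNF formula $\phi$ with $n$ variables and $m$ clauses, set $N = n+m$ and $t = \lceil N^{1/h} \rceil$. I would construct a graph $G_\phi$ of treewidth $\Oh(t)$, polynomial in $N$, together with a threshold $k$, such that $\phi$ is satisfiable iff $G_\phi$ admits an $H_h$-hitting set of size at most $k$. A $2^{o(t^h)}|V(G)|^{\Oh(1)}$ algorithm would then decide \threesat{} in time $2^{o(N)}$, contradicting ETH.

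Step 1 (\emph{state gadgets}). Partition variables and clauses into groups of size roughly $h\log t$; thus each group has $2^{h\log t} = t^h$ possible partial assignments, and there are $\Oh(t^h/\log t)$ groups. For each group I introduce a \emph{state gadget} whose possible optimal hitting-set configurations are in bijection with the $t^h$ group assignments. The gadget exploits the fact that $\msep(H_h)=h$: the $h$-side of a $K_{2,h}$ inside $H_h$ is a minimal separator of size $h$, so I can arrange a set $U$ of roughly $t$ vertices, attach two ``magnet'' vertices $u_1,u_2$ completing a $K_{2,h}$ to \emph{every} $h$-subset of $U$, and decorate $u_1,u_2$ with the two triangles required by $H_h$. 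The solution will be forced to choose an $h$-subset of $U$ to kill all $K_{2,h}$'s born from this gadget; the subset encodes the group's assignment.

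Step 2 (\emph{breaking the $K_{2,h}$ symmetry via the triangles}). The crucial role of the two attached triangles, and the reason the lower bound tracks $h$ rather than $2$, is that they force the hit to be paid on the $h$-side of every occurrence of $K_{2,h}$, rather than cheaply from the $2$-side. Concretely, the triangle vertices will be ``protected'' (either by being made expensive to hit or by sitting inside many conflicting $H_h$-copies that would cost more than $h$ to resolve), so any optimal solution kills the $h$-side. This is exactly the feature that fails for bare $K_{2,h}$, matching the sub-quadratic algorithm $2^{\Oh(t^2\log t)}|V(G)|$ mentioned in the text for that pattern.

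Step 3 (\emph{propagation and clause checking}). I arrange the state gadgets in a grid-like pattern with $\Oh(t)$ rows and $\Oh(t^h/\log t)$ columns, so that each column holds one state gadget and the $\Oh(t)$-size interface between consecutive columns sits on shared boundary vertices. Between two adjacent columns I place \emph{propagation gadgets}: these introduce extra $H_h$-copies whose hitting cost is minimized exactly when the $h$-subset chosen on the left agrees, in a prescribed sense, with the $h$-subset chosen on the right on the shared bits of the encoded assignment. For each clause I add a \emph{clause gadget}: a further copy of the asymmetric ``magnet'' construction that is fully hit (at the minimum cost) iff at least one of its three literals is satisfied by the configuration chosen in the relevant state gadgets. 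All gadget interactions are mediated through $H_h$-copies whose $2$-side lies in the gadget and whose $h$-side lies on the shared boundary, so the triangles of Step 2 continue to force the ``correct'' interpretation.

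Step 4 (\emph{treewidth and wrap-up}). The grid-like layout has a natural tree decomposition (essentially a path decomposition along the columns) of width $\Oh(t)$: each bag contains one column's internal vertices plus the $\Oh(t)$-vertex interface with the previous column. Hence tw$(G_\phi) = \Oh(t)$ with $t^h = \Theta(N)$, and $|V(G_\phi)| = \mathrm{poly}(N)$. Correctness: any satisfying assignment of $\phi$ yields a hitting set of the designed size $k$ by selecting the appropriate $h$-subsets in each state gadget and taking the canonical covers in propagation and clause gadgets; conversely, any hitting set of size $\leq k$ must be ``canonical'' in every gadget (otherwise Step~2 shows the cost exceeds $k$), and the canonical choices read off a satisfying assignment.

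The hardest part will be Step 2: cleanly forcing optimal solutions to commit to the $h$-side of every $K_{2,h}$-copy, while keeping the interface size $\Oh(t)$ and avoiding ``shortcut'' $H_h$-subgraphs that the reduction did not intend. A careful counting budget, together with the rigidity provided by the two distinguishing triangles, is what I expect to carry this through, and it is precisely the asymmetry that makes $H_h$ susceptible to a $\msep$-tight lower bound even though bare $K_{2,h}$ (by the text's remark) is not.
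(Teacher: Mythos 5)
Your proposal does not address the statement it was supposed to prove. The statement in question is Theorem~\ref{thm:spars}: the corollary of the Sparsification Lemma asserting that, under ETH, 3-SAT on $n$ variables and $m$ clauses cannot be solved in time $2^{o(n+m)}$. This is a known result of Impagliazzo, Paturi and Zane, which the paper cites from~\cite{IPZ01} and does not prove. What you have written instead is a proof sketch of Theorem~\ref{thm:lb:Hh} (the lower bound for \shitarg{H_h}), and your very first sentence \emph{invokes} Theorem~\ref{thm:spars} as a black box. So as a proof of the stated theorem, the proposal is vacuous: you assume the conclusion as a premise and go on to prove something else. A genuine proof of Theorem~\ref{thm:spars} would have to go through the Sparsification Lemma itself --- showing that any 3-CNF formula on $n$ variables can be reduced, in time $2^{\epsilon n}$, to a disjunction of $2^{\epsilon n}$ formulas each with $O_\epsilon(n)$ clauses --- which is an entirely different kind of argument (a branching/sunflower argument on clause structure), and nothing in your sketch touches it.

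As a side remark, even viewed as an attempt at Theorem~\ref{thm:lb:Hh}, your sketch diverges considerably from the paper's construction and leaves its hardest step (your Step 2) unresolved. The paper does not use grouping into blocks of $h\log t$ variables, a grid of state gadgets, or propagation gadgets; instead it introduces a central set $M$ of $sh$ vertices with $s=\Oh(n^{1/h})$, assigns to each clause--literal pair an injective function $f_{C,l}:\{1,\dots,h\}\to\{1,\dots,s\}$, and attaches constant-size variable and clause gadgets whose degree-$h$ vertices ($a_{x,C,l}$ and $b_{C,l}$) connect to $M$ along $f_{C,l}$; the triangles are used only so that the \emph{only} unintended $H_h$-copies are the pairs $(a_{x,C,l},b_{C,l})$, because these are the only high-degree vertices lying on triangles. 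The budget then forces a consistent truth assignment directly, with no need to ``protect'' triangle vertices or to force hits onto the $h$-side in the sense you describe. Your acknowledged difficulty in Step 2 is exactly the part the paper's design makes trivial by construction.
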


In our reductions we start from a slightly preprocessed $3$-SAT instances.
We say that a $3$-CNF formula $\Phi$ is \emph{clean} if
each variable of $\Phi$ appears exactly three times, at least once
positively and at least once negatively, and each clause of $\Phi$ contains
two or three literals and does not contain twice the same variable.
\begin{lemma}\label{lem:preprocess}
Given any $3$-CNF formula $\Phi$, one can in polynomial time compute
an equivalent clean formula $\Phi'$ of size linearly bounded in the size of $\Phi$, such that $\Phi'$ is satisfiable if and only if $\Phi$ is.
\end{lemma}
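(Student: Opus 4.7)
The plan is a two-phase transformation. In the first, normalization phase, I would iteratively apply four standard satisfiability-preserving simplifications until none is applicable: removing tautological clauses; collapsing repeated literals inside a clause; unit propagation on any $1$-clause (assigning its literal, simplifying, and detecting unsatisfiability if the empty clause ever appears); and pure-literal elimination, which removes every clause containing a variable that occurs with only one polarity. Each step strictly decreases the total number of literal occurrences, so the whole phase runs in polynomial time. If the phase trivializes $\Phi$ to $\top$ or $\bot$, I would simply output a fixed, hard-coded clean formula with the correct satisfiability status. Otherwise, the resulting formula $\Phi_0$ is sat-equivalent to $\Phi$, has size at most $|\Phi|$, contains only $2$- and $3$-clauses with no duplicate literal and no tautology, and, by exhaustion of pure-literal elimination, every variable of $\Phi_0$ occurs at least once positively and at least once negatively, hence at least twice.

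The second, duplication phase, applies the standard cycle trick to obtain exact three-fold occurrence. For every variable $v$ of $\Phi_0$ with $k \ge 2$ occurrences, I would introduce $k$ fresh copies $v_1, \ldots, v_k$, replace the $i$-th occurrence of $v$ (keeping its polarity) with $v_i$, and add the $k$ size-$2$ implication clauses $(\neg v_1 \lor v_2), (\neg v_2 \lor v_3), \ldots, (\neg v_k \lor v_1)$. These clauses force $v_1 = v_2 = \cdots = v_k$ in every satisfying assignment, so $\Phi'$ is sat-equivalent to $\Phi_0$; conversely, any satisfying assignment of $\Phi_0$ extends to $\Phi'$ by setting all copies of $v$ to the value of $v$. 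Each copy $v_i$ then appears exactly three times: once in a translated original clause, once positively in $(\neg v_{i-1} \lor v_i)$, and once negatively in $(\neg v_i \lor v_{i+1})$, so both polarities are present. The cycle clauses use two distinct fresh variables (because $k \ge 2$), and the original clauses retain the no-repeat property from $\Phi_0$, so $\Phi'$ is clean. For the size bound, the number of new variables and the number of new clauses are each bounded by the total literal occurrences in $\Phi_0$, which is $\Oh(|\Phi|)$.

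The one delicate point, and essentially the only obstacle, is ensuring that the duplication trick is applied only to variables with $k \ge 2$ occurrences: if $k = 1$ the single cycle clause would degenerate to the tautology $(\neg v_1 \lor v_1)$ and $v_1$ would end with only two occurrences. This is precisely what pure-literal elimination secures, since any variable surviving that phase appears with both polarities and hence at least twice. Beyond this observation, everything reduces to routine bookkeeping, and the correctness of each individual simplification as well as of the cycle trick is immediate.
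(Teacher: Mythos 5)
Your proposal is correct and follows essentially the same two-phase approach as the paper: first exhaustively apply tautology removal, duplicate-literal collapsing, unit propagation, and pure-literal elimination, then apply the implication-cycle trick to equalize occurrences to exactly three. You also correctly identify the same key point the paper relies on, namely that pure-literal elimination guarantees every surviving variable occurs at least twice, so the cycle has length at least two and no degenerate clause arises.
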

\begin{proof}
First, preprocess $\Phi$ as follows:
\begin{enumerate}
\item Simplify all clauses that contain repeated variables: delete all clauses
that contain both a variable and its negation, and remove duplicate literals
from the remaining clauses.
\item As long as there exists a variable $x$ that appears only positively or only
negatively, fix the evaluation of $x$ that satisfies all clauses containing $x$
and simplify the formula.
\item As long as there exists a clause $C$ with only one literal, fix the evaluation
of this literal that satisfies $C$ and simplify the formula.
\item If some clause becomes empty in the process, return a dummy unsatisfiable clean formula.
\end{enumerate}
Observe that after this preprocessing the size of $\Phi$ could have only shrunk,
while each variable now appears at least twice.

Second, replace every variable $x$ with a cycle of implications
$x_1 \Rightarrow x_2 \Rightarrow \ldots \Rightarrow x_{s(x)} \Rightarrow x_1$, where
$s(x)$ is the number of appearances of $x$ in $\Phi$.
More formally, for every variable $x$:
\begin{enumerate}
\item introduce $s(x)$ new variables $x_1, x_2, \ldots, x_{s(x)}$, and replace
each occurence of $x$ in $\Phi$ with a distinct variable $x_i$; and
\item introduce $s(x)$ new clauses $x_i \Rightarrow x_{i+1}$ (i.e., $\neg x_i \vee x_{i+1}$)
  for $i=1,2,\ldots,s(x)$, where $x_{s(x)+1} = x_1$.
\end{enumerate}
Observe that, after this replacement, each variable $x_i$ appears exactly three times in
the formula $\Phi$: positively in the implication $x_{i-1} \Rightarrow x_i$,
negatively in the implication $x_i \Rightarrow x_{i+1}$ (with the convention $x_0 = x_{s(x)}$
and $x_{s(x)+1} = x_1$), and the third time in one former literal of the variable $x$.
Moreover, as after the first step each variable appears at least twice in $\Phi$,
for every former
variable $x$ we have $s(x) \geq 2$ and no new clause contains twice the same variable.
Finally, note that the second step increases the size of the formula only by a constant factor.
The lemma follows.
\maybeqed\end{proof}

\section{General algorithm for \shit{}}\label{sec:std:algo}

In this section we present an algorithm for \shit{} running in time
$2^{\Oh(t^{\mnei(H)} \log t)} |V(G)|$,
where $t$ is the width of the tree decomposition we are working on.
The general idea is the natural one.
\begin{definition}[profile]
The \emph{profile} of a $t$-boundaried graph $(G,\bdfun)$ is the set $\slicefam^{G,\bdfun}$
of all $t$-slices that are subgraphs of $(G,\bdfun)$.
A \emph{$(\leq p)$-profile} of $(G,\bdfun)$ is the profile of $(G,\bdfun)$, restricted to all $t$-slices that have at most $p$ vertices.
\end{definition}
For each node $w$ of the tree decomposition,
for each set $\wX \subseteq \Tbag(w)$,
and for each family $\slicefam$ of $t$-slices, we would like to find the minimum size
of a set $X \subseteq \Tdown(w)$ such that, if we treat $G[\Tall(w)]$ as a $t$-boundaried
graph with $\bd G[\Tall(w)] = \Tbag(w)$ and labeling $\labfun|_{\Tbag(w)}$, then
the profile of  $G[\Tall(w) \setminus (X \cup \wX)]$ is exactly $\slicefam$.
However, as the number of $t$-slices can be as many as $t^{|H|}$, we have too many choices for the potential profile $\slicefam$.

\subsection{The witness graph}

The essence of the proof
is to show that each ``reasonable'' choice of $\slicefam$
can be encoded as a \emph{witness graph} of essentially size $\Oh(t^{\mnei(H)})$.
Such a claim would give a $2^{\Oh(t^{\mnei(H)} \log t)}$ bound on the number of possible
witness graphs, and provide a good bound on the size of state space.

For technical reasons, we need to slightly generalize the notion of a profile, so that
it encapsulates also the possibility of ``reserving'' a small set $Y$ of vertices from the boundary for another pieces of the graph $H$.
\begin{definition}[extended profile]
For a $t$-boundaried graph $(G,\bdfun)$, \emph{an extended profile} 
consists of a $(\leq |V(H)|-|Y|)$-profile $\slicefam^{G,\bdfun}_Y$ of the graph $(G \setminus Y,\bdfun|_{\bd G \setminus Y})$ for every set $Y \subseteq \bd G$ of size at most $|V(H)|$.
\end{definition}
\begin{definition}[witness-subgraph, equivalent]
Let $(G_1,\bdfun_1)$ and $(G_2,\bdfun_2)$ be two $t$-boundaried graphs. We say that $(G_1,\bdfun_1)$ is a \emph{witness-subgraph} of $(G_2,\bdfun_2)$, 
if $\bd G_1 = \bd G_2$, $\bdfun_1 = \bdfun_2$, and, moreover, for every $Y \subseteq \bd G_1$ of size at most $|V(H)|$ we have
$\slicefam^{G_1,\bdfun_1}_Y \subseteq \slicefam^{G_2,\bdfun_2}_Y$.
We say that two $t$-boundaried graphs are \emph{equivalent}, if one is the witness-subgraph of the other, and vice-versa (i.e., they boundaries, labelings, and extended profiles are equal).
\end{definition}
Let us emphasize that, maybe slightly counterintuitively, a witness-subgraph is not necessarily a subgraph; the `sub' term corresponds to admitting a subfamily of $t$-slices as subgraphs.

In the next lemma we show that every $t$-boundaried graph has a small equivalent one, showing us that there is only a small number
of reasonable choices for an (extended) profile in the dynamic programming algorithm.
\begin{lemma}\label{lem:witness}
Assume $H$ contains a connected component that is not a clique.
Then, for any $t$-boundaried graph $(G,\bdfun)$ there exists an equivalent $t$-boundaried graph $(\wG,\bdfun)$ such
that (a) is a subgraph of $(G,\bdfun)$, (b) $\bd G = \bd \wG$ and $G[\bd G] = \wG[\bd \wG]$, and (c)
$\wG \setminus E(\wG[\bd \wG])$ contains $\Oh(t^{\mnei(H)})$ vertices and edges.
\end{lemma}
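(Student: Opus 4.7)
My plan is to construct $\wG$ by taking $(G[\bd G],\bdfun)$ and augmenting it with a small representative collection of witnesses for each $t$-chunk of $H$ that embeds in $G$, then to argue that this collection already suffices to preserve the extended profile. Conditions (a) and (b) are free from this choice, so the real work is to achieve (c) and the equivalence simultaneously.

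First, I will enumerate all $t$-chunks $(\chunk,\bdfun_\chunk)$, where $\chunk$ is a chunk of $H$ (constantly many) and $\bdfun_\chunk\colon\bd\chunk\to\{1,\dots,t\}$ is an injective labelling taking values in $\bdfun(\bd G)$. Since $|\bd\chunk|\le\mnei(H)$, the total number of such $t$-chunks is $\Oh(t^{\mnei(H)})$. For each of them, let $\mathcal{F}$ be the family of witnesses in $(G,\bdfun)$; I set $q:=3|V(H)|^2$ and apply the classical Bollob\'{a}s/Monien theorem to the collection of interior images $\{\pi(\inte\chunk):\pi\in\mathcal{F}\}$ (sets of size at most $|V(H)|$ in $V(G)$), extracting a $q$-representative subfamily $\mathcal{F}'$ of size at most $\binom{|V(H)|+q}{|V(H)|}=\Oh_{|V(H)|}(1)$. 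I then add all vertices and induced edges in the images of the witnesses in $\mathcal{F}'$ to $\wG$. The total contribution is $\Oh(t^{\mnei(H)})\cdot\Oh_{|V(H)|}(1)\cdot\Oh(1)=\Oh(t^{\mnei(H)})$ vertices and edges, which gives~(c).

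To verify equivalence, note that $\wG\subseteq G$ with identical boundary gives one inclusion for free. For the reverse, fix $Y\subseteq\bd G$ with $|Y|\le|V(H)|$ and a $t$-slice $\slice$ of size at most $|V(H)|-|Y|$ with witness $\pi\colon V(\slice)\to V(G)\setminus Y$. I decompose $\slice$ into chunks $\chunk_1,\dots,\chunk_k$ (with $k\le|V(H)|$) according to the connected components of $\inte\slice$, with induced labellings $\bdfun_{\chunk_i}:=\bdfun_\slice|_{\bd\chunk_i}$. I then process $i=1,\dots,k$, maintaining the invariant that after step $i$ there are representative witnesses $\pi'_1,\dots,\pi'_i$ selected in the corresponding $\mathcal{F}'$'s, together with a global witness $\pi^{(i)}$ of $\slice$ in $G\setminus Y$ that restricts to $\pi'_j$ on $V(\chunk_j)$ for each $j\le i$ (with base $\pi^{(0)}:=\pi$). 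At step $i$ I form the avoid-set
\[
Z_i := Y\ \cup\ \pi^{(i-1)}(\bd\slice\setminus\bd\chunk_i)\ \cup\ \bigcup_{j\ne i}\pi^{(i-1)}(\inte\chunk_j),
\]
of size at most $|V(H)|+|V(H)|+(k-1)|V(H)|\le 3|V(H)|^2=q$. Since $\pi^{(i-1)}|_{V(\chunk_i)}$ belongs to $\mathcal{F}$ with interior image disjoint from $Z_i$ (by injectivity of $\pi^{(i-1)}$), the representative property yields $\pi'_i\in\mathcal{F}'$ with the same disjointness; defining $\pi^{(i)}$ to agree with $\pi'_i$ on $V(\chunk_i)$ and with $\pi^{(i-1)}$ elsewhere keeps injectivity (forced by $Z_i$) and the homomorphism property (which splits cleanly, since $\slice$ has no edges between interiors of different chunks). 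After $k$ steps, $\pi^{(k)}$ is a witness of $\slice$ whose image lies entirely in $\wG\setminus Y$, because every $\pi'_j$ was added to $\wG$ in the previous stage.

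The main conceptual obstacle is the chunk-by-chunk reconstruction in the last paragraph: the avoid-set $Z_i$ must simultaneously block conflicts with $Y$, with boundary images of other chunks, with interior images already committed to, and with interior images of the ``current'' witness $\pi^{(i-1)}$ that have not yet been committed (so that the invariant remains extendable). Driving $|Z_i|$ down to an $|V(H)|$-dependent constant is precisely what keeps the representative subfamily of constant size and delivers the $\Oh(t^{\mnei(H)})$ bound in~(c).
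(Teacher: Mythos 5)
Your proposal is correct, and its overall architecture coincides with the paper's: start from $G[\bd G]$, add a constant-size collection of witnesses for each of the $\Oh(t^{\mnei(H)})$ labelled $t$-chunks, and then reconstruct an arbitrary $t$-slice witness by processing the connected components of its interior one at a time against an avoid-set $Z_i$ of size bounded in terms of $|V(H)|$ only. Indeed, your second stage (the definition of $Z_i$, the observation that edges of $\slice$ never join interiors of distinct chunks, and the injectivity bookkeeping) is essentially identical to the paper's inductive construction of $\pi_0,\dots,\pi_r$. The one genuine difference lies in how the constant-size set of chunk witnesses is obtained. The paper uses an explicit recursive branching procedure $\operadd(\chunk,X)$ of depth and branching factor $|V(H)|$, and proves its sufficiency (Claim~\ref{cl:witness:chunk}) by ``chasing'' the branch corresponding to the vertex of the avoid-set that the current witness hits; this is self-contained and yields a family of size $|V(H)|^{|V(H)|}$ per chunk that is $|V(H)|$-representative. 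You instead invoke the Bollob\'as--Monien representative-family theorem as a black box on the interior images, with a (generously chosen) $q=3|V(H)|^2$, getting $\binom{|V(H)|+q}{|V(H)|}$ witnesses per chunk. Both give an $H$-dependent constant per labelled chunk, so the $\Oh(t^{\mnei(H)})$ bound is unaffected; your route is more modular (and, via known constructive versions of representative families, would even make the witness graph efficiently computable, which the lemma does not require since the dynamic program only enumerates candidate states), while the paper's is elementary and avoids any external combinatorial theorem. One small point worth making explicit in your write-up: the representative property controls only the interior image of the replacement witness, so you should note, as you implicitly do, that the boundary image is forced by the injective labelling $\bdfun$ and therefore automatically agrees with $\pi^{(i-1)}$ and avoids $Y$.
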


\begin{proof}
We define $\wG$ by a recursive procedure. We start with $\wG = G[\bd G]$.
Then, for every $t$-chunk $\chunk = (H',\bdfun')$,
we invoke a procedure
$\operadd(\chunk,\emptyset)$. The procedure $\operadd(\chunk,X)$, for $X \subseteq V(G)$,
first tries to find a $\chunk$-subgraph $\pi$ in
$(G \setminus X,\bdfun)$. If there is none, the procedure terminates.
Otherwise, it first adds all edges and vertices of $\pi(\chunk)$ to $\wG$ that
are not yet present there.
Second, if $|X| < |V(H)|$, then it recursively invokes $\operadd(\chunk,  X \cup \{v\})$
for each $v \in \pi(\chunk)$.

We first bound the size of the constructed graph $\wG$.
There are at most $2^{|V(H)|} t^{\mnei(H)}$ choices for the chunk,
since a chunk $\chunk$ is defined by its vertex set, and there are at most $t^{\mnei(H)}$ labelings
of its boundary.
The procedure $\operadd(\chunk, X)$ at each step adds at most one copy of $H$ to $G$,
and branches into at most $|V(H)|$ directions. The depth of the recursion is bounded
by $|V(H)|$. Hence, in total at most $2^{|V(H)|}t^{\mnei(H)}\cdot (|V(H)|+|E(H)|)\cdot |V(H)|^{|V(H)|}=\Oh(t^{\mnei(H)})$ edges and vertices are added to $\wG$,
except for the initial graph $G[\bd G]$ (recall that we consider $H$ to be a fixed graph and the constants hidden by the big-$\Oh$ notation can depend on $H$).

It remains to argue that $\wG$ satisfies property (d). Clearly, since $(\wG,\bdfun)$
is a subgraph of $(G,\bdfun)$, the implication in one direction is trivial.
In the other direction, we start with the following claim.

\begin{myclaim}\label{cl:witness:chunk}
For any set $Z \subseteq V(G)$ of size at most $|V(H)|$, and for any
$t$-chunk $\chunk$,
if there exists a $\chunk$-subgraph
in $(G \setminus Z, \bdfun)$ then there exists also one in $(\wG \setminus Z,\bdfun)$.
\end{myclaim}
\begin{proof}
Let $\pi$ be a $\chunk$-subgraph in $(G \setminus Z,\bdfun)$.
Define $X_0 = \emptyset$. We will construct sets $X_0 \subsetneq X_1 \subsetneq \ldots$,
where $X_i \subseteq Z$ for every $i$, and 
analyse the calls to the procedure $\operadd(\chunk, X_i)$ in the process of constructing $\wG$.

Assume that $\operadd(\chunk, X_i)$ has been invoked at some point during the construction;
clearly this is true for $X_0 = \emptyset$. Since we assume $X_i \subseteq Z$,
there exists a
$\chunk$-subgraph in $(G \setminus X_i,\bdfun)$ --- $\pi$ is one such example.
Hence, $\operadd(\chunk, X_i)$ has found a $\chunk$-subgraph $\pi_i$, and added its image to $\wG$.
If $\pi_i$ is a $\chunk$-subgraph also in $(\wG \setminus Z,\bdfun)$, then we are done.
Otherwise, there exists $v_i \in Z \setminus X_i$ that is also present in the image of $\pi_i$.
In particular, since $|Z| \leq |V(H)|$, we have $|X_i| < |V(H)|$ and the call
$\operadd(\chunk, X_i \cup \{v_i\})$ has been invoked. We define $X_{i+1} := X_i \cup \{v_i\}$.

Since the sizes of sets $X_i$ grow at each step, for some $X_i$, $i \leq |Z|$, we reach
the conclusion that $\pi_i$ is a $\chunk$-subgraph of $(\wG \setminus Z,\bdfun)$, and the claim is proven.
\cqed\end{proof}

Fix now a set $Y \subseteq V(G)$ and a $t$-slice $\slice$ with labeling $\bdfun_\slice$
and with $|Y| + |V(\slice)| \leq |V(H)|$.
Let $\pi$
be a $\slice$-subgraph of $(G \setminus Y,\bdfun)$.
Let $A_1,A_2,\ldots,A_r$ be the connected components of $H[\inte \slice]$.
Define $H_i = N_H[A_i]$, and observe that each $H_i$ is a chunk
with $\bd H_i = N_H(A_i) \subseteq \bd \slice$.
We define $\bdfun_i = \bdfun_\slice|_{\bd H_i}$ to obtain a $t$-chunk $\chunk_i = (H_i,\bdfun_i)$.
By the properties of a $t$-slice, each vertex of $\slice$ is present in at least
one graph $\chunk_i$, and vertices of $\bd \slice$ may be present in more than one.

We now inductively define injective 
homomorphisms $\pi_0,\pi_1,\ldots,\pi_r$ such that
$\pi_i$ maps the subgraph of $\slice$ induced by $\bd \slice \cup \bigcup_{j \leq i} A_j$
to $(\wG \setminus Y,\bdfun)$, and does not use any vertex
of $\bigcup_{j > i}\pi(A_j)$. Observe that $\pi_r$ is
a $\slice$-subgraph of $(\wG \setminus Y,\bdfun)$.
Hence, this construction will conclude the proof of the lemma.

For the base case, recall that $\pi(\bd \slice) \subseteq \bd G = \bd \wG$ and define $\pi_0 = \pi|_{\bd \slice}$.
For the inductive case, assume that
$\pi_{i-1}$ has been constructed for some $1 \leq i \leq r$.
Define 
$$Z_i = Y \cup \pi(\bd \slice \setminus \bd H_i) \cup \bigcup_{j < i}\pi_{i-1}(A_j) \cup \bigcup_{j > i} \pi(A_j).$$
Note that since $\pi$ and $\pi_{i-1}$ are injective and $Y$ is disjoint with $\pi(\bd \slice)$, then we have that $Z_i\cap \pi(\bd \slice) = \pi(\bd \slice \setminus \bd H_i)$.
This observation and the inductive assumption on $\pi_{i-1}$ imply that the mapping $\pi|_{V(H_i)}$ does not use any vertex of $Z_i$.
Thus, $\pi|_{V(H_i)}$ is a $\chunk_i$-subgraph in $(G \setminus Z_i, \bdfun)$.
Observe moreover that $|Z_i| \leq |Y|+|V(\slice)|\leq |V(H)|$.
By Claim~\ref{cl:witness:chunk}, there exists a $\chunk_i$-subgraph $\pi_i'$ in
$(\wG \setminus Z_i,\bdfun)$. Observe that, since $\pi_i'$ and $\pi_{i-1}$
are required to preserve labelings on boundaries of their preimages, 
$\pi_i := \pi_i' \cup \pi_{i-1}$ is a function and a homomorphism.
Moreover, by the definition of $Z_i$, $\pi_i$ is injective
and does not use any vertex of $\bigcup_{j > i}\pi(A_j)$.
Hence, $\pi_i$ satisfies all the required conditions, and the inductive construction is completed.
This concludes the proof of the lemma.
\maybeqed\end{proof}

\subsection{The dynamic programming algorithm}\label{ss:std-algo}

Using Lemma~\ref{lem:witness}, we now define states of the dynamic programming algorithm
on the input tree decomposition $(\Tree,\Tbag)$.
For every node $w \in V(\Tree)$, a \emph{state}
is a pair $\state = (\wX,\wG)$ where $\wX \subseteq \Tbag(w)$
and $\wG$ is a graph with $\Oh(t^{\mnei(H)})$ vertices and edges
such that $\Tbag(w) \setminus \wX \subseteq V(\wG)$ and $\wG[\Tbag(w) \setminus \wX] = G[\Tbag(w) \setminus \wX]$.
We treat $\wG$ as a $t$-boundaried graph with $\bd \wG = \Tbag(w) \setminus \wX$
and labeling $\labfun|_{\Tbag(w) \setminus \wX}$.
We say that
a set $X \subseteq \Tdown(w)$ is \emph{feasible} for $w$ and $\state$ if
$(G[\Tall(w) \setminus (X \cup \wX)], \labfun|_{\Tbag(w) \setminus \wX})$ 
is a witness-subgraph of
$(\wG,\labfun|_{\Tbag(w) \setminus \wX})$.
For every $w$ and every state $\state$, we would like to compute $T[w,\state]$,
the minimum possible size of a feasible set $X$.
Note that by Lemma~\ref{lem:witness} the answer to the input \shit{} instance is the minimum value of $T[\mathtt{root}(\Tree),(\emptyset, \wG)]$
where $\wG$ iterates over all graphs with $\Oh(t^{\mnei(H)})$ vertices and edges that do not contain the $t$-slice $(H,\emptyset)$ as a subgraph.
Hence, it remains to show how to compute the values $T[w,\state]$ in a bottom-up manner in the tree decomposition.

Observe that, if we have two $t$-boundaried graphs $(G_1,\bdfun_1)$ and $(G_2,\bdfun_2)$ of size
$t^{\Oh(1)}$ each (e.g., they are parts of a state, or they were obtained from Lemma~\ref{lem:witness}),
a brute-force algorithm checks the relation of being a witness-subgraph in $t^{\Oh(1)}$ time.

We start the description with the following auxiliary definition.
Let $P \subseteq V(H)$. Observe that $N_H[\inte H[P]] \subseteq P$ and $H[N_H[\inte H[P]]]$
is the unique inclusion-wise maximal slice with vertex set being a subset of $P$.
We call the slice $H[N_H[\inte H[P]]]$ the \emph{core slice} of $P$, and the remaining
vertices $P \setminus N_H[\inte H[P]]$ the \emph{peelings} of $P$.
Observe that $\inte H[N_H[\inte H[P]]] = \inte H[P]$, 
$\bd H[N_H[\inte H[P]]] \subseteq \bd H[P]$, and $\bd H[P] \setminus \bd H[N_H[\inte H[P]]]$ equals the peelings of $P$.

In the description, for brevity, we use $\emptyset$ to denote not only an empty set,
but also an empty graph. Moreover, for a $t$-boundaried graph $(G,\bdfun)$
and a set $Y \subseteq V(G)$, we somewhat abuse the notation and write
$(G \setminus Y,\bdfun)$
for the $t$-boundaried graph $(G \setminus Y,\bdfun|_{\bd G \setminus Y})$.

We assume we are given a \emph{nice} tree decomposition $(\Tree,\Tbag)$ of the
input graph $G$, and a labeling $\labfun:V(G) \to \{1,2,\ldots,t\}$ that
is injective on every bag.
We now describe how to conduct computations in each of the four types of nodes
in the tree decomposition $(\Tree,\Tbag)$.
In all cases, it will be clear from the description 
that all computations for a single node $w$ can be done
in time polynomial in $t$ and the number of states per one node,
and hence we do not further discuss the time complexity of the algorithm.

\medskip

\noindent\textbf{Leaf node.}
It is immediate that $\emptyset$ is feasible for every leaf node $w$
and every state $\state$ at $w$, and hence $T[w,\state] = 0$ is the correct value.

\medskip

\noindent\textbf{Introduce node.}
Consider now an introduce node $w$ with child $w'$, and the unique vertex
$v \in \Tbag(w) \setminus \Tbag(w')$.
Furthemore, consider a single state $\state = (\wX,\wG)$ at node $w$;
we are to compute $T[w,\state]$.

First, consider the case $v \in \wX$.
Then, it is straightforward to observe that $\state' := (\wX \setminus \{v\}, \wG)$
is a state for $w'$, and the families of feasible sets for state $\state$ and $w$
and for state $\state'$ and $w'$ are equal. Thus, $T[w,\state] = T[w',\state']$ and we are done.

Consider now the case $v \notin \wX$.
Define $\bdfun := \labfun|_{\Tbag(w) \setminus \wX}$ and $\bdfun' := \labfun|_{\Tbag(w') \setminus \wX}$.
We compute a family $\statefam$ of states at the node $w'$ and define $T[w,\state] = \min_{\state' \in \statefam} T[w',\state']$.
We are going to prove that 
\begin{enumerate}
\item for every $\state' \in \statefam$, and every $X$ that is feasible for $w'$ and $\state'$, $X$ is also feasible for $w$ and $\state$;\label{p:std:intro1}
\item for every $X$ that is feasible for $w$ and $\state$, there exists $\state' \in \statefam$ such that $X$ is also feasible for $w'$ and $\state'$.\label{p:std:intro2}
\end{enumerate}
Observe that such properties of $\statefam$ will imply the correctness of the computation of $T[w,\state]$. We now proceed to the construction of $\statefam$.

We iterate through all states $\state' = (\wX',\wG')$ for the node $w'$ and insert $\state'$ into $\statefam$ if the following conditions hold.
First, we require $\wX = \wX'$. Second, we construct a graph $\wG'_v$ by adding the vertex $v$ to $\wG'$,
and making $v$ adjacent to all vertices $u \in \bd \wG' = \Tbag(w') \setminus \wX$ for which $vu \in E(G)$.
Observe that $\wG'_v [\Tbag(w) \setminus \wX] = \wG[\Tbag(w) \setminus \wX]= G[\Tbag(w) \setminus \wX]$, and $v$ has exactly the same neighbourhood in $\wG'_v$ and in $\wG$.
To include $\state'$ in $\statefam$, we require that $(\wG'_v, \bdfun)$ is a witness-subgraph of $(\wG,\bdfun)$.

We first argue about Property~\ref{p:std:intro1}.
Let $\state' = (\wX, \wG') \in \statefam$ and let $X$ be feasible for $w'$ and $\state'$; we are to prove that $X$ is also feasible for $w$ and $\state$.
To this end, consider a set $Y \subseteq \Tbag(w) \setminus \wX$ and a $t$-slice $\slice$ such that $|Y| + |V(\slice)| \leq |V(H)|$.
Let $\pi$ be a $\slice$-subgraph in $(G[\Tall(w) \setminus (X \cup \wX \cup Y)], \bdfun)$.
Let $P = \pi^{-1}(\Tall(w'))$; note that in particular $v\notin \pi(P)$. Let $\slice'$ be the core slice of $P$, and let $Q$ be the peelings of $P$.
Observe that for every $a \in \bd P = Q \cup \bd \slice'$ we have $\pi(a) \neq v$ and, moreover, either $a \in \bd \slice$ or $\pi(a) \in N_{G[\Tall(w)]}(v)$.
In both cases, $\pi(a) \in \Tbag(w') \setminus (\wX \cup Y)$
and we may define $\bdfun_P(a) := \labfun(\pi(a))$. Note that with the labeling $\bdfun_P|_{\bd \slice'}$, the slice $\slice'$ becomes a $t$-slice,
   and $\pi|_{V(\slice')}$ is a $\slice'$-subgraph of $(G[\Tall(w') \setminus (X \cup \wX \cup Y \cup \pi(Q))],\bdfun')$.
Since $X$ is feasible for $w'$ and $\state'$, and $|Y| + |\pi(Q)| + |V(\slice')| \leq |Y| + |V(\slice)| \leq |V(H)|$,
we have that there exists a $\slice'$-subgraph $\pi'$ in $(\wG' \setminus (Y \cup \pi(Q)),\bdfun')$.
As no vertex of $Y$, $\pi(Q)$ nor $v$ belongs to the image of $\pi'$, and vertices of $Q$ are not adjacent to the vertices of $\inte \slice'$ in $H$,
a direct check shows that $\pi' \cup \pi|_{V(\slice) \setminus V(\slice')}$ is a $\slice$-subgraph of $(\wG'_v \setminus Y,\bdfun)$.
Recall that we require that $(\wG'_v,\bdfun)$ is a witness-subgraph of $(\wG,\bdfun)$.
Consequently, there exists a $\slice$-subgraph of $(\wG \setminus Y,\bdfun)$.
Since the choice of $Y$ and $\slice$ was arbitrary, $X$ is feasible for $w$ and $\state$.

For Property~\ref{p:std:intro2}, let $X$ be a feasible set for $w$ and $\state$.
Let $\wG'$ be the witness graph, whose existence is guaranteed by Lemma~\ref{lem:witness} for the graph $(G[\Tall(w') \setminus (X \cup \wX)], \bdfun')$.
Define $\state' = (\wX, \wG')$. Observe that $\state'$ is a valid state for $w'$.
By definition of $\wG'$, the set $X$ is feasible for $w'$ and $\state'$.
It remains to show that $\state' \in \statefam$, that is, that $(\wG'_v,\bdfun)$ is a witness-subgraph of $(\wG,\bdfun)$.

To this end, consider a set $Y \subseteq \Tbag(w) \setminus \wX$ and a $t$-slice $\slice$ such that $|Y| + |V(\slice)| \leq |V(H)|$.
Let $\pi$ be a $\slice$-subgraph in $(\wG'_v \setminus Y,\bdfun)$.
Similarly as before, let $P = \pi^{-1}(V(\wG'))$; again, note that $v\notin \pi(P)$. Let $\slice'$ be the core slice of $P$, and let $Q$ be the peelings of $P$.
Observe that for every $a \in \bd P = Q \cup \bd \slice'$ we have $\pi(a) \neq v$ and, moreover, either $a \in \bd \slice$ or $\pi(a) \in N_{\wG'_v}(v)$.
In both cases, $\pi(a) \in \Tbag(w') \setminus (\wX \cup Y)$,
and we may define $\bdfun_P(a) := \labfun(\pi(a))$. Note that with the labeling $\bdfun_P|_{\bd \slice'}$, the slice $\slice'$ becomes a $t$-slice,
   and $\pi|_{V(\slice')}$ is a $\slice'$-subgraph of $(\wG' \setminus (Y \cup \pi(Q)),\bdfun')$.
Since $|Y| + |\pi(Q)| + |V(\slice')| \leq |Y| + |V(\slice)| \leq |V(H)|$, 
by the properties of the witness graph guaranteed by Lemma~\ref{lem:witness}, we have that there exists a $\slice'$-subgraph $\pi'$ of
$(G[\Tall(w') \setminus (X \cup \wX \cup Y \cup \pi(Q))], \bdfun')$.
As no vertex of $Y$, $\pi(Q)$ nor $v$ belongs to the image of $\pi'$, and vertices of $Q$ are not adjacent to the vertices of $\inte \slice'$ in $H$,
a direct check shows that $\pi' \cup \pi|_{V(\slice) \setminus V(\slice')}$ is a $\slice$-subgraph of $(G[\Tall(w) \setminus (X \cup \wX \cup Y)],\bdfun)$.
Since $X$ is feasible for $w$ and $\state$, there exists a $\slice$-subgraph of $(\wG \setminus Y, \bdfun)$.
As the choice of $Y$ and $\slice$ was arbitrary, $(\wG'_v,\bdfun)$ is a witness-subgraph of $(\wG,\bdfun)$, and $\state' \in \statefam$.

This finishes the proof of the correctness of computations at an introduce node.

\medskip

\noindent\textbf{Forget node.}
Consider now a forget node $w$ with child $w'$, and the unique vertex $v \in \Tbag(w') \setminus \Tbag(w)$.
Let $\state = (\wX,\wG)$ be a state for $w$; we are to compute $T[w,\state]$.
Define $\bdfun := \labfun|_{\Tbag(w) \setminus \wX}$ and $\bdfun' := \labfun|_{\Tbag(w') \setminus \wX}$.

First, observe that $\state^v := (\wX \cup \{v\},\wG)$ is also a valid state for the node $w'$.
Note that $X \cup \{v\}$ is feasible for $\state$ if and only if $X$ is feasible for $w'$ and $\state^v$:
the question of feasibility of $X \cup \{v\}$ for $w$ and $\state$ and $X$ for $w'$ and $\state^v$ in fact inspects the same subgraph of $G$.
Consequently, we take $1+T[w',\state^v]$ as one candidate value for $T[w,\state]$.

In the remainder of the computations for the forget node we identify a family $\statefam$ of valid states for the node $w'$.
We prove that
\begin{enumerate}
\item for every $\state' \in \statefam$, and every $X$ that is feasible for $w'$ and $\state'$, $X$ is also feasible for $w$ and $\state$;\label{p:std:forget1}
\item for every $X$ that is feasible for $w$ and $\state$, and such that $v \notin X$, there exists $\state' \in \statefam$
such that $X$ is also feasible for $w'$ and $\state'$.\label{p:std:forget2}
\end{enumerate}
This claim will imply that 
$$T[w,\state] = \min(1+T[w',\state^v], \min_{\state' \in \statefam} T[w',\state']).$$

The family $\statefam$ is defined as follows. We iterate through all valid states $\state' = (\wX',\wG')$ for the node $w'$.
First, we require $\wX' = \wX$. Second, we define the graph $\wG'_v$ as the graph $\wG'$ with the label $\labfun(v)$ of the node $v$
forgotten, that is, $\wG'_v$ and $\wG'$ are equal as simple graphs and $\bd \wG'_v = \bd \wG' \setminus \{v\} = \Tbag(w) \setminus \wX$.
To include $\state'$ in $\statefam$, we require that $(\wG'_v,\bdfun)$ is a witness-subgraph of $(\wG,\bdfun)$.

For Property~\ref{p:std:forget1}, 
let $\state' = (\wX, \wG') \in \statefam$ and let $X$ be feasible for $w'$ and $\state'$; we are to prove that $X$ is also feasible for $w$ and $\state$.
To this end, consider a set $Y \subseteq \Tbag(w) \setminus \wX$ and a $t$-slice $\slice$ such that $|Y| + |V(\slice)| \leq |V(H)|$.
Let $\pi$ be a $\slice$-subgraph in $(G[\Tall(w) \setminus (X \cup \wX \cup Y)], \bdfun)$.
Note that $\pi$ is also a $\slice$-subgraph in $(G[\Tall(w') \setminus (X \cup \wX \cup Y)], \bdfun')$.
Since $X$ is feasible for $w'$ and $\state'$, there exists a $\slice$-subgraph $\pi'$ in $(\wG' \setminus Y,\bdfun')$.
As $\slice$ does not use the label $\labfun(v)$, by the definition of $\wG'_v$, $\pi'$
is also a $\slice$-subgraph of $(\wG'_v \setminus Y,\bdfun)$.
Since $(\wG'_v,\bdfun)$ is a witness-subgraph of $(\wG,\bdfun)$, there exists a $\slice$-subgraph
in $(\wG\setminus Y,\bdfun)$.
Since the choice of $Y$ and $\slice$ was arbitrary, $X$ is feasible for $w$ and $\state$ and Property~\ref{p:std:forget1} is proven.

For Property~\ref{p:std:forget2}, let $X$ be a feasible set for $w$ and $\state$, and assume $v \notin X$.
Let $\wG'$ be the witness graph, whose existence is guaranteed by Lemma~\ref{lem:witness}, for the graph $(G[\Tall(w') \setminus (X \cup \wX)],\bdfun')$.
Define $\state' = (\wX, \wG')$. Observe that $\state'$ is a valid state for $w'$.
By definition of $\wG'$, the set $X$ is feasible for $w'$ and $\state'$.
It remains to show that $\state' \in \statefam$, that is, that $(\wG'_v,\bdfun)$ is a witness-subgraph of $(\wG,\bdfun)$.

To this end, consider a set $Y \subseteq \Tbag(w) \setminus \wX$ and a $t$-slice $\slice$ such that $|Y| + |V(\slice)| \leq |V(H)|$.
Let $\pi$ be a $\slice$-subgraph in $(\wG'_v \setminus Y,\bdfun)$.
By the definition of $\wG'_v$, $\pi$ is also a $\slice$-subgraph of $(\wG' \setminus Y,\bdfun')$.
By the properties of the witness graph of Lemma~\ref{lem:witness}, there exists a $\slice$-subgraph $\pi'$ of $(G[\Tall(w') \setminus (X \cup \wX \cup Y)],\bdfun')$.
As $\slice$ does not use the label $\labfun(v)$, $\pi'$ is also a $\slice$-subgraph of $(G[\Tall(w) \setminus (X \cup \wX \cup Y)],\bdfun)$.
Since $X$ is feasible for $w$ and $\state$, there exists a $\slice$-subgraph of $(\wG,\bdfun)$.
As the choice of $Y$ and $\slice$ was arbitrary, $(\wG'_v,\bdfun)$ is a witness-subgraph of $(\wG,\bdfun)$ and, consequently, $\state' \in \statefam$.

This finishes the proof of the correctness of the computations at the forget node.

\medskip

\noindent\textbf{Join node.}
Let $w$ be a join node with children $w_1$ and $w_2$, and let $\state = (\wX,\wG)$
be a state for $w$. Define $\bdfun = \labfun|_{\Tbag(w) \setminus \wX}$.

Our goal is to define a family $\statefam$ of pairs of states $(\state_1,\state_2)$
such that $\state_i$ is a valid state for the node $w_i$ for $i=1,2$, and:
\begin{enumerate}
\item for every $(\state_1,\state_2) \in \statefam$, and every
pair of sets $X_1,X_2$, such that $X_i$ is feasible for $w_i$ and $\state_i$, $i=1,2$, the set $X := X_1 \cup X_2$ is feasible for $w$ and $\state$;\label{p:std:join1}.
\item for every $X$ that is feasible for $w$ and $\state$, there exists a pair $(\state_1,\state_2) \in \statefam$
such that the set $X_i := X \cap \Tdown(w_i)$ is feasible for $w_i$ and $\state_i$, $i=1,2$.\label{p:std:join2}
\end{enumerate}
This claim will imply that
$$T[w,\state] = \min_{(\state_1,\state_2) \in \statefam} T[w_1,\state_1] + T[w_2,\state_2].$$

The family $\statefam$ is defined as follows. We iterate through all pairs $(\state_1,\state_2)$
such that $\state_i = (\wX_i,\wG_i)$ is a valid state for the node $w_i$, $i=1,2$.
First, we require $\wX = \wX_1 = \wX_2$.
Second, we define the graph $\wG_1 \oplus \wG_2$ as a disjoint union of the graphs $\wG_1$
and $\wG_2$ with the boundaries $\bd \wG_1 = \bd \wG_2 = \Tbag(w) \setminus \wX$ identified.
To include $(\state_1,\state_2)$ into $\statefam$, we require that $(\wG_1 \oplus \wG_2, \bdfun)$ is a witness-subgraph of $(\wG,\bdfun)$.

For Property~\ref{p:std:join1}, let $(\state_1,\state_2) \in \statefam$, $\state_i = (\wX,\wG_i)$ for $i=1,2$.
Recall that $X_i$ is feasible for $w_i$ and $\state_i$, $i=1,2$, and $X = X_1 \cup X_2$; we are to prove that $X$ is feasible for $w$ and $\state$.
To this end, let $Y \subseteq \Tbag(w) \setminus \wG$ and let $\slice$ be a $t$-slice such that $|Y| + |V(\slice)| \leq |V(H)|$.
Assume there exists a $\slice$-subgraph $\pi$ in $(G[\Tall(w) \setminus (X \cup \wX \cup Y)],\bdfun)$.

First, let $P_1 = \pi^{-1}(\Tall(w_1))$.
Let $\slice_1$ be the core slice of $P_1$, and let $Q_1$ be the peelings of $P_1$.
Observe that for every $a \in \bd P_1 = Q_1 \cup \bd \slice_1$ we have $\pi(a) \in \Tbag(w) \setminus (\wX \cup Y)$, and
we may define $\bdfun_1(a) := \labfun(\pi(a))$. Note that with the labeling $\bdfun_1|_{\bd \slice_1}$, the slice $\slice_1$ becomes a $t$-slice,
   and $\pi|_{V(\slice_1)}$ is a $\slice_1$-subgraph of $(G[\Tall(w_1) \setminus (X_1 \cup \wX \cup Y \cup \pi(Q_1))],\bdfun)$.
Since $X_1$ is feasible for $w_1$ and $\state_1$, and $|Y| + |\pi(Q_1)| + |V(\slice_1)| \leq |Y| + |V(\slice)| \leq |V(H)|$,
we have that there exists a $\slice_1$-subgraph $\pi_1$ in $(\wG_1 \setminus (Y \cup \pi(Q_1)),\bdfun)$.
As no vertex of $Y \cup \pi(Q_1)$ belongs to the image of $\pi_1$, and vertices of $Q_1$ are not adjacent to the vertices of $\inte \slice_1$ in $H$,
a direct check shows that $\pi_\circ := \pi_1 \cup \pi|_{V(\slice) \setminus V(\slice_1)}$ is a $\slice$-subgraph of $((\wG_1 \oplus G[\Tall(w_2) \setminus (X_2 \cup \wX)]) \setminus Y,\bdfun)$.

We now perform the same operation for $w_2$ and $\pi_\circ$. That is, let $P_2 = \pi_\circ^{-1}(\Tall(w_2))$.
Let $\slice_2$ be the core slice of $P_2$, and let $Q_2$ be the peelings of $P_2$.
Again, we have $\pi_\circ(\bd P_2) \subseteq \Tbag(w) \setminus (\wX \cup Y)$
and we define $\bdfun_2 := \labfun|_{\bd P_2}$, turning $\slice_2$ into a $t$-slice.
The mapping $\pi_\circ|_{V(\slice_2)}$ is a $\slice_2$-subgraph of $(G[\Tall(w_2) \setminus (X_2 \cup \wX \cup Y \cup \pi(Q_2))],\bdfun)$.
Since $X_2$ is feasible for $w_2$ and $\state_2$, and $|Y| + |\pi(Q_2)| + |V(\slice_2)| \leq |Y| + |V(\slice)| \leq |V(H)|$,
we have that there exists a $\slice_2$-subgraph $\pi_2$ in $(\wG_2 \setminus (Y \cup \pi(Q_2)),\bdfun)$.
Similarly as before, a direct check shows that $\pi_2 \cup \pi_\circ|_{V(\slice) \setminus V(\slice_2)}$ is a $\slice$-subgraph of $((\wG_1 \oplus \wG_2) \setminus Y,\bdfun)$.

Recall that we require that $(\wG_1 \oplus \wG_2,\bdfun)$ is a witness-subgraph of $(\wG,\bdfun)$.
Consequently, there exists a $\slice$-subgraph in $(\wG \setminus Y,\bdfun)$.
Since the choice of $Y$ and $\slice$ was arbitrary, $X$ is feasible for $w$ and $\state$.

For Property~\ref{p:std:join2}, let $X$ be a feasible set for $w$ and $\state$.
For $i=1,2$, recall that $X_i = X \cap \Tdown(w_i)$ and
let $\wG_i$ be the witness graph whose existence is guaranteed by Lemma~\ref{lem:witness}
for the graph $(G[\Tall(w_i) \setminus (X_i \cup \wX)],\bdfun)$.
Observe that $\state_i := (\wX,\wG_i)$ is a valid state for the node $w_i$.
Moreover, by definition, $X_i$ is feasible for $w_i$ and $\state_i$.
To finish the proof of Property~\ref{p:std:join2}, it suffices to show
that $(\state_1,\state_2) \in \statefam$, that is, $(\wG_1 \oplus \wG_2,\bdfun)$ is a witness-subgraph of $(\wG,\bdfun)$.

To this end, consider a set $Y \subseteq \Tbag(w) \setminus \wX$ and a $t$-slice $\slice$ such that $|Y| + |V(\slice)| \leq |V(H)|$.
Let $\pi$ be a $\slice$-subgraph in $(\wG_1 \oplus \wG_2,\bdfun)$.
Let $P_1 = \pi^{-1}(V(\wG_1))$, let $\slice_1$ be the core slice of $P_1$, and let $Q_1$ be the peelings of $P_1$.
Observe that for every $a \in \bd P_1 = Q_1 \cup \bd \slice_1$ we have $\pi(a) \in \Tbag(w) \setminus (\wX \cup Y)$,
and we may define $\bdfun_1(a) := \labfun(\pi(a))$. Note that with the labeling $\bdfun_1|_{\bd \slice_1}$, the slice $\slice_1$ becomes a $t$-slice,
and $\pi|_{V(\slice_1)}$ is a $\slice_1$-subgraph of $(\wG_1 \setminus (Y \cup \pi(Q_1)),\bdfun)$.
As $|Y| + |\pi(Q_1)| + |V(\slice_1)| \leq |Y| + |V(\slice)| \leq |V(H)|$, 
by the properties of the witness graph of Lemma~\ref{lem:witness},
there exists a $\slice_1$-subgraph $\pi_1'$ of $(G[\Tall(w_1) \setminus (X_1 \cup \wX \cup Y \cup \pi(Q_1))],\bdfun)$.
As no vertex of $Y \cup \pi(Q_1)$ belongs to the image of $\pi_1'$, and vertices of $Q_1$ are not adjacent to the vertices of $\inte \slice_1$ in $H$,
a direct check shows that $\pi_\circ := \pi_1' \cup \pi|_{V(\slice) \setminus V(\slice_1)}$
is a $\slice$-subgraph of $((G[\Tall(w_1) \setminus (X_1 \cup \wX)] \oplus \wG_2) \setminus Y,\bdfun)$.

We now perform a similar operation for $w_2$ and $\pi_\circ$.
Let $P_2 = \pi_\circ^{-1}(V(\wG_2))$, let $\slice_2$ be the core slice of $P_2$, and let $Q_2$ be the peelings of $P_2$.
Since $\pi_\circ(\bd P_2) \subseteq \Tbag(w) \setminus (\wX \cup Y)$, we define $\bdfun_2 := \labfun|_{\bd P_2}$, turning $\slice_2$ into a $t$-slice.
The mapping $\pi|_{V(\slice_2)}$ is a $\slice_2$-subgraph of $(\wG_2 \setminus (Y \cup \pi(Q_2)),\bdfun)$.
As $|Y| + |\pi(Q_2)| + |V(\slice_2)| \leq |Y| + |V(\slice)| \leq |V(H)|$, 
by the properties of the witness graph of Lemma~\ref{lem:witness},
there exists a $\slice_2$-subgraph $\pi_2'$ of $(G[\Tall(w_2) \setminus (X_2 \cup \wX \cup Y \cup \pi(Q_2))],\bdfun)$.
Similarly as before, 
a direct check shows that $\pi_2' \cup \pi_\circ|_{V(\slice) \setminus V(\slice_2)}$
is a $\slice$-subgraph of $(G[\Tall(w) \setminus (X \cup \wX \cup Y)],\bdfun)$.

Since $X$ is feasible for $w$ and $\state$, there exists a $\slice$-subgraph of $(\wG,\bdfun$).
As the choice of $Y$ and $\slice$ was arbitrary, $(\wG_1 \oplus \wG_2,\bdfun)$ is a witness-subgraph of $(\wG,\bdfun)$ and $(\state_1,\state_2) \in \statefam$.

This finishes the proof of the correctness of computations at a join node.

\medskip

This finishes the description of the dynamic programming algorithm
for Theorem~\ref{thm:std:algo}, and concludes its proof.

\section{Discussion on special cases of \shit{}}\label{sec:std-discussion}

As announced in the introduction, we now discuss a few special cases
of \shit{}.

\subsection{Hitting a path}
First, let us consider $H$ being a path, $H = P_h$ for some $h \geq 3$. 
Note that  $\msep(P_h) = 1$, while $\mnei(P_h) = 2$ for $h\geq 5$.
Observe that in the dynamic programming algorithm of the previous section
we have that $G[\Tall(w) \setminus (X \cup X_w)]$ does not contain an $H$-subgraph
and, hence, the witness graph obtained through Lemma~\ref{lem:witness}
does not contain an $H$-subgraph as well.
However, graphs exluding $P_h$ as a subgraph have treedepth (and hence treewidth as well) bounded by $h$
(since any their depth-first search tree has depth bounded by $h$). 
Using this insight, we can derive the following
improvement of Lemma~\ref{lem:witness}, that
improves the running time of Theorem~\ref{thm:std:algo}
to $2^{\Oh(t \log t)} |V(G)|$ for $H$ being a path.
\begin{lemma}\label{lem:witness-path}
Assume $H$ is a path.
Then, for any $t$-boundaried graph $(G,\bdfun)$ that does not contain an $H$-subgraph,
  there exists a witness graph as in Lemma~\ref{lem:witness} with $\Oh(t)$ vertices and edges.
\end{lemma}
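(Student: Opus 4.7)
The plan is to reuse the $\operadd$-based construction from the proof of Lemma~\ref{lem:witness}, but with a sharper analysis that exploits the bounded treedepth of $G$. Since $G$ contains no $P_h$-subgraph, any depth-first search tree of $G$ has depth at most $h-1$, so both $G$ and every subgraph $\wG \subseteq G$ have treedepth at most $h-1$. A standard consequence is the edge bound $|E(\wG)| \leq (h-1) \cdot |V(\wG)|$, so it suffices to produce a witness $\wG$ (satisfying properties (a), (b) of Lemma~\ref{lem:witness}) with $|V(\wG)| = \Oh(t)$.

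To achieve this, I would restrict the loop of Lemma~\ref{lem:witness} to $t$-chunks $\chunk$ of $P_h$ with $|\bd \chunk| \leq \msep(P_h) = 1$, i.e., the one-boundary chunks. For $P_h$ there is only a constant number of chunk shapes with a single boundary vertex, and each admits at most $t$ labelings of its unique boundary vertex, so the loop performs $\operadd(\chunk,\emptyset)$ only $\Oh(t)$ times. Each invocation adds at most $|V(H)|^{|V(H)|+1} = \Oh(1)$ new vertices and edges (a constant for fixed $H$), hence $|V(\wG)| = \Oh(t)$, and combined with the treedepth edge bound $|E(\wG)| = \Oh(t)$ as well.

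The bulk of the proof would then be verifying that this restricted construction still satisfies the witness-subgraph property, even though the two-boundary $t$-chunks of $P_h$ (those of the form $\{v_i, v_{i+1}, \ldots, v_j\}$ with $\bd \chunk = \{v_i, v_j\}$, corresponding to simple paths of specified length between two designated boundary vertices) were not explicitly iterated over. The $\chunk$-subgraph in $G$ for such a two-boundary chunk is just a simple path of length less than $h$ between $u_{\ell_1}$ and $u_{\ell_2}$; because $G$ is $P_h$-free, such a path is short and sits in a structurally restricted part of $G$ dictated by the shallow elimination tree. The intuition is that the branching recursion of $\operadd$ for one-boundary chunks rooted at $u_{\ell_1}$, with depth and branching both $|V(H)|$, explores a rich-enough family of outgoing paths from $u_{\ell_1}$ that some branch discovers a path ending at every $u_{\ell_2}$ reachable via a short path, thereby placing the corresponding two-boundary witness into $\wG$.

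The main obstacle is making this coverage claim rigorous. The most natural route seems to be an induction on the depth of the elimination tree of $G$: removing its root splits $G$ into components of strictly smaller treedepth, to which an inductive witness construction applies, and the inductive step must carefully handle two-boundary path-witnesses that cross the root. An alternative is a direct combinatorial matching argument that, for each two-boundary $\chunk$-subgraph $\pi$ in $G$, identifies the branch of the $\operadd$-recursion tree for a suitable one-boundary chunk that ends up adding an image subsuming $\pi(\chunk)$; the bounded-treedepth elimination tree should then supply enough structural control to certify that such a branch must be explored by the recursion.
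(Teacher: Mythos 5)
There is a genuine gap, and it sits exactly where you flag the ``main obstacle'': the claim that running $\operadd$ only for the one-boundary chunks of $P_h$ still yields a graph satisfying the witness property. This claim is not merely unproven in your write-up --- it is false. The extended profile that Lemma~\ref{lem:witness} must preserve ranges over \emph{all} $t$-slices with at most $|V(H)|$ vertices, and for $H=P_h$ this includes the two-boundary slices $\slice[\{v_i,\dots,v_j\}]$ with $\bd\slice=\{v_{i},v_{j}\}$, i.e., labeled short paths between two prescribed boundary vertices. The one-boundary recursion rooted at a boundary vertex $u$ only guarantees that \emph{some} family of short paths emanating from $u$ survives; it exerts no control over where their far endpoints land. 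Concretely, take $G$ consisting of two boundary vertices $u_1,u_2$, a common neighbor $x$, and $h+1$ further pendant neighbors attached to each of $u_1$ and $u_2$. The slice with interior $\{v_3\}$ and boundary $\{v_2,v_4\}$ labeled by $\bdfun(u_1),\bdfun(u_2)$ is realized in $G$ via $x$, but every branch of the depth-$|V(H)|$ recursion for the one-boundary chunks can be satisfied using only pendant vertices, so the restricted $\wG$ need not contain any common neighbor of $u_1$ and $u_2$. Neither of your two suggested repair routes (induction on the elimination tree, or a matching argument over the recursion tree) can rescue this, because the information that must be preserved genuinely concerns pairs of boundary vertices and is simply not queried by the restricted construction.

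The paper proves the lemma by an entirely different mechanism: it does not shrink the \emph{construction} but rather the \emph{output}. Starting from an arbitrary witness graph (e.g.\ the one from Lemma~\ref{lem:witness}), one repeatedly deletes non-boundary vertices as long as the witness property survives, and then bounds the size of any \emph{minimal} witness graph. The $P_h$-freeness is used as you intended --- every component has a DFS tree of depth less than $h$ rooted at a boundary vertex --- but the $\Oh(t)$ bound comes from a marking/exchange argument: for each vertex $v$ of the DFS tree, among the child subtrees containing no boundary vertex, one marks $\Oh(h^4)$ of them that realize every possible (length, pair of ancestors) pattern of a subpath passing through such a subtree; any unmarked subtree can then be deleted because every slice-subgraph using it can be rerouted through a marked sibling, contradicting minimality. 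Your first paragraph (treedepth $\le h$, hence $|E(\wG)|=\Oh(|V(\wG)|)$, so it suffices to bound the vertex count by $\Oh(t)$) is correct and consistent with the paper, but the heart of the proof is this exchange argument, which your proposal does not supply.
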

\begin{proof}
In this proof, by \emph{witness graph} we mean any graph $(\wG,\bdfun)$
that satisfies the requirements of Lemma~\ref{lem:witness},
for fixed input $(G,\bdfun)$ and a graph $H$ being a path.

A witness graph $(\wG,\bdfun)$ is \emph{minimal} if, for every $v \in V(\wG) \setminus \bd \wG$
the graph $(\wG \setminus \{v\},\bdfun)$ is not a witness graph.
We claim that, in the case $H = P_h$, every minimal witness graph $\wG$ has only $\Oh(t)$
vertices and edges, assuming $G$ does not contain any $H$-subgraph.
Clearly, this claim will prove Lemma~\ref{lem:witness-path}.

Fix a minimal witness graph $(\wG,\bdfun)$.
Since $H=P_h$ is connected, observe the following: any connected component $C$ of $\wG$ needs to contain
at least one vertex of $\bd \wG$, as otherwise $(\wG \setminus C,\bdfun)$
is a witness graph as well, a contradiction.
For each connected component $C$ of $\wG$, we fix some depth-first search
spanning tree $T_C$ of $\wG[C]$, rooted in some vertex $r_C \in \bd \wG \cap C$.
Since $G$ (and thus $\wG$) does not contain an $H$-subgraph,
the depth of $T_C$ is less than $h = \Oh(1)$. Since $T_C$ is a depth-first search tree, all the edges of $C$ connect vertices that are in ancestor-descendant relation in $T_C$; in other words, $C$ is a subgraph of the ancestor-descendant closure of $T_C$.

Consider any $v \in C$.
Let $v_1,v_2,\ldots,v_s$ be the children of $v$ in the tree $T_C$
and let $T_i$ be the subtree of $T_C$ rooted at $v_i$.
Without loss of generality, assume that there exists $0 \leq r \leq s$ such that
a tree $T_i$ contains a vertex of $\bd \wG$ if and only if $i > r$.
We claim that $r \leq h^4 = \Oh(1)$.

Let us first verify that this claim proves that $|C| = \Oh(|\bd \wG \cap C|)$, and hence summing up through the components $C$ will conclude the proof of Lemma~\ref{lem:witness-path}.
Since $T_C$ has depth less that $h$, and there are $|\bd \wG \cap C|$ vertices in $C$ that belong to $\bd \wG$, we infer that in $T_C$ there are at most $h|\bd \wG \cap C|$ vertices $u$ such that the subtree rooted at $u$ contains some vertex of $\bd \wG$. However, if we consider any vertex $u$ such that subtree rooted at $u$ does not contain any vertex of $\bd \wG$, then this subtree must be of size $\Oh(h^{4h})$: its depth is less than $h$, and every vertex has at most $h^4$ children. Therefore, the tree $T_C$ contains at most $h|\bd \wG \cap C|$ vertices $u$ whose subtrees contain vertices of $\bd \wG$, and each of these vertices has at most $h^4$ subtrees rooted at its children that are free from $\bd \wG$, and thus have size $\Oh(h^{4h})$. We infer that $|V(C)|=|V(T_C)|\leq h|\bd \wG \cap C|\cdot (1+h^4\cdot \Oh(h^{4h}))=\Oh(|\bd \wG \cap C|)$, because $h$ is considered a constant.

In order to prove that $r\leq h^4$, let us mark some of the trees $T_i$, $1 \leq i \leq r$.
Let $r_C = w_0, w_1, \ldots, w_p = v$ be the path between $r_C$ and $v$ in the tree $T_C$; note that we have $p<h$.
For any $0 \leq a < b \leq p$, and for any $3 \leq l < h$,
mark any $h$ trees $T_i$, $1 \leq i \leq r$, such that $G[\{w_a,w_b\} \cup T_i]$
contains a path between $w_a$ and $w_b$ with exactly $l$ vertices.
(If there is less than $h$ such trees, we mark all of them.)
For any $0 \leq a \leq p$, and for any $2 \leq l < h$, 
mark any $h$ trees $T_i$, $1 \leq i \leq r$, such that $G[\{w_a\} \cup T_i]$
contains a path with endpoint $w_a$ and with exactly $l$ vertices.
(Again, if there is less than $h$ such trees, we mark all of them.)
We claim that all trees $T_i$, $1 \leq i \leq r$ are marked;
as we have marked less than $h^4$ trees, this would conclude the proof of the lemma.

By contradiction, without loss of generality assume that $T_1$ is not marked.
We claim that $(\wG \setminus T_1,\bdfun)$ is a witness graph as well.
Consider any $Y \subseteq V(G)$ and $t$-slice $\slice$ such that $|Y| + |V(\slice)| \leq |V(H)|$ and assume there
exists a $\slice$-subgraph in $(G \setminus Y,\bdfun)$.
By the definition of a witness graph, there exists a $\slice$-subgraph
in $(\wG \setminus Y,\bdfun)$.
Let $\pi$ be such a subgraph that contains a minimum possible number of vertices of $T_1$
in its image. We claim that there are none, and $\pi$ is a
$\slice$-subgraph in $(\wG\setminus (Y \cup T_1),\bdfun)$ as well.
Assume the contrary. Since $T_1$ does not contain any vertex of $\bd \wG$,
there exists a subpath $H'$ of $\slice$, with $|V(H')| = l$ for some $l<h$,
such that either (a) both endpoints of $H'$ are mapped by $\pi$
to some $w_a,w_b$, $0 \leq a < b \leq p$, and the internal vertices
of $H'$ are mapped to $T_1$; or (b) one endpoint of $H'$
is mapped by $\pi$ to some $w_a$, $0 \leq a \leq p$, and all other vertices
of $H'$ are mapped to $T_1$. Since $T_1$ was not marked, we infer that in both cases there exist at least $h$ marked trees $T_i$ that also contain such a subpath $H'$.
Since the union of $Y$ and the image of $\pi$ has cardinality at most $h$, we infer that there exists a tree $T_i$ that was marked for the same
choice of $a,b$ and $l$ in case (a), or $a$ and $l$ in case (b),
and, furthermore, no vertex of $T_i$ is contained neither in $Y$ nor in the image of $\pi$.
We modify $\pi$ by remapping all vertices of $V(H') \cap \pi^{-1}(T_1)$
to the corresponding vertices of $T_i$.
In this manner we obtain a $\slice$-subgraph of $(\wG \setminus Y,\bdfun)$
with strictly less vertices in $T_1$ in its image, a contradiction
to the choice of $\pi$.
Hence, $\pi$ does not use any vertex of $T_1$, and is a
$\slice$-subgraph of $(\wG \setminus (Y \cup T_1), \bdfun)$.
This concludes the proof.
\maybeqed\end{proof}

\begin{corollary}
For every positive integer $h$, the \shitarg{P_h} problem can be solved in time 
$2^{\Oh(t \log t)} |V(G)|$ on a graph $G$ of treewidth $t$.
\end{corollary}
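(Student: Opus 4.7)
The plan is to combine Lemma~\ref{lem:witness-path} with the dynamic programming framework from Section~\ref{ss:std-algo}, essentially by re-running the same algorithm with a much tighter size bound on the witness graph stored in each state. For $H = P_h$, any feasible solution $X$ leaves $G[\Tall(w) \setminus (X \cup \wX)]$ free of $P_h$-subgraphs at every node $w$, so the hypothesis of Lemma~\ref{lem:witness-path} is met and the forgotten portion can be represented by a witness graph $\wG$ with only $\Oh(t)$ vertices and edges.

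First, I would restrict the DP state space to pairs $(\wX,\wG)$ where $\wG$ is a $t$-boundaried graph on $\Oh(t)$ vertices and $\Oh(t)$ edges; soundness of this restriction follows from the observation above, as any optimal solution corresponds to states of this form. Next, I would count the states: for $\wX$ there are $2^{\Oh(t)}$ choices, and for $\wG$ there are $2^{\Oh(t \log t)}$ possibilities, since one needs $\Oh(\log t)$ bits to identify each of the $\Oh(t)$ edges, plus $t^{\Oh(t)} = 2^{\Oh(t \log t)}$ choices to label its boundary injectively with labels from $\{1,\ldots,t\}$. The total number of states per node is therefore $2^{\Oh(t \log t)}$.

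The transitions for the four node types (leaf, introduce, forget, join) and their correctness proofs from Section~\ref{ss:std-algo} then transfer verbatim, since those arguments use witness graphs only as black boxes satisfying the equivalence property asserted by Lemma~\ref{lem:witness}, and both Lemma~\ref{lem:witness} and Lemma~\ref{lem:witness-path} deliver graphs with that property. Each node performs work polynomial in $t$ and in the number of states, and the nice tree decomposition has $\Oh(t|V(G)|)$ nodes, so the overall running time is $2^{\Oh(t \log t)} |V(G)|$, as claimed.

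The only delicate point in this plan is justifying that shrinking the witness-graph size bound in the state does not lose the optimum; but this is handled precisely by the remark that any feasible $X$ makes every intermediate graph $G[\Tall(w) \setminus (X \cup \wX)]$ $P_h$-free, whence the smaller witness guaranteed by Lemma~\ref{lem:witness-path} exists and the minimum DP values are preserved.
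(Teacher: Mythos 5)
Your proposal is correct and matches the paper's (largely implicit) argument: the corollary is obtained exactly by plugging the $\Oh(t)$-size witness graphs of Lemma~\ref{lem:witness-path} into the dynamic programming of Section~\ref{ss:std-algo}, observing that the transition arguments use the witness graph only as a black box and that the number of states drops to $2^{\Oh(t\log t)}$. Your handling of the one delicate point --- that restricting to $P_h$-free witness graphs loses nothing because any partial solution extendable to a global one leaves $G[\Tall(w)\setminus(X\cup\wX)]$ $P_h$-free --- is the same observation the paper makes just before stating Lemma~\ref{lem:witness-path}.
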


\subsection{Hitting pumpkins}

Now let us consider $H = K_{2,h}$ for some $h \geq 2$ (such a graph is sometimes called a \emph{pumpkin} in the literature).
Observe that $\mnei(K_{2,h}) = \msep(K_{2,h}) = h$.
On the other hand, we note the following.
\begin{lemma}\label{lem:witness-K2h}
Assume $H = K_{2,h}$ for some $h \geq 2$.
If the witness graph given by Lemma~\ref{lem:witness} does not admit an $H$-subgraph, then it
has $\Oh(t^2)$ vertices and edges.
\end{lemma}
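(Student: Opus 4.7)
The plan is to exploit the fact that for $H = K_{2,h}$ most chunks already span all of $V(H)$, so under the $H$-freeness hypothesis they cannot contribute anything to $\wG$. Denote the two degree-$h$ vertices of $K_{2,h}$ by $a$ and $b$, and its degree-$2$ vertices by $c_1,\ldots,c_h$. I would first observe that the hypothesis that $\wG$ is $H$-free in fact upgrades to: $G$ itself is $H$-free. Indeed, the call $\operadd(\chunk,\emptyset)$ for the $t$-chunk with $V(\chunk) = V(H)$ and empty boundary is nothing more than a search for an $H$-subgraph in $G$, and any such subgraph would have been copied into $\wG$.

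The structural heart of the argument is the claim: for every chunk $\chunk = \chunk[A]$ of $K_{2,h}$ with $|A| \geq 2$, one has $V(\chunk) = V(H)$. To see this, recall that $K_{2,h}$ is bipartite with parts $\{a,b\}$ and $\{c_1,\ldots,c_h\}$, so a connected $A$ with $|A| \geq 2$ must contain some center $\alpha \in \{a,b\}$ and some leaf $c_i$. Any vertex of $V(H)\setminus A$ is either the other center (adjacent to $c_i \in A$) or a leaf (adjacent to $\alpha \in A$), hence lies in $N_H(A) = \bd \chunk$, and $V(\chunk) = A \cup \bd \chunk = V(H)$. Combined with the previous observation, this means that any $\chunk$-subgraph for such $\chunk$ would give rise to an $H$-subgraph in $G$, which does not exist. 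So during the construction of $\wG$ the only chunks that actually add anything are those with a one-vertex interior, namely $\chunk[\{a\}]$, $\chunk[\{b\}]$, and $\chunk[\{c_i\}]$ for $i=1,\ldots,h$.

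From here the counting is immediate. Let $B = \bd\wG$ and for $v \in V(\wG) \setminus B$ write $d_B(v) := |N_\wG(v) \cap B|$. Each such $v$ was introduced by one of the three relevant chunks; the edges added simultaneously force $d_B(v) \geq h$ in the $\chunk[\{a\}]$ and $\chunk[\{b\}]$ cases, and $d_B(v) \geq 2$ in the $\chunk[\{c_i\}]$ case. The $K_{2,h}$-freeness of $\wG$ implies that any two distinct vertices share at most $h-1$ common neighbors, so the standard double count
\[
\sum_{v \in V(\wG)\setminus B}\binom{d_B(v)}{2} \;\leq\; (h-1)\binom{|B|}{2} \;=\; \Oh(t^2)
\]
bounds the number of relevant $v$ by $\Oh(t^2)$, since each contributes at least $\binom{2}{2}=1$ to the sum. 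Hence $|V(\wG)| = \Oh(t^2)$. For the edges, note that all three contributing chunks have a single interior vertex, so no edges internal to $V(\wG)\setminus B$ are ever introduced; within-$B$ edges contribute $\binom{|B|}{2} = \Oh(t^2)$, and a routine Cauchy--Schwarz computation using the degree-squared bound above together with $|V(\wG)\setminus B| = \Oh(t^2)$ yields $\sum_v d_B(v) = \Oh(t^2)$ for the remaining cross-edges.

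The step I expect to be most delicate is the structural claim that every chunk of $K_{2,h}$ with $|A| \geq 2$ has $V(\chunk) = V(H)$. It crucially uses that $K_{2,h}$ has diameter $2$ and that one side of its bipartition has size only $2$; an analogous claim would fail for patterns $H$ of larger diameter, which is consistent with the fact that Lemma~\ref{lem:witness-K2h} is confined to the pumpkin family.
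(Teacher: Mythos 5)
Your proof is correct and follows essentially the same route as the paper's: identify that the only proper chunks of $K_{2,h}$ have singleton interior (so every vertex of $\wG\setminus\bd\wG$ gains at least two neighbours in $\bd\wG$ and $\wG\setminus\bd\wG$ stays edgeless), then double-count pairs of boundary vertices via $\sum_v\binom{d_B(v)}{2}\le (h-1)\binom{|B|}{2}$ using $K_{2,h}$-freeness. The only cosmetic difference is that you spell out why larger-interior chunks cannot contribute and invoke Cauchy--Schwarz for the edge count where the paper just uses $d\le 2\binom{d}{2}$ for $d\ge 2$.
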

\begin{proof}
Let $V(H) = \{a_1,a_2,b_1,b_2,\ldots,b_h\}$ where $A := \{a_1,a_2\}$
and $B:= \{b_1,b_2,\ldots,b_h\}$ are bipartition classes of $H$.
Note that there are only two types of proper chunks in $H$:
$N_H[a_i]$, $i=1,2$ and $N_H[b_j]$, $1 \leq j \leq h$.
Hence, one can easily verify that in the construction of the witness graph $\wG$ of Lemma~\ref{lem:witness}
every vertex $v \in \wG \setminus \bd \wG$ has at least two neighbours in $\bd \wG$,
and $\wG \setminus \bd \wG$ is edgeless.
Then we have $|N_\wG(v) \cap \bd \wG|\leq 2\binom{|N_\wG(v) \cap \bd \wG|}{2}$ for each $v \in V(\wG)\setminus \bd \wG$.

However, since the constructed witness graph $\wG$ does not admit an $H$-subgraph,
each two vertices $v_1,v_2 \in \bd \wG$ have less
than $h$ common neighbours in $\wG$, as otherwise there is a $H$-subgraph
in $\wG$ on vertices $v_1$, $v_2$ and $h$ vertices of $N_\wG(v_1) \cap N_\wG(v_2)$.
Hence
\begin{equation}\label{eq:K2h}
\sum_{v \in V(\wG)\setminus \bd \wG} \binom{|N_\wG(v) \cap \bd \wG|}{2} \leq (h-1)\binom{|\bd \wG|}{2} \leq (h-1)\binom{t}{2}.
\end{equation}
Consequently, there are at most $2(h-1)\binom{t}{2}$ edges of $\wG$
with exactly one endpoint in $\bd \wG$, whereas there are at most $\binom{t}{2}$ edges in $\wG[\bd \wG]$. The lemma follows.
\maybeqed\end{proof}

Lemma~\ref{lem:witness-K2h} together with the dynamic programming of Section~\ref{ss:std-algo} imply that \shitarg{K_{2,h}} can be solved in time $2^{\Oh(t^2\log t)}|V(G)|$, in spite of the fact that $\mnei(K_{2,h}) = \msep(K_{2,h}) = h$. 

\begin{corollary}
For every positive integer $h$, the \shitarg{K_{2,h}} problem can be solved in time 
$2^{\Oh(t^2 \log t)} |V(G)|$ on a graph $G$ of treewidth $t$.
\end{corollary}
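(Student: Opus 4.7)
The plan is to run the dynamic programming algorithm of Section~\ref{ss:std-algo} while restricting the state space to pairs $\state = (\wX, \wG)$ where the witness graph $\wG$ contains no $K_{2,h}$-subgraph. By Lemma~\ref{lem:witness-K2h}, every such $\wG$ has only $\Oh(t^2)$ vertices and edges; hence, up to isomorphism preserving the boundary labeling, the number of possible $\wG$'s is at most $\binom{\Oh(t^4)}{\Oh(t^2)} \cdot t^{\Oh(t)} = 2^{\Oh(t^2 \log t)}$, corresponding to choosing the edge set and the injective labeling of the at most $t$ boundary vertices. Combined with the $2^{\Oh(t)}$ choices for $\wX \subseteq \Tbag(w)$, this gives $2^{\Oh(t^2 \log t)}$ restricted states per node.

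For soundness, I would fix any optimum hitting set $X^*$ and, for each node $w$, examine the graph $G[\Tall(w) \setminus X^*]$. An $H$-subgraph in this graph would yield a $K_{2,h}$-subgraph of $G$ disjoint from $X^*$, contradicting validity of $X^*$; hence $G[\Tall(w) \setminus X^*]$ is $H$-free. Applying Lemma~\ref{lem:witness} produces a witness graph $\wG^*_w$ that is its subgraph, and therefore also $H$-free, so by Lemma~\ref{lem:witness-K2h} has $\Oh(t^2)$ vertices and edges. Thus the sequence of states $(X^* \cap \Tbag(w), \wG^*_w)_{w \in V(\Tree)}$ arising from $X^*$ lies entirely inside the restricted state space, and minimising the restricted DP's root value over all $H$-free witness graphs with empty boundary recovers $|X^*|$.

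The main subtlety is verifying that each transition from Section~\ref{ss:std-algo} still propagates correctly among restricted states. This is essentially automatic: each correctness argument there constructs child witness graphs by invoking Lemma~\ref{lem:witness} on the ``true'' underlying graphs $G[\Tall(w_i) \setminus (X_i \cup \wX)]$, which along any optimum trajectory are $H$-free and therefore already yield small witness graphs via Lemma~\ref{lem:witness-K2h}; no additional modification is needed. With $\Oh(t|V(G)|)$ nodes in the nice tree decomposition and $2^{\Oh(t^2 \log t)}$ work per node (the join step iterates over pairs of children's restricted states, which still fits within the same bound), the total running time is $2^{\Oh(t^2 \log t)} |V(G)|$, as claimed.
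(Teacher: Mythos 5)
Your proposal is correct and follows essentially the same route as the paper, which simply combines Lemma~\ref{lem:witness-K2h} with the dynamic programming of Section~\ref{ss:std-algo}: along the trajectory of an optimal solution all graphs $G[\Tall(w)\setminus X^*]$ are $H$-free, so the witness graphs produced by Lemma~\ref{lem:witness} are $H$-free and hence of size $\Oh(t^2)$, giving $2^{\Oh(t^2\log t)}$ states per node. One small phrasing caveat: the state space should be restricted by the size bound $\Oh(t^2)$ itself rather than by $H$-freeness, since Lemma~\ref{lem:witness-K2h} bounds only witness graphs arising from the construction of Lemma~\ref{lem:witness}, not arbitrary $H$-free boundaried graphs; your soundness argument already uses it in exactly this correct way.
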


We now show that a slight modification of $K_{2,h}$ enables us to prove a much higher lower bound. For this, let us consider a graph $H_h$ for $h \geq 2$
defined as $K_{2,h}$ with triangles attached to both degree-$h$ vertices (see Figure~\ref{fig:Hh}).
Note that $\msep(H_h) = \mnei(H_h) = h$.
One may view $H_h$ as $K_{2,h}$ with some symmetries broken, so that
the proof of Lemma~\ref{lem:witness-K2h} does not extend to $H_h$.
We observe that the lower bound proof of Theorem~\ref{thm:intro:lb:col}
works, with small modifications, also for the case of \shitarg{H_h}.
As the proof for this special case is slightly simpler than the one of Theorem~\ref{thm:intro:lb:col},
we present it first here to give intuition to the reader.
\begin{theorem}[Theorem~\ref{thm:lb:Hh} restated]\label{thm:lb:Hh2}
Unless ETH fails, for every $h \geq 2$ there does
not exist an algorithm that, given a \shitarg{H_h} instance $G$
and a tree decomposition of $G$ of width $t$, resolves $G$
in time $2^{o(t^h)} |V(G)|^{\Oh(1)}$.
\end{theorem}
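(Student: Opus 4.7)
The plan is to reduce from a clean $3$-CNF formula $\Phi$ (obtained via Lemma~\ref{lem:preprocess}) with $n$ variables and $m=\Oh(n)$ clauses, along lines that parallel (and simplify) the forthcoming lower bound argument for Theorem~\ref{thm:intro:lb:col}. The key geometric observation is that although $H_h$ contains $K_{2,h}$, the two attached triangles make the two apex vertices of that $K_{2,h}$ distinguishable from the $h$ middle vertices in \emph{every} $H_h$-subgraph embedded in a larger graph: only the apex vertices of an $H_h$ lie in a triangle of $H_h$. This lets us design gadgets in which the role played by every vertex of an occurring $H_h$ is pinned by the local triangle structure, so we can safely use the $K_{2,h}$ part as a genuine ``selector from $[t]^h$''.

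First I would fix an integer parameter $t$ (which will ultimately behave like $\Theta(n^{1/h})$), partition the variables of $\Phi$ into $p=\lceil n/(h\log t)\rceil$ consecutive blocks of size $h\log t$ each, and identify each truth assignment of a block with an element of $[t]^h$. For each block we build a \emph{column gadget}: two apex vertices with pendant triangles attached, together with an $h\times t$ grid of middle vertices in which each apex vertex is joined to every middle vertex. Auxiliary $H_h$-subgraphs internal to the gadget are added so that any hitting set meeting the prescribed budget must leave \emph{exactly one} middle vertex unhit in each of the $h$ rows; the resulting $h$-tuple of ``selected'' middle vertices then encodes the assignment of the corresponding block as an element of $[t]^h$. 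The treewidth of each column is $\Oh(t)$, and arranging the $p$ columns linearly with $\Oh(t)$-sized interfaces gives overall treewidth $\Oh(t)$.

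For each clause of $\Phi$, which by cleanness involves variables from at most three blocks, I would add a clause-checker gadget that creates further $H_h$-subgraphs spanning the relevant columns. These new subgraphs are designed so that they can all be hit within the prescribed budget \emph{exactly} when the tuples selected in the relevant columns correspond to a partial assignment that satisfies the clause. The triangle-pinning of apex vertices is again used to ensure that no spurious $H_h$-subgraph arises that could be hit ``for free'' by playing middle vertices in apex roles. Because each clause interacts with $\Oh(1)$ columns, the decomposition width grows by only an additive constant, the graph has size $\mathrm{poly}(n)$, and a routine correctness analysis gives that $\Phi$ is satisfiable iff the constructed \shitarg{H_h} instance admits a hitting set of the designed size. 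Choosing $t=\Theta(n^{1/h})$, a hypothetical $2^{o(t^h)}|V(G)|^{\Oh(1)}$-time algorithm would then decide $\Phi$ in time $2^{o(n)}\cdot\mathrm{poly}(n)$, contradicting ETH via Theorem~\ref{thm:spars}.

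The main obstacle is designing the selector subgadget inside a column so that the $t^h$ relevant hitting-set configurations are genuinely in bijection with $[t]^h$, while simultaneously preventing any ``collateral'' $H_h$-subgraphs that cross two columns, or mix apex and middle roles, from cheapening the solution. Verifying this requires a careful enumeration of all $H_h$-subgraphs that can occur in the combined graph, and it is exactly the step where the broken symmetry of $H_h$ relative to $K_{2,h}$ — the very feature that causes the counting argument of Lemma~\ref{lem:witness-K2h} to fail for $H_h$ — must be crucially exploited.
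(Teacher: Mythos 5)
Your high-level observation -- that the attached triangles pin which vertices of an embedded $H_h$ can play the two degree-$h$ ``apex'' roles -- is indeed the crucial symmetry-breaking fact, and your final arithmetic (treewidth $\Theta(n^{1/h})$, so $2^{o(t^h)}=2^{o(n)}$) matches the paper. But the construction you sketch is genuinely different from the paper's and, as described, does not work. First, the column gadget is internally inconsistent: if both apex vertices are adjacent to \emph{all} $ht$ middle vertices, then the two apexes, their pendant triangles, and any $h$ unhit middle vertices (in particular the one-per-row selected tuple) form an unhit $H_h$. So either an apex must be hit -- after which no $H_h$-subgraph inside the column uses the remaining apex (an $H_h$ needs \emph{two} triangle-bearing common neighbours), and the $K_{2,h}$ part of the gadget constrains nothing -- or the ``exactly one unhit per row'' configuration is infeasible. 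Second, the clause-checking step has no mechanism: a single $H_h$-subgraph contains only $h$ middle slots, so it can ``read'' at most $h$ selector positions in one column, whereas the violation of a clause is a conjunction of conditions on the $h\log t$-bit assignments of up to three blocks; you would need either $H_h$-copies spanning $3h$ selector vertices (impossible) or an unspecified cascade of auxiliary gadgets. Third, the treewidth claim fails: clauses connect essentially arbitrary triples of columns, so in a linear arrangement the $\Theta(n)$ clause gadgets cannot all be accommodated with only an additive-constant increase in width; the vertices of a gadget joining columns $1$ and $p$ must appear in every bag in between.

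The paper avoids all three problems by not using columns at all. It builds a single hub $M$ of $sh=\Oh(n^{1/h})$ vertices $w_{j,i}$ and assigns to each clause--literal pair $(C,l)$ a distinct function $f_{C,l}:[h]\to[s]$ (possible since $s^h\ge 3n$); each occurrence of a variable gets one vertex $a_{x,C,l}$ and each literal of a clause one vertex $b_{C,l}$, both adjacent to the $h$-subset $\{w_{f_{C,l}(i),i}\}$ of $M$. Because the $f_{C,l}$ are pairwise distinct and only these vertices are simultaneously of degree $\ge h$ and in a triangle, the \emph{only} unintended $H_h$-subgraphs are the pairs $(a_{x,C,l},b_{C,l})$ through their common $h$-subset of $M$. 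The ``selection'' is then simply whether $a_{x,C,l}$ lies in the solution, enforced by constant-size gadgets made of attached $H_h$-copies (a choose-two-of-four gadget per variable, an all-but-one gadget per clause) under an exactly tight budget $k=5n-m$; every component of $G\setminus M$ has constant size, so the width bound is immediate. If you want to salvage your approach, you should replace the column/selector architecture with this hub-plus-pendant-gadgets architecture; the block-encoding idea does not survive the locality of the $H_h$ pattern.
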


\begin{proof}
We denote the vertices of $H_h$ as in Figure~\ref{fig:Hh}.

\begin{figure}[tb]
\centering
\includegraphics{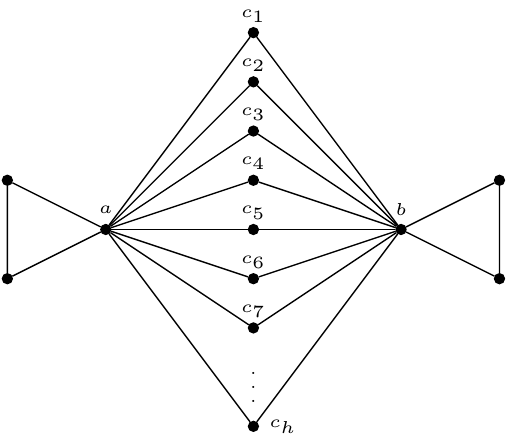}
\caption{The graph $H_h$, equal to $K_{2,h}$ with two triangles added to the degree-$h$ vertices.}
\label{fig:Hh}
\end{figure}

We first define the following basic operation for the construction.
By \emph{attaching a copy of $H_h$ at vertices $u$ and $v$} we mean the following:
we introduce a new copy of $H_h$ into the constructed graph, and identify $u$ with the copy 
of the vertex $a$, and $v$ with the copy of the vertex $b$.
The intutive idea behind attaching a copy of $H_h$ at $u$ and $v$ is that it forces
the solution to take $u$ or $v$.

Assume we are given as input a clean $3$-CNF formula $\Phi$
with $n$ variables and $m$ clauses.
We are to construct a \shitarg{H_h} instance $G$ with a budget bound $k$
and a tree decomposition of $G$ of width $\Oh(n^{1/h})$,
such that $G$ admits a solution of size at most $k$ if and only if $\Phi$
is satisfiable. This construction, together with Lemma~\ref{lem:preprocess} and Theorem~\ref{thm:spars}, proves the statement of the theorem.

Let $s$ be the smallest positive integer such that $s^h \geq 3n$. Observe that
$s = \Oh(n^{1/h})$. We start our construction by introducing
a set $M$ of $sh$ vertices $w_{j,i}$, $1 \leq j \leq s$, $1 \leq i \leq h$.
The set $M$ is the central part of the constructed graph $G$.
In particular, 
in our reduction each connected component of $G \setminus M$ will be of constant size,
yielding immediately the promised tree decomposition of $G$ of width $\Oh(n^{1/h})$.

To each clause $C$ of $\Phi$, and to each literal $l$ in $C$, assign a function
$f_{C,l}: \{1,2,\ldots,h\}
  \to \{1,2,\ldots,s\}$ such that $f_{C,l} \neq f_{C',l'}$ for $(C,l) \neq (C',l')$.
Observe that this is possible due to the assumption $s^h \geq 3n$ and the fact
that $\Phi$ is clean.

For each variable $x$ of $\Phi$, proceed as follows.
For each clause $C$ that contains $x$ in a literal $l \in \{x,\neg x\}$,
we introduce a new vertex $a_{x,C,l}$ and make it adjacent to all
vertices $w_{f_{C,l}(i),i}$ for $1 \leq i \leq h$.
Let $a_{x,C_1,l}$, $a_{x,C_2,\neg l}$ and $a_{x,C_3,l}$ be the three
vertices introduced;
recall that $x$ appears exactly three times in $\Phi$, twice positively and once
negatively or twice negatively and once positively.
Moreover, we introduce a fourth dummy vertex $a_x$.
Finally, we attach a copy of $H_h$ to the following four pairs of vertices:
$(a_{x,C_1,l}, a_{x,C_2,\neg l})$,
$(a_{x,C_3,l}, a_{x,C_2,\neg l})$,
$(a_{x,C_1,l}, a_x)$, and
$(a_{x,C_3,l}, a_x)$.
Let $D_x$ be the set of vertices constructed for variable $x$.
Observe that, for every variable $x$,
we have constucted four $H_h$-subgraphs, and there are exactly two ways
to hit them with only two vertices: either we take
$\{a_{x,C_1,l}, a_{x,C_3,l}\}$ into the solution or 
$\{a_{x,C_2,\neg l}, a_x\}$ into the solution.
Moreover, any solution to \shitarg{H_h} on the constructed graph needs
to take at least two vertices of $D_x$.

For each clause $C$ of $\Phi$, proceed as follows.
For each literal $l$ in $C$, introduce a new vertex $b_{C,l}$
and make it adjacent to 
all vertices $w_{f_{C,l}(i),i}$ for $1 \leq i \leq h$.
For each two different literals $l_1,l_2$ in $C$, attach
a copy of $H_h$ at vertices $b_{C,l_1}$ and $b_{C,l_2}$.
Let $D_C$ be the set of vertices constructed for the clause $C$.
Observe that, for every clause $C$ that contains $r_C$ literals
(recall $2 \leq r_C \leq 3$), we have constructed a number of 
$H_h$-subgraphs, we need at least $r_C-1$ vertices of the solution
to hit them, and, without loss of generality, we may assume that
any solution that contains only $r_C-1$ vertices of $D_C$
actually contains all but one of the vertices $b_{C,l}$.

We set the budget
$$k = 2n + \sum_{C \in \Phi} r_C-1 = 5n-m.$$
Observe that this budget is tight: any solution $X$ to \shitarg{H_h}
on $G$ of size at most $k$ needs to contain exactly two vertices in each $D_x$,
exactly $r_C-1$ vertices in each $D_C$ and, consequently,
is of size exactly $k$ and does not contain any more vertices.

It remains to argue about the correctness of this construction.
The crucial observation is that there are only few $H_h$-subgraphs in $G$:
the vertices $a_{x,C,l}$, $a_x$ and $b_{C,l}$ are the only vertices of $G$
that have degree at least $h$ and, at the same time, are contained in some triangle
in $G$. With this observation, a direct check shows that, apart from
the copies of $H_h$ introduced explicitely in the construction,
the only other copies are ones with vertices $a$ and $b$ mapped
to $a_{x,C,l}$ and $b_{C,l}$ for every clause $C$, every literal $l$ in $C$,
and $x$ being the variable of $l$.

In one direction, let $\phi$ be a satisfying assignment of $\Phi$.
Construct a solution $X$ as follows.
For each variable $x$, include into $X$ all vertices $a_{x,C,l}$ for which
$l$ is satisfied in $\phi$. Moreover, include also the vertex $a_x$
if only one vertex $a_{x,C,l}$ has been included in the previous step.
For each clause $C$, pick a literal $l$ that satisfies it,
and include into $X$ all vertices $b_{C,l'}$ for $l' \neq l$.
This concludes the description of the set $X$. 
Clearly, $|X| = k$. Moreover, observe that, for each clause $C$
and literal $l$ in $C$, $l \in \{x,\neg x\}$, we have that either
$a_{x,C,l}$ or $b_{C,l}$ belongs to $X$. With the previous insight
into the family of $H_h$-subgraphs of $G$, this implies that $X$
hits all $H_h$-subgraphs of $G$.

In the second direction, let $X$ be such a set of at most $k$ vertices
that hits all $H_h$-subgraphs in $G$.
By the discussion on tightness of the budget, $X$ contains exactly $2$ 
vertices in each set $D_x$ and exactly $r_C-1$ vertices in each set $D_C$, and no more vertices.
Define an assignment $\phi$ as follows:
for each variable $x$, we set $\phi(x)$ so that $a_{x,C,l} \in X$ if and only
if $l$ is evaluated to true by $\phi$. Note that this is a valid definition
by the construction of $G[D_x]$. 
To show that $\phi$ satisfies $\Phi$, consider a clause $C$.
By the budget bounds, there exists a vertex $b_{C,l} \notin X$.
Recall that there exists an $H_h$-subgraph of $G$ with $a$ mapped to $a_{x,C,l}$
and $b$ mapped to $b_{C,l}$, where $x$ is the variable of $l$.
As $X$ hits all $H_h$-subgraphs of $G$, $a_{x,C,l} \in X$.
Hence, $\phi$ sets $l$ to true and hence satisfies $C$.
This concludes the proof.
\maybeqed\end{proof}


\section{Algorithms for \cshit{}}\label{sec:algo-col}

In this section we develop algorithmic upper bounds
for \cshit{} on graphs of bounded treewidth.
We start with a simple observation that essentially reduces
the problem to $H$ being a connected graph.

\begin{lemma}\label{lem:cshit:conn}
Let $(G,\col)$ be a \cshit{} instance $(G,\col)$.
Then, a set $X \subseteq V(G)$ hits all 
$\col$-$H$-subgraphs of $G$
if and only if
there exists a connected component $C$ of $H$
such that, if we define $V_C = \col^{-1}(C) \subseteq V(G)$,
then $X \cap V_C$ hits all $\col|_{V_C}$-$C$-subgraphs of $G[V_C]$.
\end{lemma}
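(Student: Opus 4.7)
The plan is to decompose every $\col$-$H$-subgraph of $G$ into its connected components and use the fact that, in the colored setting, these components necessarily live in disjoint parts of $V(G)$.

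First I would observe that since $\col$ maps each vertex of $G$ to some vertex of $H$, and $V(H)$ is partitioned into the vertex sets of the connected components of $H$, the sets $V_C = \col^{-1}(V(C))$ form a partition of $\col^{-1}(V(H))\subseteq V(G)$ indexed by the components $C$ of $H$. Next, any $\col$-$H$-subgraph $\pi\colon V(H)\to V(G)$ must map each connected component $C$ of $H$ entirely into $V_C$ (since $\col(\pi(a))=a$ for all $a\in V(H)$); hence $\pi$ decomposes uniquely as $\pi=\bigcup_C \pi_C$, where $\pi_C:=\pi|_{V(C)}$ is a $\col|_{V_C}$-$C$-subgraph of $G[V_C]$. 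Conversely, given any family $(\pi_C)_C$ where each $\pi_C$ is a $\col|_{V_C}$-$C$-subgraph of $G[V_C]$, the union $\bigcup_C \pi_C$ is a well-defined injective color-preserving homomorphism from $H$ to $G$ (injectivity across components follows from disjointness of the $V_C$), hence a $\col$-$H$-subgraph of $G$. This gives a bijection between $\col$-$H$-subgraphs of $G$ and tuples $(\pi_C)_C$ with one $\col|_{V_C}$-$C$-subgraph of $G[V_C]$ per component.

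Then I would note that, under this decomposition, a set $X\subseteq V(G)$ hits $\pi$ if and only if $X\cap V_C$ hits $\pi_C$ for some component $C$. Therefore, $X$ fails to hit some $\col$-$H$-subgraph of $G$ exactly when there exists a tuple $(\pi_C)_C$ such that $X\cap V_C$ misses $\pi_C$ for \emph{every} $C$. Since the choice of $\pi_C$ in each component is independent of the choices made in other components, such a tuple exists if and only if, for every component $C$ of $H$, there is at least one $\col|_{V_C}$-$C$-subgraph of $G[V_C]$ not hit by $X\cap V_C$.

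Taking the contrapositive, $X$ hits every $\col$-$H$-subgraph of $G$ if and only if there exists a connected component $C$ of $H$ such that $X\cap V_C$ hits every $\col|_{V_C}$-$C$-subgraph of $G[V_C]$, which is exactly the claim. The only conceptually delicate step is the independence argument in the previous paragraph: it relies crucially on the disjointness of the $V_C$'s, which is precisely what the colored setting provides (and which fails in the uncolored setting, explaining why this reduction is specific to \cshit{}).
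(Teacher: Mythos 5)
Your proof is correct and takes essentially the same approach as the paper: the forward direction is the contrapositive argument that if every component $C$ has an unhit $\col|_{V_C}$-$C$-subgraph $\pi_C$, then $\bigcup_C \pi_C$ is an unhit $\col$-$H$-subgraph, with injectivity guaranteed by the disjointness of the sets $V_C$. Your more explicit bijection between $\col$-$H$-subgraphs and tuples $(\pi_C)_C$ is a harmless elaboration of the same idea.
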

\begin{proof}
The right-to-left implication is immediate. In the other direction, 
by contradiction, assume that for each connected component $C$
of $H$ there exists a $\col|_{V_C}$-$C$-subgraph $\pi_C$ in $G[V_C]$ that is not hit by $X$.
Then, as each vertex of $H$ has its own color,
$\bigcup_C \pi_C$ is a $\col$-$H$-subgraph in $G$ that is not hit by $X$.
\maybeqed\end{proof}

Hence, Lemma~\ref{lem:cshit:conn} allows us to solve \cshit{} problem
only for connect graphs $H$: in the general case, we solve \cshitarg{C}
on $(G[V_C],\col|_{V_C})$ for each connected component $C$ of $H$.
In the remainder of this section we consider only connected graphs $H$.

We now resolve two simple special cases: when $H$ is a path or a clique.

\begin{theorem}\label{thm:cshit:poly}
\cshit{} is polynomial-time solvable in the case
when $H$ is a path.
\end{theorem}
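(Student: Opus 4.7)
The plan is to reduce \cshit{} for $H$ a path to the minimum $s$-$t$ vertex cut problem in a layered directed graph, which is polynomial-time solvable via maximum flow. Writing $V(H) = \{v_1, \ldots, v_h\}$ along the path, with $v_i v_{i+1} \in E(H)$ for $1 \leq i < h$, I would denote by $V_i := \col^{-1}(v_i) \subseteq V(G)$ the $i$-th color class in~$G$.

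First, I would set up a bijection between $\col$-$H$-subgraphs of $G$ and layered walks through the color classes. Because the colors $v_1, \ldots, v_h$ are pairwise distinct, any color-respecting homomorphism $\pi\colon V(H) \to V(G)$ is automatically injective, and is uniquely determined by the sequence $(u_1, \ldots, u_h)$, where $u_i = \pi(v_i) \in V_i$ and $u_i u_{i+1} \in E(G)$ for each $i < h$. In particular, only vertices of $\bigcup_i V_i$ can appear in the image of any $\col$-$H$-subgraph, so without loss of generality the optimum hitting set is contained in $\bigcup_i V_i$.

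Next, I would build a directed graph $D$ on vertex set $\{s,t\} \cup \bigcup_i V_i$: for every edge $uv \in E(G)$ with $u \in V_i$ and $v \in V_{i+1}$ add the arc $u \to v$; additionally add arcs $s \to u$ for $u \in V_1$ and $u \to t$ for $u \in V_h$. By construction, the internal vertex sets of $s$-$t$ paths in $D$ are exactly the images of $\col$-$H$-subgraphs of $G$, so a set $X \subseteq V(D)\setminus\{s,t\}$ hits every $\col$-$H$-subgraph of $G$ if and only if $X$ is an $s$-$t$ vertex separator in $D$. The minimum such separator can be computed in polynomial time by the standard vertex-splitting trick reducing to edge min-cut, followed by any max-flow algorithm. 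The degenerate case $h = 1$ is trivial, solved by taking $X = V_1$.

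There is no substantive obstacle here: the only step that requires care is the bijection above, which nevertheless follows immediately from the distinctness of colors in $H$ and the layered structure of $D$. The technique is essentially folklore and goes through verbatim once the layered graph is set up correctly.
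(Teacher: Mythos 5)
Your proposal is correct and follows essentially the same route as the paper: both construct the layered auxiliary digraph with source and sink attached to the first and last color classes, observe the correspondence between $\col$-$H$-subgraphs and $s$-$t$ paths, and reduce to a minimum vertex cut computed via maximum flow. Your additional remarks (the vertex-splitting trick, the $h=1$ case) are fine but not substantively different from the paper's argument.
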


\begin{proof}
Let $h = |V(H)|$ and let $a_1,a_2,\ldots,a_h$ be the consecutive vertices on the path $H$.

For an input $(G,\col)$ to \cshit{}, construct an auxiliary directed graph $G'$ as follows.
Take $V(G') = V(G) \cup \{s,t\}$, where $s$ and $t$ are two new vertices.
For each $1 \leq i < h$ and every edge $uv \in E(G)$
with $\col(u) = a_i$ and $\col(v) = a_{i+1}$, add an arc $(u,v)$ to $G'$.
Moreover, for each $u \in \col^{-1}(a_1)$ add an arc $(s,u)$ and for each 
$u \in \col^{-1}(a_h)$ add an arc $(u,t)$.
Observe the family of $\col$-$H$-subgraphs of $G$ is in one-to-one correspondance
with directed paths from $s$ to $t$ in the graph $G'$.
Hence, to compute a set $X$ of minimum size that hits all $\col$-$H$-subgraphs of $G$
it suffices to compute a minimum cut between $s$ and $t$ in the graph $G'$. This can be done in polynomial time using any maximum flow algorithm.
\maybeqed\end{proof}

\begin{theorem}\label{thm:cshit:clique}
A \cshit{} instance $(G,\col)$
can be solved in time $2^{\Oh(t)} |V(G)|$ in the case
when $H$ is a clique and $t$ is the treewidth of $G$.
\end{theorem}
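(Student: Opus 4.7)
The plan is a standard dynamic programming on a nice tree decomposition of $G$, leveraging the classical fact that every clique of $G$ is entirely contained in some bag. Since any $\col$-$H$-subgraph of $G$ is a clique of size $|V(H)|$, the image of every such subgraph lies in a single bag of the tree decomposition. This means the algorithm never has to ``remember'' partial $H$-subgraphs across bags, so a plain subset-of-the-bag state suffices.

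Concretely, for a nice tree decomposition $(\Tree,\Tbag)$ of width less than $t$ and for each node $w \in V(\Tree)$ I would define a table $T[w,X]$ indexed by subsets $X \subseteq \Tbag(w)$, where $T[w,X]$ is the minimum size of a set $Y \subseteq \Tdown(w)$ such that $X \cup Y$ hits every $\col$-$H$-subgraph of $G[\Tall(w)]$. The transitions are as follows. At a leaf, $T[w,\emptyset]=0$. At an introduce node $w$ with child $w'$ and new vertex $v$: if $v \in X$ then $T[w,X]=T[w',X\setminus\{v\}]$, and if $v \notin X$ then $T[w,X]=T[w',X]$ provided no $\col$-$H$-subgraph of $G[\Tbag(w)]$ containing $v$ is disjoint from $X$ (otherwise the state is infeasible, encoded as $+\infty$). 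At a forget node with forgotten vertex $v$, set $T[w,X]=\min(T[w',X],\,T[w',X\cup\{v\}]+1)$. At a join node with children $w_1,w_2$ sharing the same bag, set $T[w,X]=T[w_1,X]+T[w_2,X]$. The answer is $T[\mathtt{root}(\Tree),\emptyset]$.

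The correctness reduces to the observation that every $\col$-$H$-subgraph of $G[\Tall(w)]$ lies in a single bag $\Tbag(w^*)$ with $w^*\preceq w$; in particular at a join node every such subgraph is entirely contained in $G[\Tall(w_1)]$ or in $G[\Tall(w_2)]$, which is what makes the simple additive recurrence work, and at an introduce node the only ``newly appearing'' subgraphs are those using the introduced vertex $v$, all of which live inside $\Tbag(w)$ and are therefore checkable on the spot. The feasibility check at the introduce node is the only non-mechanical step; it just amounts to enumerating subsets of $\Tbag(w)$ of size $|V(H)|$ that contain $v$, respect $\col$, and form a clique in $G$---a task that takes $t^{O(1)}$ time since $|V(H)|$ is constant.

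For the running time, there are $2^{O(t)}$ states per bag and $O(t\cdot|V(G)|)$ bags in the nice tree decomposition, and each transition costs time polynomial in $t$ (with the introduce-node check dominating). Using the $5$-approximation for treewidth cited earlier in the paper to obtain the nice tree decomposition in time $2^{O(t)}|V(G)|$, the whole algorithm runs in $2^{O(t)}|V(G)|$ time. The only step that needs any real attention---and hence the main thing I would be careful about when writing out the formal proof---is verifying that the state infeasibility at introduce nodes correctly captures that every $\col$-$H$-subgraph touching $v$ is confined to $\Tbag(w)$, since the entire argument rests on this clique-in-one-bag property.
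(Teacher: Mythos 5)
Your proposal is correct and matches the paper's proof essentially verbatim: the same subset-of-the-bag DP table, the same transitions at all four node types, and the same reliance on the fact that a clique subgraph must lie in a single bag (so at a join node it is confined to one child's subtree, and at an introduce node any new copy through $v$ lives inside the current bag). The only cosmetic difference is that your introduce-node infeasibility check is restricted to copies containing $v$, while the paper checks all copies inside $\Tbag(w)\setminus \wX$; these are equivalent since a copy avoiding $v$ already forces $T[w',X]=+\infty$ at the child.
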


\begin{proof}
We perform an absolutely standard dynamic programming algorithm, essentially using the folklore fact
that any clique in $G$ needs to be completely contained in a bag of the decomposition.

Recall that we may assume that we are given a nice tree decomposition
$(\Tree,\Tbag)$ of $G$, of width \emph{less} than $t$.
For a node $w$, a \emph{state} is a set $\wX \subseteq \Tbag(w)$.
For a node $w$ and a state $\wX$, we say 
that a set $X \subseteq \Tdown(w)$ is \emph{feasible}
if $G[\Tall(w) \setminus (\wX \cup X)]$ does not contain any $\col$-$H$-subgraph.
Define $T[w,\wX]$ to be the minimum size of a feasible set for $w$ and $\wX$.
Note that $T[\mathtt{root}(\Tree),\emptyset]$ is the answer to the input
\cshit{} instance. We now show how to compute the values $T[w,\wX]$
in a bottom-up manner in the tree decomposition $(\Tree,\Tbag)$.

\medskip

\noindent\textbf{Leaf node.} Observe that
$\emptyset$ is the unique valid state for a leaf node $w$,
and $T[w,\emptyset] = 0$.

\medskip

\noindent\textbf{Introduce node.}
Consider now an introduce node $w$ with child $w'$, and a unique vertex
$v \in \Tbag(w) \setminus \Tbag(w')$.
Furthemore, consider a single state $\wX$ at node $w$.
If $v \in \wX$, then clearly a set $X$ is feasible for $w$ and $\wX$
if and only if it is feasible for $w'$ and $\wX \setminus \{v\}$.
Hence, $T[w,\wX] = T[w',\wX \setminus \{v\}]$ in this case.

Consider now the case $v \notin \wX$. If there is a
$\col$-$H$-subgraph in $G[\Tbag(w) \setminus \wX]$ then clearly no
set is feasible for $w$ and $\wX$, and $T[w,\wX] = +\infty$.
Otherwise, since $H$ is a clique and $v$ does not have any neighbor in $\Tdown(w)$, there is no $\col$-$H$-subgraph of
$G[\Tall(w)]$ that uses both $v$ and a vertex of $\Tdown(w)$.
Consequently, $T[w,\wX] = T[w',\wX]$ in the remaining case.

\medskip

\noindent\textbf{Forget node.}
Consider now a forget node $w$ with child $w'$,
and a unique vertex $v \in \Tbag(w') \setminus \Tbag(w)$.
Let $\wX \subseteq \Tbag(w)$ be any state.
We claim that $T[w,\wX] = \min(T[w',\wX], 1+T[w',\wX \cup \{v\}])$.

In one direction, it suffices to observe that, for any set $X$ feasible for $w$ and $\wX$,
if $v \in X$, then $X \setminus \{v\}$ is feasible
for $T[w',\wX \cup \{v\}]$, and otherwise, if $v \notin X$,
then $X$ is feasible for $T[w',\wX]$.
In the other direction, note that any feasible set for $w'$ and $\wX$
is also feasible for $w$ and $\wX$, whereas for every feasible set $X$ for $w'$ and $\wX \cup \{v\}$,
$X \cup \{v\}$ is feasible for $w$ and $\wX$.

\medskip

\noindent\textbf{Join node.}
Let $w$ be a join node with children $w_1$ and $w_2$, and let $\wX$ be a state for $w$.
We claim that $T[w,\wX] = T[w_1,\wX] + T[w_2,\wX]$. 

Indeed, note that, since $H$ is a clique, any $\col$-$H$-subgraph of $G[\Tall(w) \setminus \wX]$
has its image entirely contained in $\Tall(w_1)$ or entirely contained in $\Tall(w_2)$.
Consequently, if $X_i$ is feasible for $w_i$ and $\wX$, for $i=1,2$,
then $X_1 \cup X_2$ is feasible for $w$ and $\wX$. In the other direction,
it is straightforward that for every feasible set $X$ for $w$ and $\wX$,
$X \cap \Tdown(w_i)$ is feasible for $w_i$ and $\wX$, $i=1,2$.
\maybeqed\end{proof}

We now move to the proof of Theorem~\ref{thm:intro:cshit:algo}.

\begin{theorem}[Theorem~\ref{thm:intro:cshit:algo} restated]\label{thm:cshit:algo}
A \cshit{} instance $(G,\col)$ 
can be solved in time $2^{\Oh(t^{\msep(H)})} |V(G)|$
in the case when $H$ is connected and is not a clique, 
   where $t$ is the treewidth of $G$.
\end{theorem}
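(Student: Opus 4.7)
The plan is to run dynamic programming on a nice tree decomposition of $G$, along the lines of Section~\ref{ss:std-algo}, but with a state much more parsimonious than the witness graph used there. A natural colored counterpart of Lemma~\ref{lem:witness} would record, for each node $w$, each choice $\wX \subseteq \Tbag(w)$ of deleted bag vertices and each color-respecting labeled chunk $\chunk$ whose boundary maps into $\Tbag(w)\setminus\wX$, whether a $\chunk$-subgraph has been realized in $G[\Tdown(w)]$ and avoided by the partial solution. This would already give running time $2^{\Oh(t^{\mnei(H)})}\cdot|V(G)|$, because there are $\Oh(t^{\mnei(H)})$ color-respecting labeled boundaries of chunks in a bag of size $t$; the colorful assumption makes the state sufficient since matched pieces are automatically vertex-disjoint. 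To shave the exponent from $\mnei(H)$ to $\msep(H)$, the idea is to track only separator chunks, i.e., those whose boundary is a minimal separator in $H$ and hence has size at most $\msep(H)$.

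Concretely, the state at node $w$ would consist of $\wX$ together with, for each color-respecting labeled separator chunk $\chunk$ of $H$ with $\bd\chunk$ placed inside $\Tbag(w)\setminus\wX$, a bit indicating whether a $\chunk$-subgraph has been realized in $G[\Tdown(w)\setminus X]$ for the still-to-be-determined hitting set $X$. Since separator chunks are indexed by at most $\msep(H)$ color-respecting bag vertices, this state has $2^{\Oh(t^{\msep(H)})}$ values, matching the announced bound. The prediction-like aspect enters because non-separator chunks are not tracked directly: instead, one exploits that any non-separator chunk $\chunk$ contains, by the very definition of a minimal separator, a proper subset of $\bd\chunk$ that still separates some connected sub-piece of $\inte\chunk$ from the rest of $H$. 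Every realisation of such a $\chunk$ therefore splits into separator-chunk sub-realisations that the state already records, and the dynamic programming transitions can then be defined as a check of whether the separator-chunk profiles at the two children of a join node, or the single child of an introduce/forget node, are consistent with the parent's state.

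The main obstacle is to make this prediction rigorous, and this is precisely the "highly involved proof of correctness" alluded to in the introduction. What is needed is a structural lemma roughly of the following shape: for every $\col$-$H$-subgraph $\pi$ in $G$ and every bag $\Tbag(w)$, the preimage $\pi^{-1}(\Tdown(w))$ decomposes into pieces each of which is the interior of a separator chunk of $H$ attached through a subset of $\Tbag(w)\setminus\wX$ of size at most $\msep(H)$, and symmetrically on the other side of the bag. Once such a decomposition is available, one can show that a set $X$ is feasible at $w$ iff the recorded bits agree with the actual realisations in $G[\Tdown(w)\setminus X]$, and combine children's states at a join by taking the disjoint union of their separator-chunk bits, with the colorful assumption guaranteeing no cross-interference. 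The delicate case analysis is at the join node, where one must argue that compatible children's separator-chunk profiles never hide a forbidden $\col$-$H$-subgraph that straddles both sides, and conversely that every feasible $X$ projects consistently to feasible sets on both sides. With this in place, polynomial-time transition computation together with the $2^{\Oh(t^{\msep(H)})}$ bound on the number of states yields the claimed running time $2^{\Oh(t^{\msep(H)})}\cdot|V(G)|$.
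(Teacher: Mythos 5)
You have the right state space ($\wX$ plus a family of color- and label-respecting separator chunks, giving $2^{\Oh(t^{\msep(H)})}$ states) and you correctly sense that some form of prediction is needed, but the concrete lemma you build the proof on is false, and it is false for exactly the reason that makes this theorem hard. For a $\col$-$H$-subgraph $\pi$ and a bag $\Tbag(w)$, a connected component $A$ of $\pi^{-1}(\Tdown(w))$ is attached to the bag through $N_H(A)$, which is the boundary of a \emph{chunk} but in general not a minimal separator of $H$: its size can be as large as $\mnei(H)$. Take $H$ a subdivided star with center $c$ and legs $c-m_i-\ell_i$: if the forgotten part is $\{c,m_1,\ldots,m_d\}$ and the leaves sit in the bag, this piece is a chunk with boundary of size $d=\mnei(H)$, while $\msep(H)=1$, and it does not decompose into interiors of separator chunks whose boundaries lie in the bag. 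So the backward-looking semantics you propose --- ``a bit indicating whether a $\chunk$-subgraph has been realized in $G[\Tdown(w)\setminus X]$'' for separator chunks $\chunk$ only --- does not contain enough information to detect, at a later forget or join node, whether an $H$-subgraph gets completed; tracking what has actually been realized below the bag inherently costs exponent $\mnei(H)$, which is the first algorithm you sketch.

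The paper escapes this not by decomposing the forgotten part, but by changing the meaning of the state: $\chunkfam$ is a \emph{guess about the entire final graph} $G\setminus X$, namely the set of all separator $t$-chunks that occur anywhere in $G\setminus X$ with boundary on $\Tbag(w)\setminus\wX$. These predicted chunks are physically glued onto the bag to form an auxiliary graph $G(w,\state)$, feasibility of $X$ is defined as ``$G(w,\state)\setminus X$ has no $\col$-$H$-subgraph and realizes no separator chunk outside $\chunkfam$,'' and correctness of the transitions is proved by a minimal-counterexample argument: any offending $H$-subgraph or chunk that straddles the bag can be progressively remapped into the glued copies of strictly smaller separator chunks, using a structural claim (the paper's Claim about minimal separators inside a separator chunk) to always find such a smaller separator chunk. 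Your proposal would need to be reworked along these lines; as written, the ``structural lemma roughly of the following shape'' at its core does not hold, and the join-node consistency check you describe cannot be carried out from the information your state retains.
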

\begin{proof}
Recall that we may assume that we are given a nice tree decomposition
$(\Tree,\Tbag)$ of $G$, of width \emph{less} than $t$,
 and a labeling $\labfun: V(G) \to \{1,2,\ldots,t\}$ that is injective on each bag.

We now define a state that will be used in the dynamic programming algorithm.
A \emph{state} at node $w \in V(\Tree)$ is a pair $(\wX,\chunkfam)$
where $\wX \subseteq \Tbag(w)$ and $\chunkfam$ is a family of separator $t$-chunks,
where each chunk $\chunk$ in $\chunkfam$:
\begin{enumerate}
\item uses only labels of $\labfun^{-1}(\Tbag(w) \setminus \wX)$;
\item the mapping $\pi: \bd \chunk \to \Tbag(w) \setminus \wX$ that
maps a vertex of $\bd \chunk$ to a vertex with the same label is a homomorphism
of $H[\bd \chunk]$ into $G$ (in particular, it respects colors).
\end{enumerate}
Observe that, as $|\bd \chunk| \leq \msep(H)$ for any separator chunk $\chunk$,
there are $\Oh(t^{\msep(H)})$ possible separator $t$-chunks, and hence
$2^{\Oh(t^{\msep(H)})}$ possible states for a fixed node $w$.

The intuitive idea behind a state is that, for node $w \in V(\Tree)$ and state $(\wX,\chunkfam)$
we investigate the possibility of the following: for a solution $X$ we are looking for, it holds that
   $\wX = X \cap \Tbag(w)$ and
the family $\chunkfam$ is exactly the set of possible separator
chunks of $H$ that are subgraphs of $G \setminus X$, where the subgraph relation is defined as on
$t$-boundaried graphs and $G \setminus X$ is equiped with $\bd G \setminus X = \Tbag(w) \setminus X$
and labeling $\labfun|_{\Tbag(w) \setminus X}$.
The difficult part of the proof is to show that this information is sufficient, in particular,
it suffices to keep track only of separator chunks, and not all chunks of $H$.
We emphasize here that the intended meaning of the set $\chunkfam$ is that it represents
separator chunks present in the entire $G \setminus X$, not $G[\Tall(w)] \setminus X$.
That is, to be able to limit ourselves only to separator chunks, we need to encode some prediction
for the future in the state. This makes our dynamic programming algorithm
rather non-standard.

Let us proceed to the formal definition of the dynamic programming table.
For a bag $w$ and a state $\state = (\wX,\chunkfam)$ at $w$ we define the graph $G(w,\state)$
as follows. We first take the $t$-boundaried graph $(G[\Tall(w)] \setminus \wX, \labfun|_{\Tbag(w) \setminus \wX})$,
and then, for each chunk $\chunk \in \chunkfam$ we add a disjoint copy of $\chunk$ to $G(w,\state)$ and identify the pairs
of vertices with the same label in $\bd \chunk$ and in $\Tbag(w) \setminus \wX$.
Note that $G[\Tall(w)] \setminus \wX$ is an induced subgraph of $G(w,\state)$: by the properties
of elements of $\chunkfam$, no new edge has been introduced between two vertices of $\Tbag(w)$.
We make $G(w,\state)$ a $t$-boundaried graph in a natural way: $\bd G(w,\state) = \Tbag(w) \setminus \wX$
with labeling $\labfun|_{\bd G(w,\state)}$.

For each bag $w$ and for each state $\state = (\wX,\chunkfam)$
we say that a set $X \subseteq \Tdown(w)$ is \emph{feasible} if
$G(w,\state) \setminus X$ does not contain any $\col$-$H$-subgraph, and for
any separator $t$-chunk $\chunk$ of $H$, if there is a $\chunk$-subgraph in 
$G(w,\state) \setminus X$ then $\chunk \in \chunkfam$.
We would like to compute the value $T[w,\state]$ that equals a minimum size of a feasible set $X$.

We remark that a reverse implication to the one in the definition of a feasible set $X$ 
(i.e., all chunks of $\chunkfam$ are present in $G(w,\state) \setminus X$) 
is straightforward for any $X$:
we have explicitly glued all chunks of $\chunkfam$ into $G(w,\state)$.
Hence, $T[w,\state]$ asks for a minimum
size of set $X$ whose deletion not only deletes all $\col$-$H$-subgraphs,
but also makes $\chunkfam$ a ``fixed point'' of an operation
of gluing $G[\Tall(w)] \setminus (X \cup \wX)$ along the boundary $\Tbag(w) \setminus \wX$.

Observe that, if $\Tbag(w) = \emptyset$, there is only one state $(\emptyset,\emptyset)$ valid
for the node $w$. Hence, $T[\texttt{root}(\Tree),(\emptyset,\emptyset)]$ is the 
answer to \cshit{} on $(G,\col)$.
In the rest of the proof we focus on computing values $T[w,\state]$ in a bottom-up
fashion in the tree $\Tree$.

\medskip

\noindent\textbf{Leaf node.} As already observed, there is only one valid state for empty bags.
Hence, for each leaf node $w$, we may set $T[w,(\emptyset,\emptyset)] = 0$.

\medskip

\noindent\textbf{Introduce node.} Consider now an introduce node $w$ with child $w'$, and the unique vertex
$v \in \Tbag(w) \setminus \Tbag(w')$.
Furthermore, consider a single state $\state = (\wX,\chunkfam)$ at node $w$.
There are two cases, depending on whether $v \in \wX$ or not.
Observe that if $v \in \wX$ then $\state' = (\wX \setminus \{v\}, \chunkfam)$ is a valid state
for the node $w'$ and, moreover, $G(w,\state) = G(w',\state')$. 
Hence, $T[w,\state] = T[w',\state']$.

The second case, when $v \notin \wX$, is more involved.
Let $\chunkfam' \subseteq \chunkfam$ be the set of all these chunks in $\chunkfam$
that do not use the label $\labfun(v)$. Observe that $\state' = (\wX,\chunkfam')$ is a valid
state for the node $w'$. In what follows we prove that $T[w,\state] = T[w',\state']$
unless some corner case happens.

Consider first the graph $G^\circ := G(w,\state) \setminus \Tdown(w)$ (i.e., we glue all chunks 
of $\chunkfam$, but only to $G[\Tbag(w) \setminus \wX]$ as opposed to of
$G[\Tall(w) \setminus \wX]$; this corresponds to taking the deletion set $X$ maximal possible) with $\bd G^\circ = \Tbag(w)\setminus \wX$.
If there exists an $\col$-$H$-subgraph of $G^\circ$
or a  separator $t$-chunk $\chunk$ that is a subgraph
of $G^\circ$, but does not belong to $\chunkfam$, then clearly there is no
feasible set $X$ for $w$ and $\state$.
In this case we set $T[w,\state] = +\infty$.
Observe that the graph $G^\circ$ has size $t^{\Oh(\msep(H))}$, and hence we can check
if the aforementioned corner case happens in time polynomial in $t$.

We now show that in the remaining case $T[w,\state] = T[w',\state']$.
Consider first a feasible set $X$ for node $w$ and state $\state$.
We claim that $X$ is also feasible for $w'$ and $\state'$; note that $\Tdown(w) = \Tdown(w')$.
Observe that $G(w',\state')$ is a subgraph of $G(w,\state)$, thus, in particular, $G(w',\state') \setminus X$ does not contain any $\col$-$H$-subgraph.
Let $\chunk$ be any separator $t$-chunk that is a subgraph of $G(w',\state') \setminus X$.
By the previous argument, we have that $\chunk$ is a subgraph of $G(w,\state) \setminus X$ as well.
Since $X$ is feasible for $w$ and $\state$, we have that $\chunk \in \chunkfam$. Since $\chunk$ is a subgraph of $G(w',\state') \setminus X$, it does not use the label $\labfun(v)$
and we infer $\chunk \in \chunkfam'$. Thus, $X$ is feasible for $w'$ and $\state'$, and, consequently, $T[w,\state] \geq T[w',\state']$.

In the other direction, consider a feasible set $X$ for node $w'$ and state $\state'$.
We would like to show that $X$ is feasible for $w$ and $\state$.
We start with the following structural observation about minimal separators.
\begin{myclaim}\label{cl:H-sep}
Let $\chunk$ be a separator chunk in $H$.
Let $a_1,a_2 \in \inte \chunk$ and assume $Z \subseteq V(\chunk)$ is an $a_1a_2$-separator in $\chunk$.
Then there exists a separator chunk $\chunk'$ that (a) contains $a_1$ or $a_2$ in its interior, (b) whose vertex set is a proper subset of the vertex set of $\chunk$,
and (c) such that $\bd \chunk' \subseteq Z \cup \bd \chunk$
\end{myclaim}
\begin{proof}
Since $\chunk$ is a separator chunk, there exists a connected component $B$ of $H \setminus \bd \chunk$ that
is vertex-disjoint with $\chunk$, and such that $N_H(B) = \bd \chunk$.
Since $Z$ is an $a_1a_2$-separator in $\chunk$, and $a_1,a_2 \in \inte \chunk$, we have that $Z' := Z \cup \bd \chunk$ is an $a_1a_2$-separator in $H$.
Then we may find a set $S \subseteq Z'$ that is a minimal $a_1a_2$-separator in $H$. For $i=1,2$, let $A_i$ be the connected component of $H \setminus S$ that contains $a_i$.
Since $A_1$ and $A_2$ are vertex-disjoint, and $H[B]$ is connected, $B$ can be contained only in one of these sets.
Without loss of generality we may assume $A_1 \cap B = \emptyset$.
As $N_H(B) = \bd \chunk$, we infer that $A_1 \cap \bd \chunk = \emptyset$ and, consequently, $A_1 \subseteq \inte \chunk$.
Moreover, $a_2 \notin N_H[A_1]$, and hence $A_1 \subsetneq \inte \chunk$.
Since $N_H(A_1) = S$, $\chunk' := \chunk[A_1]$ is a separator chunk that satisfies all the requirements of the claim.
\cqed\end{proof}

We first prove the condition for feasibility with respect to separator $t$-chunks.
\begin{myclaim}\label{cl:introduce-chunk}
For any separator $t$-chunk $\chunk$, if there exists a $\col$-$\chunk$-subgraph in $G(w,\state) \setminus X$, then $\chunk \in \chunkfam$.
\end{myclaim}
\begin{proof}
By contradiction, assume now that there exists a separator $t$-chunk $\chunk$ such that $\chunk \notin \chunkfam$,
but there exists a $\col$-$\chunk$-subgraph $\pi$ in $G(w,\state) \setminus X$.
Without loss of generality assume that $\chunk$ has minimum possible number of vertices, and, for fixed $\chunk$,
the image of $\pi$ contains minimum possible number of vertices of $\Tdown(w) \cup (V(G(w,\state)) \setminus V(G(w',\state')))$.
Since $\chunk$ is a separator chunk, there exists a connected component $B$ of $H \setminus \bd \chunk$ that is vertex-disjoint
with $\chunk$ and such that $N_H(B) = \bd \chunk$. Let $b \in B$ be any vertex.

If the image of $\pi$ does not contain any vertex of $\Tdown(w)$, then $\pi$ is a $\col$-$\chunk$-subgraph in $G^\circ$, and $\chunk \in \chunkfam$ by the previous steps.
Hence, there exists a vertex $a_1 \in \inte \chunk$ such that $\pi(a_1) \in \Tdown(w)$.
Let $A_1$ be the connected component of $H[\pi^{-1}(\Tdown(w))]$ that contains $a_1$.
Observe that $N_H(A_1)$ separates $a_1$ from $b$ in $H$ and $N_H(A_1) \subseteq V(\chunk)$.
Hence, in $H$ there exists a minimal $a_1b$-separator $S \subseteq N_H(A_1)$. Let $A_1'$ be the connected
component of $H \setminus S$ that contains $A_1$. Since $N_H(B) = \bd \chunk$ and $S \subseteq N_H(A_1)\subseteq V(\chunk)$, we infer that $A_1' \subseteq \inte \chunk$.
By minimality of $S$ we have $N_H(A_1') = S$, and hence $\chunk' = \chunk[A_1']$ is a separator
chunk. Moreover, $\pi(N_H(A_1'))=\pi(S) \subseteq \pi(N_H(A_1)) \subseteq \Tbag(w')$, and we may equip $\chunk'$
with a labeling $x \mapsto \labfun(\pi(x))$ for any $x \in S$, constructing a separator $t$-chunk.

Assume first $\chunk' \neq \chunk$. Then $\chunk'$ has strictly less vertices than $\chunk$ (as $A_1' \subseteq \inte \chunk$).
By the choice of $\chunk$, we have $\chunk' \in \chunkfam$. 
Since $v$ does not have any neighbors in $\Tdown(w)$, we have that $v\notin \pi(N_H(A_1))$, so in particular $v\notin \pi(S)$. Hence in fact $\pi(S) \subseteq \Tbag(w')$, and the label $\labfun(v)$ is not used in $\chunk'$. Therefore $\chunk' \in \chunkfam'$.
Consequently, we can modify $\pi$
by remapping the vertices of $A_1'$ to the copy of $\chunk'$ that has been glued into $G(w',\state')$, obtaining
a $\chunk$-subgraph of $G(w,\state) \setminus X$ with strictly less vertices in $\Tdown(w) \cup (V(G(w,\state)) \setminus V(G(w',\state')))$, a contradiction. Note here that the color constraints ensure that the vertices we remap $A_1'$ to are not used by $\pi$, and thus the remapped $\pi$ is still injective.

We are left with the case $\chunk' = \chunk$, that is, $A_1' = \inte \chunk'$. In particular, this implies $v \notin \pi(\bd \chunk)$.
If the image of $\pi$ does not contain any vertex of $V(G(w,\state)) \setminus V(G(w',\state'))$, then
$\pi$ is a $\chunk$-subgraph in $G(w',\state') \setminus X$
and, by the feasibility of $X$ for $w'$ and $\state'$, we have $\chunk \in \chunkfam' \subseteq \chunkfam$, a contradiction.
Hence, there exists a vertex $a_2 \in V(\chunk)$ such that $\pi(a_2) \notin V(G(w',\state'))$.

Observe that $Z := \pi^{-1}(\Tbag(w'))\cap \chunk$ is a separator between $a_1$ and $a_2$ in $\chunk$. We apply Claim~\ref{cl:H-sep} and obtain a chunk $\chunk''$.
Note that $\pi(\bd \chunk'') \subseteq \Tbag(w')$, as $v \notin \pi(\bd \chunk)$. 
Hence, we may treat $\chunk''$ as a separator $t$-chunk with labeling $x \mapsto \labfun(\pi(x))$,
and a restriction of $\pi$ is a $\chunk''$-subgraph in $G(w,\state) \setminus X$.
Since $\chunk''$ has strictly less vertices than $\chunk$, by the choice of $\chunk$ we have $\chunk'' \in \chunkfam$.
Moreover, $v \notin \pi(Z \cup \bd \chunk)$ and, consequently, $\labfun(v)$ is not used in $\chunk''$ and $\chunk'' \in \chunkfam'$.
Hence, we may modify $\pi$ by remapping all vertices of $\inte \chunk''$ to the copy of $\chunk''$, glued onto $\Tbag(w')$ in the process of constructing $G(w',\state')$; again, the color constraints ensure that this remapping preserves injectivity of $\pi$.
In this manner we obtain
a $\col$-$\chunk$-subgraph of $G(w,\state) \setminus X$ with strictly less vertices in $\Tdown(w) \cup (V(G(w,\state)) \setminus V(G(w',\state')))$, a contradiction.
\cqed\end{proof}

We now move to the second property of a feasible set.
\begin{myclaim}\label{cl:introduce-H}
There are no $\col$-$H$-subgraphs in $G(w,\state)\setminus X$.
\end{myclaim}
\begin{proof}
By contradiction, assume there exists a $\col$-$H$-subgraph $\pi$ in $G(w,\state) \setminus X$.
Without loss of generality pick $\pi$ such that minimizes the number of vertices of $\pi(V(H))$ that belong to $\Tdown(w)$.
As $G^\circ$ does not contain any $\col$-$H$-subgraph, there exists $a \in V(H)$ such that $\pi(a) \in \Tdown(w)$.
Since $X$ is feasible for $w'$ and $\state'$, there exists $b \in V(H)$ such that $\pi(b) \notin G(w',\state')$.
Observe that $\Tbag(w') \setminus \wX$ separates $\Tdown(w) \setminus X$ from $G(w,\state) \setminus \Tall(w')$ in the graph $G(w,\state) \setminus X$.
Hence, there exists a minimal $ab$-separator $S$ in $H$ such that $\pi(S) \subseteq \Tbag(w')$. Since $H$ is connected, $S \neq \emptyset$.
Let $A$ be the connected component of $H \setminus S$ that contains $a$.
Note that $\chunk[A]$ is a separator chunk in $H$ with $\bd \chunk[A] = N_H(A) = S$. 
Define $\bdfun: S \to \{1,2,\ldots,t\}$ as $\bdfun(x) = \labfun(\pi(x))$ for any $x \in S$.
With this labeling, $\chunk[A]$ becomes a separator $t$-chunk and $\pi|_{N_H[A]}$ is a $\col$-$\chunk$-subgraph in $G(w,\state) \setminus X$.
By Claim~\ref{cl:introduce-chunk}, $\chunk \in \chunkfam$.
Hence, we can modify $\pi$ by remapping all vertices of $\inte \chunk$ to the copy of $\chunk$ that has been
glued into $G(w,\state)$ in the process of its construction; again, the color constraints ensure that this remapping preserves injectivity of $\pi$.
In this manner we obtain a $\col$-$H$-subgraph in $G(w,\state) \setminus X$ with strictly less verties of $\pi(V(H)) \cap \Tdown(w)$, a contradiction to the choice of $\pi$.
\cqed\end{proof}

Claims~\ref{cl:introduce-chunk} and~\ref{cl:introduce-H} conclude the proof of the correctness of computations at an introduce node.

\medskip

\noindent\textbf{Forget node.}
Consider now a forget node $w$ with child $w'$, and a unique vertex $v \in \Tbag(w') \setminus \Tbag(w)$.
Let $\state = (\wX,\chunkfam)$ be a state for $w$; we are to compute $T[w,\state]$.
We shall identify a (relatively small) family $\statefam$ of states for $w'$ such that
\begin{enumerate}
\item for any $\state' = (\wX',\chunkfam') \in \statefam$ and for any set $X'$ feasible for $w'$ and $\state'$, the set $X' \cup (\{v\} \cap \wX')$ is feasible for $w$ and $\state$;
\item for any set $X$ feasible for $w$ and $\state$, there exists $\state' = (\wX',\chunkfam') \in \statefam$ such that $X \cap \{v\} = \wX' \cap \{v\}$ and 
$X \setminus \{v\}$ is feasible for $w'$ and $\state'$.
\end{enumerate}
Using such a claim, we may conclude that
$$T[w,\state] = \min_{\state' = (\wX',\chunkfam') \in \statefam} T[w',\state'] + |\wX' \cap \{v\}|.$$
We now proceed to the construction of $\statefam$.

First, observe that $\state' := (\wX \cup \{v\}, \chunkfam)$ is a valid state for $w'$ and, moreover, $G(w',\state') = G(w,\state) \setminus \{v\}$.
Hence, for any $X' \subseteq \Tdown(w')$ we have that $X'$ is feasible for $\state'$ if and only if $X' \cup \{v\}$ is feasible for $\state$.
Thus, it is safe to include $\state'$ in the family $\statefam$ (it satisfies the first property of $\statefam$) and it fulfils the second property
for all feasible sets $X$ containing $v$.

Second, we add to $\statefam$ all states $\state' = (\wX',\chunkfam')$ for the node $w'$ that satisfy the following: $\wX' = \wX$
and $\chunkfam$ is exactly the family of these chunks $\chunk\in \chunkfam'$ that do not use the label $\labfun(v)$.
Observe that for any such state, $G(w',\state')$ is a supergraph of $G(w,\state)$, with the only difference
that in $G(w',\state')$ the vertex $v$ receives label $\labfun(v)$. We infer that any set $X'$ that is feasible for $w'$ and $\state'$
is also feasible for $w$ and $\state$. 

This finishes the description of the family $\statefam$. It remains to argue that for every set $X$ that is feasible for $w$ and $\state$ and does not contain $v$,
there exists a state $\state'$ added in the second step such that $X$ is feasible for $w'$ and $\state'$.

To this end, for a fixed such $X$ define $\chunkfam'$ as follows: a separator $t$-chunk $\chunk$ belongs to $\chunkfam'$ if and only
if there is a $\chunk$-subgraph in a $t$-boundaried graph $G^1 := (G(w,\state) \setminus X, \labfun|_{\Tbag(w')})$.
We emphasize that this definition 
differs from the standard definition of a $t$-boundaried graph $G(w,\state) \setminus X$ on the vertex $v$: $v \in \bd G^1$ and it has label $\labfun(v)$.
Observe that, since $G^1$ differs from $G(w,\state) \setminus X$ (treated as a $t$-boundaried graph) only on the labeling of $v$,
we have that $\chunkfam$ is exactly the family of chunks of $\chunkfam'$ that do not use the label $\labfun(v)$.
Moreover, note that $\state' := (\wX,\chunkfam')$ is a valid state for $w'$, and hence $\state' \in \statefam$.
We now argue that $X$ is feasible for $w'$ and $\state'$.

To this end, consider any (possibly $t$-boundaried) subgraph $H'$ of $H$ and a $\col$-$H'$-subgraph $\pi$ in $G(w',\state') \setminus X$.
Observe that $G^1$ is a subgraph of $G(w',\state') \setminus X$, as $\chunkfam \subseteq \chunkfam'$.
Pick $\pi$ that minimizes the number of vertices of $\pi(V(H'))$ that do not belong to $V(G^1)$.
We claim that there is no such vertex.
Assume otherwise, and let $a \in V(H')$ be such that $\pi(a) \notin V(G^1)$. Thus, $\pi(a)$ lies in the interior
of some chunk $\chunk \in \chunkfam'$ that was glued onto $\Tbag(w')$ in the process of constructing $G(w',\state')$.
By the definition of $\chunkfam'$, there exists a $\col$-$\chunk$-subgraph $\pi_\chunk$ in $G^1$.
Define $\pi'$ as follows $\pi'(c) = \pi_\chunk(c)$ if $c$ belongs to $\chunk$, and $\pi'(c) = \pi(c)$ otherwise. Again, the color constraints ensure that $\pi'$ is injective.
Observe that $\pi'$ is also a $\col$-$H'$-subgraph in $G(w',\state')$ with strictly smaller number of vertices in $\pi(V(H')) \setminus V(G^1)$, a contradition to the choice of $\pi$. 

We infer that, for any (possibly $t$-boundaried) subgraph $H'$ of $H$, there exists a $\col$-$H'$-subgraph $\pi$ in $G(w',\state') \setminus X$ if and only if it exists in $G^1$.
This implies that $X$ is feasible for $w'$ and $\state'$, and concludes the description of the computation in the forget node.

\medskip

\noindent\textbf{Join node.}
Let $w$ be a join node with children $w_1$ and $w_2$, and let $\state = (\wX,\chunkfam)$ be a state for $w$.
Observe that $\state$ is a valid state for $w_1$ and $w_2$ as well.
We claim that $T[w,\state] = T[w_1,\state] + T[w_2,\state]$. 

To prove it, first consider a set $X$ feasible for $w$ and $\state$. Define $X_i = X \cap \Tdown(w_i)$ for $i=1,2$; note that $X = X_1 \uplus X_2$.
Observe that $G(w_i,\state) \setminus X_i$ is a subgraph of $G(w,\state)$ for every $i=1,2$.
Hence, $X_i$ is feasible for $w_i$ and $\state$ and, consequently, $T[w,\state] \geq T[w_1,\state] + T[w_2,\state]$.

In the other direction, let $X_1$ be a feasible set for $w_1$ and $\state$, and let $X_2$ be a feasible set for $w_2$ and $\state$.
We claim that $X := X_1 \cup X_2$ is feasible for $w$ and $\state$; note that such a claim would conclude the description of the computation at a join node.
To this end, let $\pi$ be a $\col$-$H'$-subgraph in $G(w,\state) \setminus X$, where $H'=H$ or $H'$ is a some separator $t$-chunk in $H$.
In what follows we prove, by induction on $|V(H')|$,
that there exists a $\col$-$H'$-subgraph in $G(w_1,\state) \setminus X_1$ or in $G(w_2,\state) \setminus X_2$. Note that this claim is sufficient to prove that $X$ is feasible
for $w$ and $\state$.

Fix $H'$. Without loss of generality, pick $\pi$ that minimizes the number of vertices in $\pi(V(H')) \cap \Tdown(w)$. 
If there exists $i \in \{1,2\}$ such that $\pi(V(H'))$ does not contain any vertex of $\Tdown(w_i)$, then $\pi$ is a $\col$-$H'$-subgraph of $G(w_{3-i},\state) \setminus X_{3-i}$, and we are done.
Hence, assume for each $i \in \{1,2\}$ there exists $a_i \in V(H')$ such that $\pi(a_i) \in \Tdown(w_i)$.
Observe that $Z := \pi^{-1}(\Tbag(w))$ separates $a_1$ from $a_2$ in $H'$.

First, we focus on the case $H' = \chunk$ being a separator $t$-chunk.
For a fixed chunk $\chunk$, we apply Claim~\ref{cl:H-sep} to the vertices $a_1,a_2$ and the set $Z$, obtaining a chunk $\chunk'$.
By the inductive assumption, there exists a $\col$-$\chunk'$-subgraph in $G(w_i,\state) \setminus X_i$ for some $i=1,2$.
Hence, $\chunk' \in \chunkfam$, and we may modify $\pi$ by redirecting the vertices of $\inte \chunk'$ to the copy of $\chunk'$
that has been glued into $G(w_i,\state)$ in the process of its construction.
In this manner, we obtain a $\col$-$H'$-subgraph of $G(w,\state) \setminus X$ with strictly less vertices of $\Tdown(w)$,
a contradiction to the choice of $\pi$.

Second, we focus on the case $H' = H$. 
Let $S \subseteq Z$ be a minimal $a_1a_2$-separator in $H$.
Let $A_1$ be the connected component of $H \setminus S$ that contains $a_1$.
Note that $\chunk[A_1]$ with the labeling $x \mapsto \labfun(\pi(x))$ for $x \in S$ is a separator $t$-chunk, 
and $\pi|_{N_H[A_1]}$ is a $\col$-$\chunk$-subgraph in $G(w,\state) \setminus X$.
By the induction hypothesis, there exists a $\col$-$\chunk$-subgraph in $G(w_i,\state) \setminus X_i$ for some $i=1,2$ and, consequently, $\chunk \in \chunkfam$.
Modify $\pi$ by redirecting the vertices of $\inte \chunk$ to the copy of $\chunk$ that has been glued into $G(w,\state)$ in the process of its construction.
In this manner we obtain a $\col$-$H$-subgraph in $G(w,\state) \setminus X$ with strictly less vertices in $\Tdown(w)$, a contradiction to the choice of $\pi$.

Thus, we have shown that $X$ is a feasible set for $w$ and $\state$, concluding the proof of the correctness of the computation at join node.
Since each node has $2^{\Oh(t^{\msep(H)})}$ states, and there are $\Oh(t|V(G)|)$ nodes in $\Tree$, the time bound of the computation follows.
This concludes the proof of Theorem~\ref{thm:cshit:algo}.
\maybeqed\end{proof}

\section{Lower bound for \cshit{}}\label{sec:lb-col}

In this section we prove a tight lower bound
for \cshit{}. The proofs are inspired by the approach of~\cite{cut-and-count-logic}
for the lower bond for \shitarg{C_\ell}.

In our constructions, we often use the following basic operation.
Let $(G,\col)$ be an $H$-colored graph constructed so far.
We pick some induced subgraph $H[Z]$ of $H$ and ``add a copy of $H[Z]$ to $G$''.
By this, we mean that we take a disjoint union of $G$ and $H[Z]$, and color
$H[Z]$ (extend $\col$ to the copy of $H[Z]$) naturally: a vertex $d \in H[Z]$ receives
color $d$.
After this operation, we often identify some vertices of the new copy $H[Z]$ with some
old vertices of $G$. However, we always identify pairs of vertices of the same color,
thus $\col$ is defined naturally after the identification.

We first start with a simple single-exponential lower bound
that describes the case when $H$ is a forest.

\begin{theorem}\label{thm:lb:col-vc}
Let $H$ be a graph that contains a connected component that is not a path.
Then, unless ETH fails, there does
not exist an algorithm that, given a \cshit{} instance $(G,\col)$
and a tree decomposition of $G$ of width $t$, resolves $(G,\col)$
in time $2^{o(t)} |V(G)|^{\Oh(1)}$.
\end{theorem}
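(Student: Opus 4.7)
The plan is to reduce from clean \threesat{}: given a clean $3$-CNF formula $\Phi$ with $n$ variables and $m = \Oh(n)$ clauses (via Lemma~\ref{lem:preprocess} and Theorem~\ref{thm:spars}), I would construct a \cshit{} instance $(G,\col)$ equipped with a tree decomposition of $G$ of width $\Oh(n)$ and a budget $k$, such that $G$ admits a hitting set of size at most $k$ if and only if $\Phi$ is satisfiable. A $2^{o(t)} |V(G)|^{\Oh(1)}$-time algorithm for \cshit{} would then yield a $2^{o(n+m)}$-time algorithm for \threesat{}, contradicting ETH. By Lemma~\ref{lem:cshit:conn} and a routine padding argument for the remaining connected components, we may restrict attention to the case where $H$ itself is connected and not a path; such $H$ either contains a vertex of degree at least $3$ or is a cycle of length at least $3$.

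The construction of $G$ proceeds in two layers, in the spirit of Theorem~\ref{thm:lb:Hh2} but employing a colourful \emph{attachment gadget} $\Gamma_{u,v}$ whose interior vertices carry freshly chosen colours from $V(H)$, so that gluing $\Gamma_{u,v}$ onto two host vertices $u, v$ creates exactly one new $\col$-$H$-subgraph, with $u$ and $v$ as its only externally-shared vertices. The \emph{variable layer} introduces $2n$ vertices $\{u_i^+, u_i^-\}_{i=1}^n$ coupled pairwise by such $\Gamma$-gadgets so that any feasible hitting set must pick exactly one of $u_i^+, u_i^-$ per $i$, encoding a truth assignment. The \emph{clause layer} adds, for every clause $C_j = \ell_{j,1} \vee \ell_{j,2} \vee \ell_{j,3}$, a constant-size gadget attached to the three literal vertices of $C_j$, hosting a $\col$-$H$-subgraph whose only hitting options reach back to those three anchors; satisfying the clause corresponds to choosing at least one anchor. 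By setting the budget $k$ to be tight (roughly $n + m$ plus gadget overhead), a feasible solution exists if and only if the induced truth assignment satisfies every clause of $\Phi$. A tree decomposition whose central bag contains all $2n$ literal vertices and which hangs each constant-size gadget as a separate subtree then has width $\Oh(n)$, matching the required bound.

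The main obstacle is the precise design of the attachment gadget $\Gamma_{u,v}$ and of the clause gadget: they must embed the intended $\col$-$H$-copies across designated anchors while introducing no spurious copies that could bypass the constraints. The colourful setting eases this substantially, since one can assign each gadget's interior vertices private colours from $V(H)$, preventing unintended $\col$-$H$-copies across distinct gadgets. The gadget design splits into two cases: if $H$ has a vertex $r$ of degree at least $3$, we choose $r$ (or one of its neighbours) as the anchor colour and realise $H$-copies branching at $r$; if $H$ is a cycle, we split the cycle into two internally disjoint arcs meeting at the two anchors. Once these gadget designs are verified, the equivalence between satisfying assignments of $\Phi$ and feasible hitting sets of $(G,\col)$ follows from a routine budget-tightness calculation.
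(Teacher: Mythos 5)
Your plan reduces from clean \threesat{} with variable/clause gadgets, whereas the paper reduces directly from \textsc{Vertex Cover}, which is considerably simpler for this statement: it needs only edge-level ``pick at least one endpoint'' constraints, which are enforced by a three-copy gadget per edge and a global budget count, rather than both a variable-consistency mechanism and a clause (OR-of-three) mechanism. The paper's key observation, which your plan does not use, is that any connected non-path graph $C$ has at least three non-cutvertices $a$, $b$, $c$; for each edge $uv$ of $G_0$ it then attaches copies $C_{uv,u}$, $C_{uv}$, $C_{uv,v}$ of $H[C]$, identified at $a$ (with $u$ or $v$) and at $b$, $c$ (between adjacent copies), so that hitting all three copies costs at least two vertices with at least one in $C_{uv}\setminus\{u,v\}$, and the budget $|E(G_0)|+\mathrm{VC}(G_0)$ is met exactly when the ``anchor'' vertices chosen form a vertex cover.

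Your sketch has a genuine gap in the gadget design, and I do not think it can be completed in the form stated. Your $\Gamma_{u,v}$ is supposed to create ``exactly one new $\col$-$H$-subgraph'' with $u$ and $v$ inside it; hitting a single $\col$-$H$-copy forces at least one of its $|V(H)|$ vertices into the solution, which gives you an $(|V(H)|)$-way disjunction, not ``exactly one of $\{u,v\}$,'' so the variable layer as described does not encode a truth assignment (an extra budget/structure argument is needed, analogous to the paper's $\alpha$-$r$-cycles in Theorem~\ref{thm:lb:col}, which crucially need $\msep(H)\ge 2$). The clause gadget has a sharper problem: in the colorful setting a single $\col$-$H$-copy contains \emph{exactly one} vertex of each colour of $V(H)$, so if the three literal anchors share an anchor colour, no single $H$-copy can contain all three; and if they are given three distinct anchor colours, then the anchor colour of a literal vertex depends on its position in the clause, which conflicts with the same vertex being shared with the variable gadget, with clauses of size $2$ as well as $3$ after cleaning, and with the fact that a clean variable occurs three times with mixed polarities. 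None of these issues are addressed, and they are precisely the ones that make the paper prove the $\msep(H)=1$ base case via \textsc{Vertex Cover} separately from the $\msep(H)\ge 2$ case in Theorem~\ref{thm:lb:col}.
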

\begin{proof}
We reduce from the well-known \textsc{Vertex Cover} problem.
We show a polynomial-time algorithm that, given a graph $G_0$, 
outputs a \cshit{} instance $(G,\col)$ together with a tree
decomposition of $G$ of width $|V(G_0)| + \Oh(1)$
such that the minimum possible size
of a vertex cover in $G_0$ equals the minimum possible size of a solution to $(G,\col)$ minus
$|E(G_0)|$.
As a $2^{o(n)}$-time algorithm for \textsc{Vertex Cover}
would contradict ETH~\cite{vc-subexp}, this will imply the statement of the theorem.

Let $C$ be a connected component of $H$ that is not a path.
It is easy to verify that in such a component there always exist at least three vertices
that are not cutvertices; let $a$, $b$ and $c$ be any three of them.

We construct an instance $(G,\col)$ as follows.
We start with $V(G) = V(G_0)$, with each vertex of $V(G_0)$ colored $a$.
Then, for each edge $uv \in E(G_0)$ we add three copies $C_{uv},C_{uv,u}$ and $C_{uv,v}$
of the graph $H[C]$
and identify the following pairs of vertices:
\begin{itemize}
\item the vertex $a$ in the copy $C_{uv,u}$ with the vertex $u$,
\item the vertex $a$ in the copy $C_{uv,v}$ with the vertex $v$,
\item the vertices $b$ in the copies $C_{uv,u}$ and $C_{uv}$, and
\item the vertices $c$ in the copies $C_{uv,v}$ and $C_{uv}$.
\end{itemize}
Define $D_{uv}$ to be the vertex set of all copies of $H[C]$ introduced for the edge $uv$.
Let $\pi_{uv}, \pi_{uv,u}$ and $\pi_{uv,v}$ be the (naturally induced)
injective homomorphisms from $H[C]$ to $G[C_{uv}], G[C_{uv,u}]$ and $G[C_{uv,v}]$, respectively.

Let $(G',\col')$ be the instance constructed so far, and observe
that all values of $\col'$ lie in $C$.

Finally, we add to $G$ a large number (at least $|E(G_0)| + |V(G_0)| + 1$) copies of $H \setminus C$.
This completes the description of the instance $(G,\col)$.
Observe that each connected component of $G \setminus V(G_0)$ is of size at most
$3|V(H)|-4 = \Oh(1)$.
Hence, it is straightforward to provide a tree decomposition of $G$ of width $|V(G_0)| + \Oh(1)$.

We now argue about the correctness of the reduction. First, let $Z$ be a vertex cover of $G_0$.
Define $X \subseteq V(G)$ as follows: first set $X := Z$ and then, for each edge $uv \in E(G_0)$,
if $u \in Z$ then add to $X$ the vertex $c$ in the copy $C_{uv}$
and otherwise add the vertex $b$ in the copy $C_{uv}$.
Clearly, $|X| = |Z| + |E(G_0)|$. We claim that $X$ hits all $\col'$-$H[C]$-subgraphs of $G'$,
and, consequently by Lemma~\ref{lem:cshit:conn}, hits all $\col$-$H$-subgraphs of $G$.

Let $\pi$ be any $\col'$-$H[C]$-subgraph of $G'$.
Recall that none of the vertices $a$, $b$ and $c$ is a cutvertex of $H[C]$.
As each vertex of $V(G_0)$ is colored $a$, at most one such vertex can be used in the image of $\pi$.
We infer that there exists
a single edge $uv \in E(G_0)$ such that $\pi(C) \subseteq D_{uv}$, as otherwise $\pi(a)$ is a cutvertex of $\pi(H[C])$.
Moreover, as the vertices $b$ and $c$ in the copy $C_{uv}$ are cutvertices of $G[D_{uv}]$,
we have that $\pi(C)$ is contained in one of the sets $C_{uv}$, $C_{uv,u}$ or $C_{uv,v}$.
However, each of this set is of size $|C|$, and each has non-empty intersection
with $X$. The claim follows.

In the other direction, let $X \subseteq V(G)$ be such that $X$ hits
all $\col$-$H$-subgraphs of $G$. We claim that there exists a vertex cover of $G_0$
of size at most $|X| - |E(G_0)|$. As $G$ contains a large number of copies of $H \setminus C$,
the set $X \cap V(G')$ needs to hit all $\col'$-$H[C]$-subgraphs of $G'$.

In particular, $X$ hits $\pi_{uv}$ for every $uv \in E(G_0)$. For every $uv \in E(G_0)$
pick one $x_{uv} \in X \cap C_{uv}$; note that $x_{uv}$ are pairwise distinct
as the sets $C_{uv}$ are pairwise disjoint.
Denote $Y = X \setminus \{x_{uv} : uv \in E(G_0)\}$. 
Construct a set $Z \subseteq V(G_0)$ as follows: for each $y \in Y$
\begin{enumerate}
\item if $y \in V(G_0)$, insert $y$ into $Z$;
\item if $y \in C_{uv,u} \setminus \{u\}$ for some $uv \in E(G_0)$, insert $u$ into $Z$;
\item if $y \in C_{uv,v} \setminus \{v\}$ for some $uv \in E(G_0)$, insert $v$ into $Z$; 
\item otherwise, do nothing.
\end{enumerate}
Observe that each $y \in Y$ gives rise to at most one vertex in $Z$. Hence,
$|Z| \leq |Y| = |X| - |E(G_0)|$. 
We finish the proof of the theorem by showing that
$Z$ is a vertex cover of $G_0$.

Consider any $uv \in E(G_0)$. The vertex $x_{uv}$ cannot be simultanously equal
to both the vertex $b$ and the vertex $c$ in the copy $C_{uv}$; by symmetry, assume
$x_{uv}$ does not equal the vertex $b$ in the copy $C_{uv}$.
Hence, $x_{uv}$ does not hit $\pi_{uv,u}$ and, consequently, there exists $y \in Y$
that hits $\pi_{uv,u}$. By the construction of $Z$, the vertex $y$ forces $u \in Z$.
As the choice of $uv$ was arbitrary, $Z$ is a vertex cover of $G_0$ and the theorem is proven.
\maybeqed\end{proof}

We are now ready to the main lower bound construction.

\begin{theorem}[Theorem~\ref{thm:intro:lb:col} restated]\label{thm:lb:col}
Let $H$ be a graph that contains a connected component that is neither a path nor a clique.
Then, unless ETH fails, there does
not exist an algorithm that, given a \cshit{} instance $(G,\col)$
and a tree decomposition of $G$ of width $t$, resolves $(G,\col)$
in time $2^{o(t^{\msep(H)})} |V(G)|^{\Oh(1)}$.
\end{theorem}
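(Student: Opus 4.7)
The plan is to generalize the reduction behind Theorem~\ref{thm:lb:Hh2} from $H_h$ to an arbitrary pattern $H$, with colors playing the role that the attached triangles played in $H_h$. First I dispose of two easy reductions. If $\msep(H) \leq 1$, the claimed bound $2^{o(t^{\msep(H)})} = 2^{o(t)}$ follows from Theorem~\ref{thm:lb:col-vc}, so I may assume $h := \msep(H) \geq 2$; since paths have $\msep = 1$ and cliques have $\msep = 0$, there exists a connected component $C$ of $H$ with $\msep(C) = h$, and this $C$ is neither a path nor a clique. By Lemma~\ref{lem:cshit:conn}, the optimum hitting-set size equals the minimum, over components $C'$ of $H$, of the optimum for hitting all $\col|_{V_{C'}}$-$C'$-subgraphs in $G[V_{C'}]$. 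To reduce to the case $H = C$ connected, I add, for every other component $C' \neq C$, a large number of vertex-disjoint fixed $\col$-$C'$-subgraphs as extra connected components of $G$, each using fresh vertices colored by $V(C')$; this does not affect the treewidth, but drives the hitting cost for $C' \neq C$ above any target budget, so the optimum is dominated by $C$.

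It thus suffices, given a clean $3$-CNF formula $\Phi$ with $n$ variables and $m$ clauses (Lemma~\ref{lem:preprocess}), to build a \cshit{} instance with pattern $H$ (now connected, non-path, non-clique, with $h = \msep(H) \geq 2$), of size polynomial in $n$ and treewidth $\Oh(n^{1/h})$, such that the hitting-set optimum equals a specific budget $k$ iff $\Phi$ is satisfiable. Fix a minimal $ab$-separator $S = \{z_1, \ldots, z_h\}$ of $H$, and let $A, B$ be the connected components of $H \setminus S$ containing $a, b$; by minimality $N_H(A) = N_H(B) = S$. I introduce a central set $M = \{w_{j,i} : j \in [s], i \in [h]\}$ with $s = \lceil (3n)^{1/h} \rceil$, where $w_{j,i}$ receives color $z_i$, and assign each literal occurrence $(C, l)$ a distinct function $f_{C,l} \colon [h] \to [s]$. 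For each $(C, l)$ I attach a \emph{literal gadget}: one fresh vertex per color in $V(H) \setminus S$, wired so that every $H$-edge between $V(H) \setminus S$ and $S$ becomes an edge to $M$ via $f_{C, l}$, with $M$ carrying all within-$S$ edges needed to realize these functions. Because all $z_i$-colored vertices lie in $M$ and the $f_{C, l}$'s are pairwise distinct, the image of $S$ in any $\col$-$H$-subgraph must match a single $f_{C,l}$, so all of $A$, $B$, and the remaining components of $H \setminus S$ must be drawn from one and the same literal gadget; hence each literal contributes exactly one \emph{literal subgraph}. After removing $M$, every connected component has constant size, giving treewidth $\Oh(s) = \Oh(n^{1/h})$.

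Finally, to encode $\Phi$ I install variable and clause gadgets exactly as in Theorem~\ref{thm:lb:Hh2}. The primitive ``attach a copy of $H$ at $(u, v)$'' is now the colorful operation: for $u$ colored $a$ and $v$ colored $b$, add $|V(H)| - 2$ fresh vertices colored by $V(H) \setminus \{a, b\}$ together with all $H$-edges needed to form a $\col$-$H$-subgraph placing $u$ at $a$, $v$ at $b$, and the fresh vertices at their natural positions. For each variable $x$ I use two color-$a$ representatives (one for each of the two same-sign literal occurrences) and two color-$b$ representatives (for the opposite-sign occurrence and a dummy), with four attachments forming a $4$-cycle in the ``forcing'' graph whose two minimum vertex covers correspond exactly to the two Boolean valuations of $x$; clause gadgets are analogous. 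Setting the same budget $k = 5n - m$ as in Theorem~\ref{thm:lb:Hh2} and replaying its tightness analysis yields the equivalence with satisfiability of $\Phi$, and plugging $t = \Oh(n^{1/h})$ into a hypothetical algorithm of time $2^{o(t^h)} |V(G)|^{\Oh(1)}$ would decide $\Phi$ in time $2^{o(n)}$, contradicting ETH via Theorem~\ref{thm:spars}. The main obstacle I anticipate is the budget-tightness in full generality: I must verify that the ``rest'' of $H \setminus S$ beyond $A$ and $B$ does not provide cheaper alternative hitters for the literal subgraphs, and that the fresh $S$-colored vertices appearing inside variable and clause attachments do not conspire with literal gadgets to create additional unintended $\col$-$H$-subgraphs that the tight budget cannot afford to hit.
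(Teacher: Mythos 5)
Your top-level scaffolding matches the paper's: handle $\msep(H)\le 1$ via Theorem~\ref{thm:lb:col-vc}, reduce to connected $H$ with $\msep(H)=h\ge 2$ by padding other components, build a central set $M$ of $hs$ vertices with $s=\Oh(n^{1/h})$, and assign distinct functions $f_{C,l}\colon S\to[s]$ to literal occurrences. But the actual gadget layer contains a genuine gap, and you have in fact flagged the symptoms of it yourself.

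The critical obstacle in the colorful setting is that you \emph{cannot} attach a copy of $H$ at two vertices of the same color. In the paper's construction every literal occurrence $(C,l)$ must contribute a color-$a$ vertex $a_{x,C,l}$ in the variable gadget (it is the $a$-vertex of an $H[N[A]]$-copy glued into $M$ via $f_{C,l}$) and a color-$b$ vertex $b_{C,l}$ in the clause gadget (the $b$-vertex of an $H[N[B]]$-copy glued via the same $f_{C,l}$). Since a variable appears three times, the variable gadget ends up with \emph{three} color-$a$ attachment points plus a dummy $a_x$; all four are colored $a$. Your proposal of using two color-$a$ and two color-$b$ representatives in the variable gadget therefore conflicts with the role assignment: the representative for the opposite-sign occurrence must still be the $a$-vertex of an $A$-copy if the corresponding literal subgraph is to exist and go through $M$. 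If you nevertheless try to swap roles (put an $A$-copy in the clause gadget for the opposite-sign literal, etc.), you will find that the clause gadget becomes a star with a single color-$a$ center among color-$b$ leaves, and a star has a vertex cover of size $1$, so the ``pick $r_C-1$ of the $b_{C,l}$'' semantics is lost. This is precisely why the paper has to introduce the $\alpha$-OR-gadget (three copies of $H[D]$ chained through two distinct vertices of $S$, which is where $\msep\ge 2$ is used) to force a choice between two \emph{same-colored} vertices; Claims~\ref{cl:OR-gadget} and~\ref{cl:cycle} then quantify the hitting cost of these gadgets. Correspondingly the budget is not $5n-m$: each $a$-$4$-cycle costs $4+\lceil 4/2\rceil=6$ (not $2$) and each $b$-$r_C$-cycle costs $r_C+\lceil r_C/2\rceil = 2r_C-1$ (not $r_C-1$), so the paper's budget is $12n-m$. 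Carrying over $k=5n-m$ from the uncolored $H_h$ reduction is a sign that the attachment primitive you are using does not survive the color constraint.

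Finally, the two issues you list at the end as ``obstacles I anticipate'' are not peripheral --- they are exactly the content of the correctness proof. The ``rest of $H\setminus S$'' is handled in the paper by adding, for every other component $L$ of $H[D]\setminus S$ and every function $f\colon N_H(L)\to [s]$, a free copy $D_{L,f}$ glued into $M$, so that these pieces of $H$ are always available and never need to be hit. The ``crosstalk'' between fresh $S$-colored vertices inside the forcing gadgets and the $M$-mediated literal subgraphs is resolved by the disjunctive case analysis in the paper's correctness argument (any $\col$-$H$-subgraph either enters the interior of some $\alpha$-$r$-cycle, in which case Claim~\ref{cl:cycle} handles it, or maps $S$ entirely into $M$, in which case it is a literal subgraph); this in turn hinges on the fact that in an OR-gadget the attachment vertices $a$ (or $b$) are not cutvertices of $H$ and that $|S|\ge 2$. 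Without a concrete replacement for the OR-gadget and a matching version of Claims~\ref{cl:OR-gadget}--\ref{cl:cycle}, the budget-tightness argument does not go through.
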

\begin{proof}
The case $\msep(H) = 1$ is proven by Theorem~\ref{thm:lb:col-vc}, so in the remainder
of the proof we focus on the case $\msep := \msep(H) \geq 2$.

We show a polynomial-time algorithm that, given a clean $3$-CNF formula $\Phi$ with $n$
variables,
outputs a \cshit{} instance $(G,\col)$ together with a tree decomposition of $G$
of width $\Oh(n^{1/\msep})$ and an integer $k$, such that $\Phi$ is satisfiable
if and only if there exists a set $X \subseteq V(G)$ of size at most $k$
that hits all $\col$-$H$-subgraphs of $G$.
By Lemma~\ref{lem:preprocess}, this would in fact give a reduction from an arbitrary
$3$-CNF formula, and hence conclude the proof of the theorem by Theorem~\ref{thm:spars}.

Let $a,b \in V(H)$ be such that $S$ is a minimal $ab$-separator in $H$ and $|S| = \msep$.
We pick vertices $a$ and $b$ in such a manner that neither of them is a cutvertex in $H$;
observe that this is always possible.
Let $A,B$ be the connected components of $H \setminus S$ that contain $a$ and $b$, respectively.
Finally, let $D$ be the connected component of $H$ that contains both $a$ and $b$.

We first develop two auxiliary gadgets for the construction.
The first gadget, an $\alpha$-\emph{OR-gadget} for $\alpha \in \{a,b\}$, is constructed as follows.
Let $c$ and $d$ be two arbitrary vertices of $S$.
We take three copies $D^1,D^\circ,D^2$ of $H[D]$ and identify:
\begin{enumerate}
\item the vertex $c$ in the copies $D^1$ and $D^\circ$, and
\item the vertex $d$ in the copies $D^2$ and $D^\circ$.
\end{enumerate}
The vertices $\alpha$ (recall $\alpha \in \{a,b\}$)
  in the copies $D^1$ and $D^2$ are called the \emph{attachment points}
of the OR-gadget; let us denote them by $\alpha^1$ and $\alpha^2$, respectively.
For any graph $G$ and coloring $\col: V(G) \to V(H)$,
and for any two vertices $u,v \in V(G)$ of color $\alpha$, by \emph{attaching an OR-gadget}
to $u$ and $v$ we mean the following operation: we create a new copy of the $\alpha$-OR-gadget,
 and identify the attachment vertices $\alpha^1$ and $\alpha^2$ with $u$ and $v$, respectively.
The following claim summarizes the properties of an $\alpha$-OR-gadget.
\begin{myclaim}\label{cl:OR-gadget}
Let $(G',\col')$ be a colored graph created by attaching an $\alpha$-OR-gadget
to $u$ and $v$ in a colored graph $(G,\col)$. Let $\Gamma$ be the vertex set of the gadget
(including $u$ and $v$).
Then
\begin{enumerate}
\item any set $X \subseteq V(G')$ that hits all $\col'$-$H[D]$-subgraphs of $G'$
  needs to contain at least two vertices of $\Gamma$, including at least one vertex
  of $\Gamma \setminus \{u,v\}$;
\item there exist sets $X^u, X^v \subseteq \Gamma$, each of size $2$, such that $u \in X^u$,
  $v \in X^v$, and both these sets hit all $\col'$-$H[D]$-subgraphs of $G'$
  that contain at least one vertex of $\Gamma \setminus \{u,v\}$.
\end{enumerate}
\end{myclaim}
\begin{proof}
For the first claim, observe that $X$ needs to contain a vertex $x \in D^\circ \subseteq \Gamma \setminus \{u,v\}$.
If $x$ is not equal to $c$ in the copy $D^\circ$, then $X$ needs to additionally contain a vertex
of $D^1$. Symmetrically, if $x$ is not equal to $d$ in the copy $D^\circ$, then
$X$ needs to additionally contain a vertex of $D^2$.

For the second claim, let $X^u$ consist of $u$ and the vertex $d$ in the copy $D^\circ$,
and let $X^v$ consist of $v$ and the vertex $c$ in the copy $D^\circ$.
Let $\pi$ be any $\col'$-$H[D]$-subgraph of $G'$ such that $\pi(D)$ contains
a vertex of $\Gamma \setminus \{u,v\}$.
We argue that $\pi$ is hit by $X^u$; the argumentation for $X^v$ is symmetrical.
Since neither $a$ nor $b$ is a cutvertex of $H$, and both
$u$ and $v$ have the same color in $\col$, we have $\pi(D) \subseteq \Gamma$.
If $\pi(d) \in D^\circ$ then we are done, so $\pi(d) \in D^1$. 
Since $H[D]$ is connected, $\pi(c) \in D^1$. Moreover, we now have two options for $\pi(a)$: either $u$, or the vertex $a$ in the copy $D^\circ$. If it was not that $\pi(a)=u$, then $c$ would be a cut-vertex in $H$ that would separate $a$ from $b$. However, since $S$ contains also $d$ which is different than $c$, this would be a contradiction with $S$ being minimal. 
Consequently, $\pi(a)=u$ and $X^u$ hits $\pi$.
\cqed\end{proof}

The second gadget, an $\alpha$-$r$-cycle for $\alpha \in \{a,b\}$ and integer $r \geq 2$,
is constructed as follows.
We first take $r$ vertices $\alpha^1,\alpha^2,\ldots,\alpha^r$, each colored $\alpha$.
Then, we attach an $\alpha$-OR-gadget to the pair $\alpha^i,\alpha^{i+1}$ for every $1 \leq i \leq r$
(with the convention $\alpha^{r+1} = \alpha^1$).
For any graph $G$ and coloring $\col: V(G) \to V(H)$,
and for any sequence of pairwise distinct vertices $u^1,u^2,\ldots,u^r$, each colored $\alpha$,
by \emph{attaching an $\alpha$-$r$-cycle} to $u^1,u^2,\ldots,u^r$ we mean the following
  operation: we create a new copy of the $\alpha$-$r$-cycle and identify $u^i$
  with $\alpha^i$ for every $1 \leq i \leq r$.
The following claim summarizes the properties of an $\alpha$-$r$-cycle.
\begin{myclaim}\label{cl:cycle}
Let $(G',\col')$ be a colored graph created by attaching a $\alpha$-$r$-cycle
to $u^1,u^2,\ldots,u^r$ in a colored graph $(G,\col)$.
Let $\Gamma$ be the vertex set of the gadget (including all vertices $u^i$).
Then
\begin{enumerate}
\item any set $X \subseteq V(G')$ that hits all $\col'$-$H[D]$-subgraphs of $G'$
  needs to contain at least $r + \lceil r/2 \rceil$ vertices of $\Gamma$,
        including at least $r$ vertices of $\Gamma \setminus \{u^i: 1 \leq i \leq r\}$;
\item for every set $I \subseteq \{1,2,\ldots,r\}$ that contains either $i$ or $i+1$
  for every $1 \leq i < r$, and contains either $1$ or $r$, 
  there exist sets $X^I \subseteq \Gamma$ of size $|I|+r$ such that $u^i \in X^I$ whenever $i \in I$,
  and $X^I$ hits all $\col'$-$H[D]$-subgraphs of $G'$
  that contain at least one vertex of $\Gamma \setminus \{u^i: 1 \leq i \leq r\}$.
\end{enumerate}
\end{myclaim}
\begin{proof}
For the first claim, apply the first part of Claim~\ref{cl:OR-gadget}
to each introduced $\alpha$-OR-gadget:
$X$ needs to contain at least one vertex that is not an attachment vertex
in each $\alpha$-OR-gadget between $u^i$ and $u^{i+1}$ ($r$ vertices in total)
and, moreover, at least two vertices in each $\alpha$-OR-gadget.
For the second claim, construct $X^I$ as follows: first take all vertices $u^i$ for $i \in I$
and then, for each $1 \leq i \leq r$, insert into $X^I$ the set $X^{u^i}$
from Claim~\ref{cl:OR-gadget}, if $i \in I$, and the set $X^{u^{i+1}}$ otherwise.
Observe that, in the second step, each index $i$ gives rise to exactly one new vertex
of $X^I$, and hence $|X^I| = |I| + r$. Moreover, the required hitting property of $X^I$
follows directly from Claim~\ref{cl:OR-gadget}.
\cqed\end{proof}

Armed with the aforementioned gadgets,
we now proceed to the construction of the instance $(G,\col)$.
Let $s$ be the smallest positive integer such that $s^\msep \geq 3n$. Observe that
$s = \Oh(n^{1/\msep})$. We start our construction by introducing
a set $M$ of $s\msep$ vertices $w_{i,c}$, $1 \leq i \leq s$, $c \in S$.
We define $\col(w_{i,c}) = c$.
The set $M$ is the central part of the constructed graph $G$.
In particular, 
in our reduction each connected component of $G \setminus M$ will be of constant size,
yielding immediately the promised tree decomposition.

To each clause $C$ of $\Phi$, and to each literal $l$ in $C$, assign a function
$f_{C,l}: S \to \{1,2,\ldots,s\}$ such that $f_{C,l} \neq f_{C',l'}$ for $(C,l) \neq (C',l')$.
Observe that this is possible due to the assumption $s^\msep \geq 3n$ and the fact
that $\Phi$ is clean.

For each variable $x$ of $\Phi$, proceed as follows.
First, for each clause $C$ and literal $l \in \{x, \neg x\}$,
we introduce a copy $D_{x,C,l}$ of $H[N[A]]$ and identify
every vertex $c \in S$ in the copy $D_{x,C,l}$ with the vertex $w_{f_{C,l}(c),c}$.
Let $a_{x,C,l}$ be the vertex $a$ in the copy $D_{x,C,l}$.
Second, introduce a new dummy vertex $a_x$, colored $a$.
Finally, attach an $a$-$4$-cycle to vertices $a_{x,C_1,l}, a_{x,C_2,\neg l}, a_{x,C_3,l}, a_x$;
recall that $x$ appears exactly three times in $\Phi$, twice positively and once
negatively or twice negatively and once positively.

For each clause $C$ of $\Phi$, proceed as follows. 
First, for each literal $l$ in $C$ introduce a copy $D_{C,l}$ of $H[N[B]]$ and identify
every vertex $c \in S$ in the copy $D_{C,l}$ with the vertex $w_{f_{C,l}(c),c}$.
Let $b_{C,l}$ be the vertex $b$ in the copy $D_{C,l}$.
Second,
  we attach a $b$-$r_C$-cycle on the vertices $b_{C,l}$, where $2 \leq r_C \leq 3$ is the number
of literals in $C$.

We define $k = 12n-m$, where $n$ is the number of variables in $\Phi$ and $m$ is the number
of clauses.

Finally, perform the following two operations.
First, for each connected component $L$ of $H[D] \setminus S$ that is different than $A$ and $B$,
and for each function $f: N(L) \to \{1,2,\ldots,s\}$, create a copy $D_{L,f}$ of $H[N[L]]$ and
identify each vertex $c \in N(L) \subseteq S$ with the vertex $w_{f(c),c}$.
Let $(G',\col')$ be the colored graph constructed so far.
Second, introduce a large number (at least $k+1$) of disjoint copies of $H \setminus D$
into the graph $G$. This concludes the construction of the \cshit{} instance $(G,\col)$.

Observe that each connected component of $G \setminus M$ is of constant size.
Thus, it is straightforward to provide a tree decomposition of $G$ of width
$s\msep + \Oh(1)=  \Oh(n^{1/\msep})$.
Hence, it remains to argue about the correctness of the construction.

In one direction, let $\phi$ be a satisfying assignment of $\Phi$.
Define a set $X \subseteq V(G)$ as follows.
\begin{enumerate}
\item For each variable $x$, include into $X$
  the set $X^I$ from Claim~\ref{cl:cycle} for the $a$-$4$-cycle created for the variable $x$
  that contains the vertices $a_{x,C,l}$ for clauses $C$ where $l$ is evaluated
  to true by $\phi$, 
  and the vertex $a_x$ if there is only one such clause.
  Note that the construction of the attachment points of the $a$-$4$-cycle created for $x$
  ensures that $X^I$ contains exactly two non-consecutive attachment points, and hence
  $|X^I| = 6$.
\item For each clause $C$, pick one literal $l$ that is satisfied by $\phi$
in $C$, and include into $X$ the set $X^I$ from Claim~\ref{cl:cycle}
for the $b$-$r_C$-cycle created for $C$ that contains $b_{C,l'}$ for all $l' \neq l$.
  Observe that we include exactly $2r^C-1$ vertices to $X$ for clause $C$.
\end{enumerate}
Since $\Phi$ is clean, we observe that
\begin{equation}\label{eq:lb-col}
|X| = 6n + \sum_{\mathrm{clause\ }C} (2r_C-1) = 6n + 2\cdot 3n - m = 12n-m = k.
\end{equation}
Consider any $\col$-$H$-subgraph $\pi$ of $G$.
If $\pi(V(H))$ contains a vertex of some $\alpha$-$r$-cycle that is not an attachment vertex,
then $\pi$ is hit by $X$ by Claim~\ref{cl:cycle}.
Otherwise, observe that the color constraints imply that $\pi(S) \subseteq M$.
Consequently, there exists $f: S \to \{1,2,\ldots,s\}$ such that
$\pi(c) = w_{f(c),c}$ for each $c \in S$.
The only vertices of $G$ that are both adjacent to $M$ and have colors from $A$
belong to the copies $D_{x,C,l}$. Similarly, the only vertices
of $G$ that are both adjacent to $M$ and have colors from $B$ belong
to the copies $D_{C,l}$. 
As $N_H(A) = N_H(B) = S$ and both $H[A]$ and $H[B]$ are connected, we infer that there exists a clause $C$ and literal $l \in C$
corresponding to a variable $x$,
such that $f = f_{C,l}$, and $\pi$ maps $A$ to $D_{x,C,l}$ and $B$ to $D_{C,l}$.
However, observe that
if $l$ is satisfied by $\phi$, then $X$ contains the vertex $a_{x,C,l} \in D_{x,C,l}$,
  and otherwise $l$ does not satisfy $C$ and $X$ contains $b_{C,l} \in D_{C,l}$.
Consequently, $\pi$ is hit by $X$.
As the choice of $\pi$ was arbitrary, we infer that $X$ hits all $\col$-$H$-subgraphs
of $G$.

In the other direction, let $X$ be a set of at most $k$ vertices of $G$
that hits all $\col$-$H$-subgraphs.
As $G$ contains at least $k+1$ disjoint copies of $H \setminus D$, we infer that
$X \cap V(G')$ hits all $\col'$-$H[D]$-subraphs of $G'$.
By Claim~\ref{cl:cycle}, $X$ contains at least $6$ vertices of each $a$-$4$-cycle introduced
for every variable $x$, and at least $2r_C-1$ vertices of each $b$-$r_C$-cycle
introduced for every clause $C$ (because $2r_c-1=r_c+\lceil r_c/2\rceil$ for $2\leq r_c\leq 3$). However, as these gadgets
are vertex disjoint, by similar calculations as in~\eqref{eq:lb-col} we infer that
these numbers are tight: $X$ contains \emph{exactly} $6$ vertices in each $a$-$4$-cycle,
\emph{exactly} $2r_C-1$ vertices in each $b$-$r_C$-cycle
  and no more vertices of $G$.
In particular, for each variable $x$, $X$ contains either all vertices
$a_{x,C,x}$ for clauses $C$ where $x$ appears positively,
or all vertices $a_{x,C,\neg x}$ for clauses $C$ where $x$ appears negatively.
Define an assingment $\phi$ as follows: for every variable $x$,
we set $\phi(x)$ to true if $X$ contains all vertices $a_{x,C,x}$,
and to false otherwise. We claim that $\phi$ satisfies $\Phi$.

To this end, consider a clause $C$. As $X$ contains only $2r_C-1$ vertices in the $b$-$r_C$-cycle
constructed for $C$, by Claim~\ref{cl:cycle} there exists a literal $l \in C$ such that
$b_{C,l} \notin X$. Let $x$ be the variable of $l$.
Let us construct a $\col$-$H$-subgraph $\pi$ of $G$ as follows:
\begin{enumerate}
\item $\pi|_S = f_{C,l}$;
\item $\pi|_{N_H[A]}$ maps $N_H[A]$ to $D_{x,C,l}$;
\item $\pi|_{N_H[B]}$ maps $N_H[B]$ to $D_{C,l}$;
\item for every component $L$ of $H[D] \setminus S$ that is not equal to $A$ nor $B$,
  $\pi|_{N_H[L]}$ maps $N_H[L]$ to $D_{L,f_{C,l}|_{N_H(L)}}$;
\item $\pi|_{V(H) \setminus D}$ maps $H \setminus D$ to any its copy in $G$.
\end{enumerate}
It is straightforward to verify that $\pi$ is a $\col$-$H$-subgraph of $G$.
Moreover, $\pi(V(H))$ contains only two vertices of the introduced $\alpha$-$r$-cycles:
$a_{x,C,l}$ and $b_{C,l}$. 
Since $X$ cannot contain any vertex outside these $\alpha$-$r$-cycles,
we infer that $a_{x,C,l} \in X$.
Consequently, $\phi$ sets $l$ to true, and thus satisfies $C$.
This finishes the proof of the correctness of the reduction,
and concludes the proof of the theorem.
\maybeqed\end{proof}

\section{Conclusions and open problems}\label{sec:conc}

Our preliminary study of the treewidth parameterization of the \shit{}
problem revealed that its parameterized complexity is highly involved. Whereas for the more graspable colored version we obtained essentially tight bounds, a large gap between lower and upper bounds remains for the standard version.
In particular, the following two questions arise:
Can we improve the running time of Theorem~\ref{thm:std:algo} to 
factor $t^{\msep(H)}$ in the exponent?
Is there any relatively general symmetry-breaking assumption on $H$
that would allow us to show a $2^{o(t^{\msep(H)})}$ lower bound
in the absence of colors?

In a broader view, let us remark that 
the complexity of the treewidth parameterization of \emph{minor-hitting} problems
is also currently highly unclear. Here, for a minor-closed graph class $\mathcal{G}$
and input graph $G$, we seek for the minimum size of a set $X \subseteq V(G)$ such that
$G \setminus X \in \mathcal{G}$, or, equivalently, $X$ hits all
minimal forbidden minors of $\mathcal{G}$.
A straightforward dynamic programming algorithm has double-exponential dependency on the
width of the decomposition.
However, it was recently shown that $\mathcal{G}$ being the class
of planar graphs, a $2^{\Oh(t \log t)} |V(G)|$-time algorithm exists~\cite{planarization}.
Can this result be generalized to more graph classes?

\bibliographystyle{plain} 
\bibliography{references}

\begin{thebibliography}{10}

\bibitem{ArnborgLS91}
Stefan Arnborg, Jens Lagergren, and Detlef Seese.
\newblock Easy problems for tree-decomposable graphs.
\newblock {\em J. Algorithms}, 12(2):308--340, 1991.

\bibitem{max-deg-vd}
Nadja Betzler, Robert Bredereck, Rolf Niedermeier, and Johannes Uhlmann.
\newblock On bounded-degree vertex deletion parameterized by treewidth.
\newblock {\em Discrete Applied Mathematics}, 160(1-2):53--60, 2012.

\bibitem{cut-and-count-derand1}
Hans~L. Bodlaender, Marek Cygan, Stefan Kratsch, and Jesper Nederlof.
\newblock Deterministic single exponential time algorithms for connectivity
  problems parameterized by treewidth.
\newblock In {\em ICALP (1)}, pages 196--207, 2013.

\bibitem{tw-apx}
Hans~L. Bodlaender, P{\aa}l~Gr{\o}n{\aa}s Drange, Markus~S. Dregi, Fedor~V.
  Fomin, Daniel Lokshtanov, and Michal Pilipczuk.
\newblock An $\mathcal{O}(c^k n)$ 5-approximation algorithm for treewidth.
\newblock In {\em FOCS}, pages 499--508, 2013.

\bibitem{vc-subexp}
Liming Cai and David~W. Juedes.
\newblock On the existence of subexponential parameterized algorithms.
\newblock {\em J. Comput. Syst. Sci.}, 67(4):789--807, 2003.

\bibitem{courcelle}
Bruno Courcelle.
\newblock The monadic second-order logic of graphs {I}: {R}ecognizable sets of
  finite graphs.
\newblock {\em Inf. Comput.}, 85:12--75, 1990.

\bibitem{cut-and-count}
Marek Cygan, Jesper Nederlof, Marcin Pilipczuk, Michal Pilipczuk, Johan M.~M.
  van Rooij, and Jakub~Onufry Wojtaszczyk.
\newblock Solving connectivity problems parameterized by treewidth in single
  exponential time.
\newblock In {\em FOCS}, pages 150--159, 2011.

\bibitem{colors-and-ids}
Michael Dom, Daniel Lokshtanov, and Saket Saurabh.
\newblock Incompressibility through colors and ids.
\newblock In {\em ICALP (1)}, pages 378--389, 2009.

\bibitem{FominLS14}
Fedor~V. Fomin, Daniel Lokshtanov, and Saket Saurabh.
\newblock Efficient computation of representative sets with applications in
  parameterized and exact algorithms.
\newblock In {\em SODA}, pages 142--151, 2014.

\bibitem{IP01}
Russell Impagliazzo and Ramamohan Paturi.
\newblock On the complexity of k-{SAT}.
\newblock {\em J. Comput. Syst. Sci.}, 62(2):367--375, 2001.

\bibitem{IPZ01}
Russell Impagliazzo, Ramamohan Paturi, and Francis Zane.
\newblock Which problems have strongly exponential complexity?
\newblock {\em J. Comput. Syst. Sci.}, 63(4):512--530, 2001.

\bibitem{planarization}
Bart M.~P. Jansen, Daniel Lokshtanov, and Saket Saurabh.
\newblock A near-optimal planarization algorithm.
\newblock In {\em SODA}, pages 1802--1811, 2014.

\bibitem{nice-decomp}
Ton Kloks.
\newblock {\em Treewidth, Computations and Approximations}, volume 842 of {\em
  Lecture Notes in Computer Science}.
\newblock Springer, 1994.

\bibitem{lms:known}
Daniel Lokshtanov, D{\'a}niel Marx, and Saket Saurabh.
\newblock Known algorithms on graphs on bounded treewidth are probably optimal.
\newblock In {\em SODA}, pages 777--789, 2011.

\bibitem{lms:survey}
Daniel Lokshtanov, D{\'a}niel Marx, and Saket Saurabh.
\newblock Lower bounds based on the exponential time hypothesis.
\newblock {\em Bulletin of the EATCS}, 105:41--72, 2011.

\bibitem{lms:slightly}
Daniel Lokshtanov, D{\'a}niel Marx, and Saket Saurabh.
\newblock Slightly superexponential parameterized problems.
\newblock In {\em SODA}, pages 760--776, 2011.

\bibitem{cut-and-count-logic}
Michal Pilipczuk.
\newblock Problems parameterized by treewidth tractable in single exponential
  time: A logical approach.
\newblock In {\em MFCS}, pages 520--531, 2011.

\end{thebibliography}

\end{document}